\colorlet{punct}{red!60!black}
\definecolor{background}{HTML}{EEEEEE}
\definecolor{delim}{RGB}{20,105,176}
\colorlet{numb}{magenta!60!black}
\newtheorem{remark}{Remark}
\begin{document}
\title{Quantitative Verification of Neural Networks\\ 
And Its Security Applications}
\begin{abstract}

Neural networks are increasingly employed in safety-critical domains. This has
  prompted interest in verifying or certifying logically encoded properties of
  neural networks. 
Prior work has largely focused on checking existential properties, wherein the
  goal is to check whether there exists any input that violates a given property
  of interest.
However, neural network training is a stochastic process, and many questions
  arising in their analysis require probabilistic and quantitative reasoning,
  i.e., estimating how many inputs satisfy a given property. To this end, our
  paper proposes a novel and principled framework to quantitative verification
  of logical properties specified over neural networks.
Our framework is the first to provide {\em PAC}-style soundness guarantees, in
  that its quantitative estimates are within a controllable and bounded error
  from the \newtext{true} count.
We instantiate our algorithmic framework by building a prototype tool called
  \tool that enables checking rich properties over \newtext{binarized} neural networks. We
  show how emerging security analyses can utilize our framework in \newtext{$3$} concrete
  point applications: \newtext{quantifying} robustness to adversarial inputs,
  efficacy of trojan attacks, and fairness/bias of given neural networks.

\end{abstract}

\author{Teodora Baluta}
\email{teobaluta@comp.nus.edu.sg}
\affiliation{National University of Singapore}

\author{Shiqi Shen}
\email{shiqi04@comp.nus.edu.sg}
\affiliation{National University of Singapore}

\author{Shweta Shinde}
\email{shwetas@eecs.berkeley.edu}
\affiliation{University of California, Berkeley}
\authornote{Part of the research done while working at National University of Singapore.}

\author{Kuldeep S. Meel}
\email{meel@comp.nus.edu.sg}
\affiliation{National University of Singapore}

\author{Prateek Saxena}
\email{prateeks@comp.nus.edu.sg}
\affiliation{National University of Singapore}

\maketitle              %
\section{Introduction}
\label{sec:intro}

Neural networks are witnessing wide-scale adoption, including in domains with
the potential for a long-term impact on human society. Examples of these domains
include criminal sentencing~\cite{compas}, drug
discovery~\cite{wallach2015atomnet,verbist2015using}, self-driving
cars~\cite{bojarski2016end}, aircraft collision avoidance
systems~\cite{julian2016policy}, robots~\cite{bhattacharyya2015certification},
and drones~\cite{giusti2016machine}. While neural \newtext{networks} achieve
human-level accuracy in several challenging tasks such as image
recognition~\cite{krizhevsky2012imagenet,szegedy2015going,he2016deep} and
machine
translation~\cite{lecun2015deep,sutskever2014sequence,bahdanau2014neural},
studies show that these systems may behave erratically in the
wild~\cite{fredrikson2014privacy,fredrikson2015model,papernot2016limitations,
papernot2016transferability,papernot2015distillation,evtimov2017robust,
uesato2018adversarial,athalye2018obfuscated,tramer2017ensemble,shokri2017membership,
biggio2012poisoning,carlini2016hidden,carlini2018secret}.

Consequently, there has been \newtext{a surge} of interest in the design of
methodological approaches to verification and testing of neural
networks. Early efforts focused on {\em qualitative} verification
\newtext{wherein,} given a neural network $N$ and property $P$, one is concerned
with determining whether there exists an input $I$ to $N$ such that
$P$ is violated~\cite{simonyan2013deep,
  sundararajan2017axiomatic,koh2017understanding,datta2016algorithmic,
  pei2017deepxplore,pulina2010abstraction,ehlers2017formal,narodytska2017verifying,
  katz2017reluplex,huang2017safety,dvijotham2018dual}. While such
certifiability techniques provide value, for instance in demonstrating
the existence of adversarial
examples~\cite{goodfellow2014explaining,papernot2016limitations}, it
is worth recalling that the designers of neural network-based systems
often make a statistical claim of their behavior, i.e., a given system
is claimed to satisfy properties of interest with high probability but
not always.  Therefore, many analyses of neural networks require {\em
  quantitative} reasoning, which determines how many inputs satisfy
\prop.

It is natural to encode properties as well as conditions on inputs or
outputs as logical formulae.  We focus on the following formulation of
{\em quantitative verification}: Given a set of neural networks
$\mathcal{N}$ and a property of interest $\prop$ defined over the
union of inputs and outputs of neural networks in $\mathcal{N}$, we
are interested in estimation of how often $\prop$ is satisfied. In
many critical domains, client analyses often require guarantees that
the computed estimates be reasonably close to the ground truth.  We
are not aware of any prior approaches that provide such formal
guarantees, though the need for quantitative verification has recently
been recognized~\cite{webb2018statistical,seshia2018}.

\paragraph{Security Applications.}
Quantitative verification enables many applications in security
analysis (and beyond) for neural networks. We present $3$ point
applications in which the following analysis questions can be
quantitatively answered:
\begin{itemize}

  \item {\em Robustness}: \newtext{How many adversarial samples does a given neural network
  have? Does one neural network have more adversarial inputs compared to
    another one?}

  \item {\em Trojan Attacks}: \newtext{A neural network can be trained
    to classify certain inputs with ``trojan trigger'' patterns to \newtext{the
    desired label}. How well-poisoned is a trojaned model, i.e., how
    many such trojan inputs does the attack successfully work for?}

\item{\em Fairness}: Does a neural network change its predictions significantly
  when certain input features are present (\newtext{e.g.,} when the input record has gender
    attribute set to ``female'' vs. ``male'')?
\end{itemize}

Note that such analysis questions boil down to estimating how often
some property over inputs and outputs is satisfied.  Estimating
counts is fundamentally different from checking whether a satisfiable
input exists. Since neural networks are stochastically trained, the
mere existence of certain satisfiable inputs is not unexpected. The
questions above checks whether their counts are sufficiently large to
draw statistically significant
inferences. Section~\ref{sec:applications} formulates these analysis
questions as logical specifications.

\paragraph{Our Approach.}
The primary contribution of this paper is a new analysis framework, which models
the given set of neural networks $\mathcal{N}$ and $\prop$ as set of logical
constraints, $\varphi$, such that the problem of quantifying how often
$\mathcal{N}$ satisfies $\prop$ reduces to model counting over $\varphi$. We
then show that the quantitative verification is $\#P$-hard. Given the
computational intractability of $\#P$, we  seek to compute rigorous estimates
and introduce the notion of {\em approximate quantitative verification}: given a
prescribed tolerance factor $\varepsilon$ and confidence parameter $\delta$, we
estimate how often $\prop$ is satisfied with \newtext{PAC-style} guarantees, i.e.,
computed result is within a \newtext{multiplicative} $(1+\varepsilon)$ factor of
the ground truth with confidence at least $1-\delta$.

Our approach works by encoding the neural network into a logical
formula in CNF form. The key to achieving soundness guarantees is
our new notion of \newequi, which defines a principled way of encoding
neural networks into a CNF formula $F$, such that quantitative
verification reduces to counting \newtext{the satisfying assignments} of $F$
projected to a subset of the support of $F$. We then use approximate
model counting on $F$, which has seen rapid advancement in
practical tools that provide PAC-style guarantees on counts for $F$.
The end result is a {\em quantitative verification procedure for
  neural networks with soundness and precision guarantees}.

While our framework is more general, we instantiate our analysis
framework with a sub-class of neural networks called binarized neural
networks (or BNNs)~\cite{hubara2016binarized}. BNNs are multi-layered
perceptrons with $\texttt{+/-} 1$ weights and step activation
functions. They have been demonstrated to achieve high accuracy for \newtext{a wide}
variety of applications\newtext{~\cite{rastegari2016xnor,mcdanel2017embedded,kung2018efficient}}.  Due to their
small memory footprint and fast inference time, they have 
been deployed in constrained environments such as embedded
devices~\cite{mcdanel2017embedded,kung2018efficient}.
We observe that specific existing encodings for BNNs 
adhere to our notion of \newequi 
and implement these in a new tool
called \tool\footnote{The name stands for \textbf{N}eural \textbf{P}roperty
\textbf{A}pproximate \textbf{Q}uantifier.  The tool will be released as
open-source post-publication.}.  We provide proofs of key correctness and
composability properties of our general approach, as well as of our specific encodings.  Our
encodings are \newtext{linear} in the size of $\mathcal{N}$ and $\mathcal{P}$.

\paragraph{Empirical Results.}
\newtext{ We show that \tool scales to BNNs with $1-3$ internal layers
  and $50-200$ units per layer. We use $2$ standard datasets namely
  MNIST and UCI Adult Census Income dataset. We encode a total of $84$
  models, each with $6,280-51,410$ parameters, into \totalformulae formulae and quantitatively
  verify them. \tool encodes properties in less
  than a minute and solves $97.1$\% formulae in a $24$-hour timeout.
  Encodings scale linearly in the size of the models, and its running
  time is not dependent on the true counts.
We showcase how \tool can be used in diverse security applications with case
studies. First, we quantify the model robustness by measuring how many
adversarially perturbed inputs are misclassified, and then the effectiveness of
$2$ defenses for model hardening with adversarial training. Next, we evaluate
the effectiveness of trojan attacks outside the chosen test set. Lastly, we
measure the influence of $3$ sensitive features on the output and check if the
model is biased towards a particular value of the sensitive feature.  }

\paragraph{Contributions.}
We make the following contributions:
\begin{itemize}
	\item {\em New Notion.} 
	We introduce the notion of {\em approximate quantitative verification} to
	estimate how often a property $P$ is satisfied by the neural net $N$ with
	theoretically rigorous PAC-style guarantees.
	\item {\em Algorithmic Approach, Tool, \& Security Applications.} 
  We propose a principled algorithmic approach for encoding neural networks to
  CNF formula that preserve model counts. We build an end-to-end tool called
  \tool that can handle BNNs. \newtext{We demonstrate security applications of
  \tool in  quantifying robustness, trojan attacks, and fairness. }
	\item {\em Results.} 
  \newtext{
  We evaluate \tool on \totalformulae
  formulae derived from properties over BNNs trained on two datasets.
We show that \tool presently scales to BNNs of over $50,000$ parameters,
and evaluate its performance characteristics with respect to different
    user-chosen parameters.
  } 
\end{itemize}

\section{Problem Definition}
\label{sec:problem}

\begin{definition}[Specification ($\spec$)]
  Let $\nnset = \{f_1, f_2, \ldots , f_m\}$ be a set of $m$ neural nets, where
  each neural net $f_i$ takes a vector of inputs $\vec{x_i}$ and outputs a
  vector $\vec{y_i}$, such that $\vec{y_i} = f_i(\vec{x_i})$. Let
  $\prop:\{\vec{x} \cup \vec{y}\} \rightarrow \{0, 1\}$ denote the property
  $\prop$ over the inputs  $\vec{x} =  \bigcup\limits_{i=1}^{m}\vec{x_i}$  and
  outputs $\vec{y} =\bigcup\limits_{i=1}^{m}\vec{y_i}$. We define the
  specification of property $\prop$ over $\nnset$ as $\spec(\vec{x}, \vec{y})=
  (\bigwedge\limits_{i=1}^{m} (\vec{y_i} = f_i(\vec{x_i})) \land \prop(\vec{x},
  \vec{y}))$.
\end{definition}

We show several motivating property specifications in
Section~\ref{sec:applications}. For the sake of illustration here,
consider $\nnset= \{f_1, f_2\}$ be a set of two neural networks that
take as input a vector of three integers and output a $0/1$, i.e.,
$f_1 : \mathbb{Z}^3 \rightarrow \{0, 1\}$ and $f_2 : \mathbb{Z}^3
\rightarrow \{0, 1\}$. We want to encode a property to check the
dis-similarity between $f_1$ and $f_2$, i.e., counting for how many
inputs (over all possible inputs) do $f_1$ and $f_2$ produce differing
outputs. The specification is defined over the inputs $\vec{x}=[x_1,
  x_2, x_3]$, outputs $y_1 = f_1(\vec{x})$ and $y_2 = f_2(\vec{x})$ as
$\spec(x_1, x_2, x_3, y_1, y_2) = (f_1(\vec{x}) = y_1 \land
f_2(\vec{x}) = y_2 \land y_1 \neq y_2)$.

Given a specification $\spec$ for a
property $\prop$ over the set of neural nets $\nnset$, a verification
procedure returns $r= 1$ (SAT)  if there exists a satisfying assignment
$\tau$ such that $\tau \models \spec$, otherwise it returns $r = 0$ (UNSAT).
A satisfying assignment for $\spec$ is defined as $\tau : \{\vec{x} \cup
\vec{y}\} \rightarrow \{0, 1\}$ such that $\spec$ evaluates to true, i.e.,
$\spec(\tau)=1$ or $\tau \models \spec$.

While the problem of standard (qualitative) verification asks whether there exists a satisfying
assignment to $\spec$, the problem of quantitative verification asks how many
satisfying assignments or models does $\spec$ admit. We denote the set of
satisfying assignments for the specification $\spec$ as $\setsat{\spec}
=\{\tau:\tau \models \spec\}$.

\begin{definition}[Neural Quantitative Verification ($\nqv$)]
  Given a specification $\spec$ for a property $\prop$ over the set of neural
  nets $\nnset$, a quantitative verification procedure, $\nqv(\spec)$, returns
  the number of satisfying assignments of $\spec$, $r = |\setsat{\spec}|$.
\end{definition}

It is worth noting that $|\setsat{\spec}|$ may be intractably large to
compute via na\"{\i}ve enumeration. For instance, we consider neural
networks with hundreds of bits as inputs for which the unconditioned
input space is $2^{|\vec{x}|}$.  In fact, we prove that quantitative
verification is \#P-hard, as stated
below.

\begin{theorem}
  $\nqv(\spec)$ is \#P-hard, where $\spec$ is a specification for a property
  \prop over binarized neural nets.
\end{theorem}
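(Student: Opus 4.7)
The plan is to reduce from \#SAT (counting satisfying assignments of a CNF formula), which is the canonical \#P-complete problem. Given an arbitrary CNF instance $\phi$ over $n$ Boolean variables $x_1, \ldots, x_n$ with $m$ clauses, I will construct, in time polynomial in $|\phi|$, a BNN $f_\phi$ together with a property $\prop$ such that the satisfying assignments of the induced specification $\spec$ are in bijection with the satisfying assignments of $\phi$. Concretely, I will take $\prop(\vec{x}, y) \equiv (y = 1)$ and $\spec(\vec{x}, y) = (y = f_\phi(\vec{x}) \wedge y = 1)$, so that $|\setsat{\spec}| = |\{\vec{x} : \phi(\vec{x}) = 1\}|$. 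Any polynomial-time algorithm for $\nqv$ would then decide \#SAT, yielding \#P-hardness.

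The crux is constructing $f_\phi$ as a BNN with $\pm 1$ weights and step activations. I will use a two-layer architecture. Encoding Boolean values in $\{-1, +1\}$, each clause $C_j = \ell_{j,1} \vee \cdots \vee \ell_{j,k_j}$ is simulated by a single hidden neuron with weight $+1$ on positive literals, $-1$ on negated literals, $0$ on absent variables, and a bias chosen so that the neuron outputs $+1$ exactly when at least one literal evaluates to true. Concretely, with the $\pm 1$ encoding the weighted sum equals $k_j$ iff every literal is true and is at most $k_j - 2$ otherwise, while it equals $-k_j$ iff every literal is false and is at least $-k_j + 2$ otherwise, so a threshold of $-k_j + 2$ correctly implements the disjunction. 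The single output neuron then implements the conjunction of the $m$ clause outputs via a threshold of $m$ (or equivalently $m-1$) applied to the unit-weighted sum. This gives a BNN whose size is linear in $|\phi|$ and whose output equals $1$ iff the corresponding input assignment satisfies $\phi$.

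The main obstacle, and the part requiring the most care, is verifying that the specific encoding respects whatever precise activation and weight conventions the paper fixes for ``binarized neural nets'' (e.g., strict sign vs. $\ge 0$ thresholding, and whether biases are allowed or must be simulated by an auxiliary always-$+1$ input feature). If biases are not part of the formalism, I can absorb the required offsets by adding a constant-valued dummy input, padding the input vector by at most $O(1)$ bits. A secondary but routine obstacle is handling the bijection between specification assignments and formula assignments: since $y$ is forced to $1$ by $\prop$ and is functionally determined by $\vec{x}$ via $f_\phi$, the projection $(\vec{x}, 1) \mapsto \vec{x}$ is a bijection onto $\{\vec{x} : \phi(\vec{x}) = 1\}$, so $|\setsat{\spec}|$ equals the model count of $\phi$ exactly.

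Putting the pieces together, the reduction is computable in polynomial time, and an oracle for $\nqv$ on BNN specifications would solve \#SAT, establishing that $\nqv(\spec)$ is \#P-hard. I expect the entire argument to fit in under a page, with the bulk devoted to the explicit construction of the clause-layer and output-layer thresholds and a short verification that the input-output behavior of $f_\phi$ matches $\phi$ on all $2^n$ assignments.
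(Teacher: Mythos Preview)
Your proposal is correct and follows essentially the same route as the paper: a parsimonious reduction from \#CNF that builds a two-layer BNN with one perceptron per clause realizing an OR gadget and a final perceptron realizing the AND, together with the property $\prop \equiv (y=1)$. The only point to tighten is your use of weight $0$ on absent variables, since the paper's BNN formalism fixes weights in $\{-1,+1\}$; this is easily handled (e.g., pair each absent variable with opposite-sign connections that cancel, or restrict each clause-neuron's fan-in to the literals occurring in that clause as the paper implicitly does), and you already flag exactly this kind of convention-matching as the step requiring care.
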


Our proof is a parsimonious reduction of model counting of CNF
formulas, \#CNF, to quantitative verification of binarized neural
networks. We show how an arbitrary CNF formula $\text{F}$ can be
transformed into a binarized neural net $\bnnfunc$ and a property
\prop such that for a specification $\spec$ for $\prop$ over $\nnset =
\{\bnnfunc\}$, it holds true that $\setsat{\text{F}}=\setsat{\spec}$.
See Appendix~\ref{sec:hardness} for the full proof.

\begin{remark}\label{rem:np-rp}
	The parsimonious reduction from \#CNF to NQV implies that fully 
  polynomial time randomized approximation schemes, including those based 
  on Monte Carlo, cannot exist unless NP=RP.
\end{remark}

The computational intractability of \#P necessitates a search for
relaxations of $\nqv$. To this end, we introduce the notion of an
approximate quantitative verifier that outputs an approximate count
within $\epsilon$ of the true count with a probability greater than $1
- \delta$.

\begin{definition}[Approximate $\nqv$ ($\anqv$)]
  Given a specification $\spec$ for a property $\prop$ over a set of neural nets
  $\nnset$, $0 < \epsilon \leq 1$ and $0 < \delta \leq 1$, an approximate
  quantitative verification procedure $\anqv(\spec, \epsilon, \delta)$ computes
  $r$ such that $Pr[(1+\epsilon)^{-1}|\setsat{\spec}| \leq r \leq
  (1+\epsilon)|\setsat{\spec}|] \geq 1-\delta$.
\end{definition}

The security analyst can set the ``confidence'' parameter $\delta$ and
the precision or \newtext{``error tolerance''} $\epsilon$ as
desired. The $\anqv$ definition specifies the end guarantee of
producing estimates that are \newtext{statistically} sound with respect to chosen
parameters $(\epsilon,\delta)$.

\paragraph{Connection to computing probabilities.} 
Readers can naturally interpret $|\setsat{\spec}|$ as a measure of
probability.  Consider $\mathcal{N}$ to be a set of functions defined
over input random variables $\vec{x}$. The property specification
$\spec$ defines an event that conditions inputs and outputs to certain
values, which the user can specify as desired. The measure
$|\setsat{\spec}|$ counts how often the event occurs under all
possible values of $\vec{x}$. Therefore,
$\frac{|\setsat{\spec}|}{2^{|\vec{x}|}}$ is the probability of the
event defined by $\spec$ occurring. Our formulation presented here
computes $|\setsat{\spec}|$ weighting all possible values of $\vec{x}$
equally, which implicitly assumes a uniform distribution over all
random variables $\vec{x}$. Our framework can be extended to
\newtext{weighted
counting~\cite{ermon2013embed,ermon2013taming,ermon2013optimization,chakraborty2015weighted}}, assigning different user-defined weights
to different values of $\vec{x}$, which is akin to specifying a
desired probability distributions over $\vec{x}$. However, we consider
this extension as a promising future work.

\section{Security Applications}
\label{sec:applications}

We present three concrete application contexts which highlight how
quantitative verification is useful to diverse security analyses.  The
specific property specifications presented here derived directly from
recent works, highlighting that \tool is broadly applicable to analysis
problems actively being investigated.

\paragraph{Robustness.}
An adversarial example for a neural network is an input which under a small
perturbation is classified
differently~\cite{szegedy2013intriguing,goodfellow2014explaining}.
The lower the number of adversarial examples, the more ``robust'' the neural
network.
Early work on verifying robustness aimed at checking whether adversarial inputs
exist. However, recent works suggest that adversarial inputs are statistically
``not
surprising''\newtext{~\cite{uesato2018adversarial,athalye2018obfuscated,ford2019adversarial}}
as they are a consequence of normal error in statistical
classification\newtext{~\cite{gilmer2018adversarial,gilmer2018motivating,mahloujifar2018curse,dohmatob2018limitations}}.
This highlights the importance of analyzing whether a statistically significant
number \newtext{of} adversarial examples exist, not just whether they exist at
all, under desired input distributions.  Our framework allows the analyst to
specify a logical property of adversarial inputs and quantitatively verify it.
Specifically, \newtext{one} can estimate how many \newtext{inputs} are misclassified by the
net ($f$) and within some small perturbation distance $k$ from a benign sample
($\vec{x_b}$) ~\cite{carlini2017towards,papernot2016limitations,
papernot2016transferability}, by encoding the property~\ref{eq:robust_property} in our framework as:
\begin{equation}
  \tag{P1}
  P_1(\vec{x}, \vec{y}, \vec{x_b}, k) = \sum_{j=1}^{|\vec{x}|} (\vec{x_b}[j]
  \oplus \vec{x}[j]) \leq k \land \vec{y_b} \neq \vec{y}
\label{eq:robust_property}
\end{equation}
As a concrete usage scenario, our evaluation reports on BNNs for image
classification (Section~\ref{sec:robustness}). Even for a small given input (say
$m$ bits), the space of all inputs within a perturbation of $k$ bits is ${m
\choose k}$, which is too large to check for misclassification one-by-one. \tool
does not enumerate and yet can estimate adversarial input counts with PAC-style
guarantees \newtext{(Section~\ref{sec:robustness})}. As we permit larger
perturbation, as expected, the number of adversarial samples monotonically
increase, and \tool can quantitatively measure how much. Further, we show how
one can directly compare robustness estimates for two neural networks. Such
estimates may also be used to measure the efficacy of defenses. Our evaluation
on $2$ adversarial training defenses \newtext{shows} that the \newtext{hardened}
models show lesser robustness than the \newtext{plain (unhardened)} model. Such
analysis can help to quantitatively refute, for instance, claims that BNNs are
intrinsically more robust, as suggested in prior
work~\cite{galloway2017attacking}.

\paragraph{Trojan Attacks.}
Neural networks, such as for facial recognition systems, can be
trained in a way that they output a specific value, when the input has
a certain ``trojan trigger'' embedded in
it~\cite{Trojannn,geigel2013neural}. The trojan trigger can be a fixed
input pattern (\newtext{e.g.,} a sub-image) or some transformation that can be
stamped on to a benign image.
One of the primary goals of the trojan attack is to maximize the
number of trojaned inputs which are classified as the desired target
output, $\vec{l_{attack}}$. \tool can quantify the number of such
inputs for a trojaned network, allowing attackers to optimize \newtext{for} this
metric. To do so, one can encode the set of trojaned inputs as all
those inputs $\vec{x}$ which satisfy the following constraint for a
given neural network $f$, trigger $\vec{t}$, $\vec{l_{attack}}$ and
the (pixel) location of the trigger $M$:
\begin{equation}
  \tag{P2}
  P_2(\vec{x}, \vec{y}, \vec{t}, \vec{l_{attack}}, M) = (\vec{x}[j] = \vec{t}[j])
  \land \vec{y} = \vec{l_{attack}},  j \in M \label{eq:trojan_property}
\end{equation}
\newtext{Section~\ref{sec:trojan} shows an evaluation on BNNs trained on
the MNIST dataset. Our evaluation demonstrates that the attack accuracy on
samples from the test set can differ significantly from the total set of
trojaned inputs specified as in property~\ref{eq:trojan_property}.}

\paragraph{Fairness.}
The right notion of algorithmic fairness is being widely
debated\cite{dwork2012fairness,feldman2015certifying,zafar2015fairness,datta2016algorithmic,hardt2016equality,datta2017use}. Our
framework can help quantitatively evaluate desirable metrics measuring
``bias'' for \newtext{neural networks}. Consider a scenario where a neural
network $f$ is used to predict the recommended salary for a new hire
in a company. Having been trained on public data, one may want to
check whether $f$ makes biased predictions based on certain sensitive
features such as race, gender, or marital status of the new hire. To
verify this, one can count how often $f$ proposes a higher salary for
inputs when they have a particular sensitive feature (say ``gender'')
set to certain values (say ``male''), given all other input features
the same.  Formally, this property can be encoded for given sensitive
features $\vec{x_{s_1}} \in \vec{x_1}$, $\vec{x_{s_2}} \in \vec{x_2}$,
where $\vec{x} = \vec{x_1} \cup \vec{x_2}$, along with
values $\vec{s_1}, \vec{s_2}$, as:
\begin{equation}
  \tag{P3}
\begin{aligned}
  P_3(\vec{x}, \vec{y}, \vec{x_{s_1}}, \vec{x_{s_2}}, \vec{s_1},
  \vec{s_2}) = (\vec{x_{s_1}} = \vec{s_1}) \land (\vec{x_{s_2}} =
  \vec{s_2}) \land \\ ((\vec{x_1} - \vec{x_{s_1}}) = (\vec{x_2} -
  \vec{x_{s_2}})) \land \vec{y_1} = \vec{y_2} \label{eq:fair_eq_property}
\end{aligned}
\end{equation}
Notice the \tool counts over {\em all} possible inputs where the
non-sensitive feature remain equal, but only the sensitive feature
changes, which causes no change in prediction. An unbiased model would
produce a very high count, meaning that for most inputs (or with high
probability), changing just the sensitive feature results in no change
in outputs. A follow-up query one may ask is whether setting the sensitive
feature to a certain input value, keeping all other values the same, increases
(or decreases) the output salary prediction.
This can be \newtext{encoded as property~\ref{eq:fair_neq_property}
(or~\ref{eq:fair_neq2_property}) below.}
\begin{equation}
  \tag{P4}
\begin{aligned}
  P_4(\vec{x}, \vec{y}, \vec{x_{s_1}}, \vec{x_{s_2}}, \vec{s_1},
  \vec{s_2}) = (\vec{x_{s_1}} = \vec{s_1}) \land (\vec{x_{s_2}} = 
  \vec{s_2}) \land \\ ((\vec{x_1} - \vec{x_{s_1}}) = (\vec{x_2} -
  \vec{x_{s_2}})) \land \vec{y_1} = HIGH \land \vec{y_2} = LOW \label{eq:fair_neq_property}
\end{aligned}
  \end{equation}
\newtext{
\begin{equation}
  \tag{P5}
\begin{aligned}
  P_5(\vec{x}, \vec{y}, \vec{x_{s_1}}, \vec{x_{s_2}}, \vec{s_1},
  \vec{s_2}) = (\vec{x_{s_1}} = \vec{s_1}) \land (\vec{x_{s_2}} =
  \vec{s_2}) \land \\ ((\vec{x_1} - \vec{x_{s_1}}) = (\vec{x_2} -
  \vec{x_{s_2}})) \land \vec{y_1} = LOW \land \vec{y_2} = HIGH
  \label{eq:fair_neq2_property}
\end{aligned}
\end{equation}
}
\tool can be used to quantitatively verify such properties, and compare models
before deploying them based on such estimates.
\newtext{Section~\ref{sec:fairness} presents more concrete evaluation details
and interpretation of BNNs trained on the UCI Adult dataset~\cite{uci2019}}.

\section{Approach}
\label{sec:approach}

Recall that exact counting (as defined in $\nqv$) is $\#P$-hard.  Even for
approximate counting, many widely used sampling-based approaches, such as based
on Monte Carlo
methods\newtext{~\cite{grosu2005monte,hastings1970monte,neal1993probabilistic,
jerrum1996markov}}, do {\em not} provide soundness guarantees since existence of
a method that only requires polynomially many samples computable in (randomized)
polynomial time would imply $NP=RP$ (See Remark~\ref{rem:np-rp}).  For sound
estimates, it is well-known that many properties encodable in our framework
require intractably large number of samples---for instance, to check for
distributional similarity of two networks $f_1$ and $f_2$ in the classical
model, a lower bound of $O(\sqrt{2^{\vec{x}}})$ samples are needed to obtain
estimates with \newtext{reasonable} $(\epsilon,\delta)$ guarantees.  However,
approximate counting for boolean CNF formulae has recently become practical.
These advances combine the classical ideas of universal hashing with the
advances in the Boolean satisfiability by invoking SAT solvers for NP queries,
i.e., to obtain satisfiable witnesses for queried CNF formulae. The basic idea
behind these approximate CNF counters is to first employ universal hashing to
randomly partition the set of solutions into {\em roughly small} buckets. Then,
the approximate counter can enumerate a tractably small number of witnesses
satisfying $P$ using a SAT solver within one bucket, which calculates the
``density'' of satisfiable solutions in that bucket. By careful analysis using
concentration bounds, these estimates can be extended to the sum over all
buckets, yielding a provably sound PAC-style guarantee of estimates. Our work
leverages this recent advance in approximate CNF counting to solve the problem
of $\anqv$\newtext{~\cite{soos2019bird}}.

\paragraph{The \Newequi framework.} Our key technical advance is a new
algorithmic framework for reducing $\anqv$ to CNF counting with \newtext{an encoding}
procedure that has provable soundness. The procedure encodes $\nnset$ and
$\prop$ into $\spec$, such that model counting in some way over $\spec$ counts
over $\nnset \wedge \prop$. This is {\em not} straight-forward. For
illustration, consider the case of counting over boolean circuits, rather than
neural networks. To avoid exponential blowup in the encoding, often one resorts
to classical {\em equisatisfiable} encoding~\cite{tseitin1983complexity}, which
preserves satisfiability but introduces new variables in the process.
Equisatisfiability means that the original formula is satisfiable if and only if
the encoded one is too. Observe, however, that this notion of equisatisfiability
is {\em not} sufficient for model counting---the encoded formula may be
equisatisfiable but may have many more satisfiable solutions than the original.

We observe that a stronger notion, which we call {\em \newequi},
provides a principled approach to constructing encodings that preserve
counts. An \newequi encoding, at a high level, ensures that the model
count for an original formula can be computed by performing model
counting {\em projected} over the subset of variables in the resulting
formula. We define this \newequi relation rigorously and prove in
Lemma~\ref{lemma:preserve} that model counting over a constraint is
{\em equivalent} to counting over its \newequible encoding. Further,
we prove in Lemma~\ref{lemma:compose} that the \newequi relation is
{\em closed} under logical conjunction. This means model counting over
conjunction of constraints is equivalent to counting over the
conjunction of their \newequible encodings. \Newequi CNF encodings can
thus be composed with boolean conjunction, while preserving \newequi
in the resulting formulae.

With this key observation, our procedure has two remaining sub-steps. First,
we show \newequible encodings for each neural net and properties over them to
individual \newequi CNF formulae. This implies $\specmodel$, the conjunction
of the \newequi CNF encodings of the conjuncts in $\spec$, preserves the
original model count of $\spec$.
Second, we show how an existing approximate model counter for CNF with
$(\epsilon,\delta)$ guarantees  can be utilized to count over a projected subset
of the variables in $\specmodel$.  This end result, by construction, guarantees
that our final estimate of the model count has bounded error, parameterized by
$\varepsilon$, with confidence at least $1-\delta$. 
\begin{figure*}[th]
\center
\begin{minipage}{0.2\textwidth}
\center
\pgfdeclarelayer{background}
\pgfdeclarelayer{foreground}
\pgfsetlayers{background,main,foreground}

\begin{tikzpicture}[scale=0.3]
  \tikzstyle{n} = [draw=blue!50,line width=0.25mm,shape=circle,minimum size=1.5em, inner
  sep=0pt,fill=blue!20]
  \tikzstyle{vars} = [draw=red,line width=0.25mm,shape=circle,minimum size=1.5em, inner
  sep=0pt,fill=red!20]
  \tikzstyle{vars2} = [draw=red,line width=0.25mm,shape=circle,minimum size=1.5em, inner
  sep=0pt,fill=red!40]
  \begin{dot2tex}[dot,tikz,codeonly,styleonly,mathmode,outputdir=paper-dotfiles/]
    \input{dotfiles/bnn_eq1.dot}
  \end{dot2tex}
\end{tikzpicture}

\end{minipage}
\hfill
\begin{minipage}{0.3\textwidth}
\center
\resizebox{1.05\textwidth}{!}{%
  \begin{tabular}{l|ccc|c}
    \multicolumn{1}{l|}{Cardinality Constraints:} & \textbf{$x_1$} & \textbf{$x_2$} &
    \textbf{$x_3$} & \textbf{$f(\vec{x})$} \\ \hline

    \multirow{5}{*}{\begin{tabular}[c]{@{}l@{}}$x_1+x_2+x_3 \geq 2
      \Leftrightarrow v_1 = 1$\\ $x_1+\overline{x}_2+x_3 \geq 1 \Leftrightarrow
      v_2 = 1$\\ $\overline{x}_1+x_2+x_3 \geq 1 \Leftrightarrow v_3 = 1$\\
      $x_1+x_2+\overline{x}_3 \geq 1 \Leftrightarrow v_4 = 1$\\
      $\overline{x}_1+x_2+\overline{x}_3 \geq 1 \Leftrightarrow v_5 = 1$\\ \\
    $v_1+v_2+v_3+v_4+v_5 \geq 5 \Leftrightarrow y$\end{tabular}}
    & 0 & 0 & 0 & 0 \\
     & 0 & 0 & 1 & 0 \\
      & 0 & 1 & 0 & 0 \\
       & 0 & 1 & 1 & 1 \\
        & 1 & 0 & 0 & 0 \\
        \multirow{3}{*}{} & 1 & 0 & 1 & 1 \\
         & 1 & 1 & 0 & 0 \\
          & 1 & 1 & 1 & 1
  \end{tabular}
}
\end{minipage}
\hfill
\begin{minipage}{0.4\textwidth}
\center
\resizebox{0.85\textwidth}{!}{%
  \begin{tabular}{l|ccc|c|c}
    \multicolumn{1}{l|}{Cardinality Constraints:} & \textbf{$x_1$} & \textbf{$x_2$} &
    \textbf{$x_3$} & \textbf{$f_1(\vec{x})$} & $f_2(\vec{x})$\\ \hline
    \multirow{7}{*}{\begin{tabular}[c]{@{}l@{}}
            $\overline{x}_1+\overline{x}_2+x_3 \geq 1 \Leftrightarrow v_5$\\
      $v_1+v_2+v_3+v_4+ v_5 \geq 5 $ \\\hspace*{3cm} $\Leftrightarrow y$ \\
      $f_1: \textcolor{red}{v_2=v_3=v_4=1}$ \\
      $f_2: $\textcolor{red}{$x_1+x_3 \geq 2 \Leftrightarrow v_1$}\\
      $x_1+\overline{x}_2+x_3 \geq 1 \Leftrightarrow v_2$\\
      $\overline{x}_1+x_2+x_3 \geq 1 \Leftrightarrow v_3$\\
      $x_1+x_2+\overline{x}_3 \geq 1 \Leftrightarrow v_4$\\
   \end{tabular}}
    & 0 & 0 & 0 & 0 & 0\\
     & 0 & 0 & 1 & 0 & 0\\
      & 0 & 1 & 0 & 0 & 0\\
       & 0 & 1 & 1 & 1 & \textcolor{red}{0}\\
        & 1 & 0 & 0 & 0 & 0\\
        \multirow{3}{*}{} & 1 & 0 & 1 & 1 & 1\\
         & 1 & 1 & 0 & 0 & 0 \\
          & 1 & 1 & 1 & 1 & 1
  \end{tabular}
}
\end{minipage}
\caption{Example of encoding different BNNs ($f$, $f_1$, $f_2$) as
a conjunction over a set of cardinality constraints.
An attacker manipulates $f$ with the goal to increase the inputs
with trigger $x_3 =1$ that classify as $y=0$.
Specifically, to obtain $f_1$ the weights of $x_1,x_2,x_3$ in constraints of $f$ for $v_2,v_3,v_4$
  are set to 0 (highlighted with dashed lines, on the left).
  To obtain $f_2$, we set $w_{21}=0$.
The trojan property \prop $\doteq (y=0) \land (x_3=1)$ is satisfied by
one input (left) for $f$, whereas for $f_2$ we find two (right).}
  \label{fig:bnn-to-card}
\end{figure*}

\paragraph{Formalization.} We formalize the above notions using notation
standard for boolean
logic\newtext{~\cite{nieuwenhuis2005decision,ganesh2007decision,boigelot2005effective,kozen1983decision}}.
The projection of an assignment $\sigma$ over a subset of the variables
$\vec{t}$, denoted as $\proj{\sigma}{\vec{t}}$, is an assignment of $\vec{t}$ to
the values taken in $\sigma$ (ignoring variables other than $\vec{t}$ in
$\sigma$).

\begin{definition}\label{def:equiwit}
    We say that a formula $\spec:\vec{t}\rightarrow\{0,1\}$ is
    equi-witnessable to a formula $\specmodel:\vec{u} \rightarrow \{0,
    1\}$ where $\vec{t} \subseteq \vec{u}$, if:
    \begin{enumerate}[label=(\alph*)]
        \item
        $\forall \tau \models \spec \Rightarrow$  
        $\exists \sigma, (\sigma \models \specmodel)$ $\land$ 
        $(\proj{\sigma}{\vec{t}} = {\tau})$,
        and
        \item 
        $\forall \sigma \models \specmodel \Rightarrow$
        $\proj{\sigma}{\vec{t}} \models \spec.$
    \end{enumerate}
\end{definition}
An example of a familiar equi-witnessable encoding is Tseitin~\cite{tseitin1983complexity}, which
transforms arbitrary boolean formulas to CNF.
Our next lemma shows that \newequi preserves model counts.  We define
$\setsat{\specmodel}\downarrow\vec{t}$, the set of satisfying assignments of
$\specmodel$ projected over $\vec{t}$, as $\{\proj{\sigma}{\vec{t}}:\sigma
\models \specmodel\}$.

\begin{lemma}[Count Preservation]
\label{lemma:preserve}
    If $\specmodel$ \isnewequi $\spec$, then  
    $|\setsat{\specmodel}\downarrow\vec{t}| = |\setsat{\spec}|$.
\end{lemma}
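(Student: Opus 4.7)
The plan is to prove set equality $\setsat{\specmodel}\downarrow\vec{t} = \setsat{\spec}$ by a direct double-inclusion argument, from which the cardinality equality follows immediately. Notice that $\setsat{\specmodel}\downarrow\vec{t}$ is defined as a set, not a multiset, so if distinct witnesses $\sigma_1,\sigma_2 \models \specmodel$ happen to agree on $\vec{t}$, their common projection is counted only once; this is exactly what we want.

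First, I would unpack the definition of $\setsat{\specmodel}\downarrow\vec{t}$ and observe that clause (b) of Definition~\ref{def:equiwit} immediately yields the inclusion
\[
\setsat{\specmodel}\downarrow\vec{t} \subseteq \setsat{\spec},
\]
since every element of the left-hand side has the form $\proj{\sigma}{\vec{t}}$ for some $\sigma \models \specmodel$, and clause (b) guarantees such a projection satisfies $\spec$.

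Next, I would use clause (a) to establish the reverse inclusion. Given any $\tau \models \spec$, clause (a) supplies a witness $\sigma \models \specmodel$ with $\proj{\sigma}{\vec{t}} = \tau$, so $\tau \in \setsat{\specmodel}\downarrow\vec{t}$. Hence
\[
\setsat{\spec} \subseteq \setsat{\specmodel}\downarrow\vec{t}.
\]
Combining the two inclusions gives $\setsat{\specmodel}\downarrow\vec{t} = \setsat{\spec}$, and taking cardinalities yields the claim.

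There is no real obstacle here; the lemma is essentially a reformulation of Definition~\ref{def:equiwit} once one recognizes that projection produces a \emph{set}. The only place to be careful is to not conflate the projected set with a multiset over $\setsat{\specmodel}$: a naive attempt to produce a bijection $\setsat{\specmodel} \to \setsat{\spec}$ will fail whenever the encoding introduces auxiliary variables that admit multiple consistent extensions of a single $\tau$, but this is irrelevant because we project \emph{into} a set. I would therefore keep the proof short, emphasizing that the two clauses of \newequi were designed exactly to give the two directions of the set equality.
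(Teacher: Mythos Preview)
Your proposal is correct and follows essentially the same approach as the paper: use clause (a) of Definition~\ref{def:equiwit} for one direction and clause (b) for the other. The only cosmetic difference is that you prove the set equality $\setsat{\specmodel}\downarrow\vec{t} = \setsat{\spec}$ and then take cardinalities, whereas the paper establishes the two cardinality inequalities $\ge$ and $\le$ directly (invoking ``pigeonhole'' for the latter); your version is arguably cleaner, but the underlying argument is identical.
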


\begin{proof}
By Definition~\ref{def:equiwit}(a), for every assignment $\tau \models \spec$,
there is a $\sigma \models \specmodel$ and the $\proj{\sigma}{\vec{t}} =
\tau$. Therefore, each distinct satisfying assignment of $\spec$ must have a
unique assignment to $\proj{\sigma}{\vec{t}}$, which must be in
  $\setsat{\specmodel}\downarrow\vec{t}$. It follows that
  $|\setsat{\specmodel}\downarrow\vec{t}| \ge |\setsat{\spec}|$, then.
Next, observe that Definition~\ref{def:equiwit}(b) states that everything in
  $\setsat{\specmodel}\downarrow\vec{t}$ has a satisfying assignment in $\spec$;
that is, its projection cannot correspond to a non-satisfying assignment in
$\spec$.  By pigeonhole principle, it must be that
  $|\setsat{\specmodel}\downarrow\vec{t}| \le |\setsat{\spec}|$. This proves
  that $|\setsat{\specmodel}\downarrow\vec{t}| = |\setsat{\spec}|$.
\end{proof}

\begin{lemma}[CNF-Composibility]\label{lemma:compose}
    Consider $\spec_i:\vec{t_i} \rightarrow \{0,1\}$ and $\specmodel_i:\vec{u_i}
    \rightarrow \{0,1\}$, such that $\spec_i$ \isnewequi $\specmodel_i$, for $i
    \in \{1,2\}$.
    If $\vec{u_1} \cap \vec{u_2} = \vec{t}$, where $\vec{t} = \vec{t_1} \cup \vec{t_2}$, then 
    $\spec_1 \land \spec_2$ \isnewequi $\specmodel_1 \land \specmodel_2$.
\end{lemma}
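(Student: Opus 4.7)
The plan is to verify both conditions of Definition~\ref{def:equiwit} for the pair $(\spec_1 \land \spec_2,\; \specmodel_1 \land \specmodel_2)$, treating $\vec{t}=\vec{t_1}\cup\vec{t_2}$ as the original support and $\vec{u_1}\cup\vec{u_2}$ as the encoded support. Condition~(b) is the straightforward half; condition~(a) needs a careful stitching argument that exploits the disjointness hypothesis $\vec{u_1}\cap\vec{u_2}=\vec{t}$.

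I would first dispatch condition~(b). Given any $\sigma\models\specmodel_1\land\specmodel_2$, the restriction $\sigma|_{\vec{u_i}}$ still satisfies $\specmodel_i$ for each $i$, so applying Definition~\ref{def:equiwit}(b) to each pair $(\spec_i,\specmodel_i)$ yields $\sigma|_{\vec{t_i}}\models\spec_i$. Because $\vec{t}=\vec{t_1}\cup\vec{t_2}$, the single projection $\sigma|_{\vec{t}}$ simultaneously witnesses both conjuncts, so $\sigma|_{\vec{t}}\models\spec_1\land\spec_2$ as required.

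For condition~(a), I take $\tau\models\spec_1\land\spec_2$, which gives $\tau|_{\vec{t_i}}\models\spec_i$. Using Definition~\ref{def:equiwit}(a) on each pair, pick witnesses $\sigma_i\models\specmodel_i$ on $\vec{u_i}$ with $\sigma_i|_{\vec{t_i}}=\tau|_{\vec{t_i}}$. I would then stitch $\sigma_1$ and $\sigma_2$ into a single assignment $\sigma$ on $\vec{u_1}\cup\vec{u_2}$ by setting $\sigma$ equal to $\tau$ on the shared variables $\vec{t}$ and equal to $\sigma_i$ on the private part $\vec{u_i}\setminus\vec{t}$. The hypothesis $\vec{u_1}\cap\vec{u_2}=\vec{t}$ makes these two private parts disjoint, so the construction is unambiguous, and by design $\sigma|_{\vec{t}}=\tau$. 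The main obstacle, and the only substantive step, is justifying that $\sigma|_{\vec{u_i}}$ still satisfies $\specmodel_i$ after potentially replacing $\sigma_i$'s values on $\vec{t}\setminus\vec{t_i}\subseteq\vec{u_i}$ by $\tau$'s values: the disjointness condition is exactly what ensures that the auxiliaries introduced by one encoding lie outside the support of the other, so each $\specmodel_i$ interacts with the overlap $\vec{t}$ only through free-variable assignments that can be chosen consistently with $\tau|_{\vec{t}}$. Once this consistency of witnesses is established, both directions of the equi-witnessability relation hold for the conjunction, completing the proof.
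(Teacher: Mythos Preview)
Your approach is essentially the same as the paper's: part~(b) is dispatched by projecting a joint witness onto each $\vec{t_i}$, and part~(a) stitches the two per-component witnesses $\sigma_1,\sigma_2$ into a single assignment on $\vec{u_1}\cup\vec{u_2}$, using the disjointness hypothesis $\vec{u_1}\cap\vec{u_2}=\vec{t}$ to argue the pieces are compatible. The paper's proof defines the stitched witness as $\sigma_1\otimes\sigma_2=\proj{\sigma_1}{\vec{u_1}-\vec{t_1}}\cup\proj{\sigma_2}{\vec{u_2}-\vec{t_2}}\cup(\proj{\sigma_1}{\vec{t}}\cap\proj{\sigma_2}{\vec{t}})$ and invokes the same disjointness fact with comparable informality at the step you call the ``main obstacle'', so your level of detail matches the paper's.
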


\begin{proof}
  \begin{enumerate}[label=(\alph*)]
    \item $\forall \tau \models \spec_1 \land \spec_2 \Rightarrow (\tau \models
    \spec_1) \land (\tau \models \spec_2)$. By Definition~\ref{def:equiwit}(a), $\exists
    \sigma_1, \sigma_2, \sigma_1 \models \specmodel_1 \land \sigma_2 \models
    \specmodel_2$. Further, by Definition~\ref{def:equiwit}(a), $\proj{\sigma_1}{\vec{t_1}} =
    \proj{\tau}{\vec{t_1}}$ and $\proj{\sigma_2}{\vec{t_2}} =
    \proj{\tau}{\vec{t_2}}$. This implies that $\proj{\sigma_1}{\vec{t_1} \cap
    \vec{t_2}}$ $=$ $\proj{\sigma_2}{\vec{t_1} \cap \vec{t_2}}$ $=$
    $\proj{\tau}{\vec{t_1} \cap \vec{t_2}}$.
    We can now define the $\sigma_1 \otimes \sigma_2 =
      \proj{\sigma_1}{\vec{u_1} - \vec{t_1}} \cup \proj{\sigma_2}{\vec{u_2}-
      \vec{t_2}}\cup (\proj{\sigma_1}{\vec{t}}\cap\proj{\sigma_2}{\vec{t}})$.
      Since $(\vec{u_1} - \vec{t})$
    $\cap$ $(\vec{u_2} - \vec{t})$ is empty (the only shared variables between
    $\vec{u_1}$ and $\vec{u_2}$ are $\vec{t}$), it follows that $\sigma_1
      \otimes \sigma_2 \models \specmodel_1 \land \specmodel_2$ and that $ \proj{(\sigma_1
    \otimes \sigma_2)}{\vec{t}} = \tau$. This proves part (a) of the claim that
    $\spec_1 \land \spec_2$ \isnewequi $\specmodel_1 \land \specmodel_2$.

    \item $\forall \sigma \models \specmodel_1 \land \specmodel_2$ $\Rightarrow
    (\sigma \models \specmodel_1) \land  (\sigma \models \specmodel_2)$. By
    Definition~\ref{def:equiwit}(b), $\proj{\sigma}{\vec{t_1}} \models \spec_1$ and
    $\proj{\sigma}{\vec{t_2}} \models \spec_2$. This implies
    $\proj{\sigma}{\vec{t}} \models \spec_1 \land \spec_2$, thereby proving the
    part (b) of the definition for the claim that $\spec_1 \land \spec_2$
    \isnewequi $\specmodel_1 \land \specmodel_2$.
  \end{enumerate}
\end{proof}

\paragraph{Final count estimates.}
With the \newtext{CNF-composability} lemma at hand, we decompose the counting
problem over a \newtext{conjunction} of neural networks $\nnset$ and property
$\prop$, to that of counting over the \newtext{conjunction} of their respective
\newequi encodings.  \Newequi encodings preserve counts, and taking their
conjunction preserves counts. It remains to show how to encode $\nnset$ to
boolean CNF formulae, such that the encodings are \newequible. Since the
encoding preserves counts originally desired exactly, we can utilize
off-the-shelf approximate CNF
counters\newtext{~\cite{chakraborty2013scalable,soos2019bird}} which have
$(\epsilon, \delta)$ guarantees. The final counts are thus guaranteed to be
sound estimates by construction, which we establish formally in
Theorem~\ref{thm:main-thm} for the encodings in Section~\ref{sec:design}.

\paragraph{Why not random sampling?}
An alternative to our presented approach is random sampling. One could
simply check what fraction of all possible inputs \newtext{satisfies} $\spec$
by testing on a random set of samples. However, the estimates produced
by this method will satisfy soundness (defined in
Section~\ref{sec:problem}) {\em only if} the events being measured
have sufficiently high probability.
In particular, obtaining such soundness guarantees for rare events, i.e., where
counts may be very low, requires an intractably large number of samples. Note
that such events do arise in security
applications~\cite{carlini2018secret,webb2018statistical}.  Specialized Monte
Carlo samplers for such low probability events have been investigated in such
contexts~\cite{webb2018statistical}, but they do not provide soundness
guarantees. We aim for a general framework, that works irrespective of the
probability of events measured.

\section{\tool Design}
\label{sec:design}

\newtext{Our tool takes as input a set of trained binarized neural networks $\nnset$ and
a property $\prop$ and outputs ``how much" $\prop$ holds over $\nnset$ with
$(\epsilon, \delta)$ guarantees.}
We show a two-step construction from binarized neural nets to CNF. The main
principle we adhere to is that at every step we formally prove that we obtain
\newequible formulas. While BNNs and, in general, neural nets can be encoded
using different background theories, we choose a specialized encoding of BNNs to
CNF. First, we express a BNN using cardinality constraints similar
to~\cite{narodytska2017verifying} (Section~\ref{sec:encoding}). For the second
step, we choose to encode the cardinality constraints to CNF using a
sorting-based encoding (Section~\ref{sec:tocnf}). We prove that \tool is
preserving the \newequi in Theorem~\ref{thm:main-thm}. Finally, we use an
approximate model counter that can handle model counting directly over a
projected subset of variables for a CNF formula~\cite{soos2019bird}.

\subsection{BNN to Cardinality Constraints}
\label{sec:encoding}
\begin{table}[t]
\caption{BNN definition as a set of layers of transformations.}
\resizebox{0.4\textwidth}{!}{%
\begin{tabular}{l}
\toprule
\texttt{\textbf{A. Internal Block} $\bnnblk{k}(\vec{v}_k)=\vec{v}_{k+1}$} \\ 
\toprule
\texttt{1)~Linear Layer} \\
\begin{minipage}{0.45\textwidth}
\begin{align}
t^{lin}_{i} = \langle \vec{v}_k, \vec{w}_{i} \rangle + b_{i}
\label{eq:blk-linear}
\end{align}
\end{minipage}
\\
where $i=1,...,n_{k+1}$, $\vec{w}_{i}$ is the $i_{th}$ column in ${\rm W}_k \in$ \\ $\{-1,1\}^{n_k \times n_{k+1}}$, 
$\vec{b}$ is the bias row vector $\in \mathbb{R}^{n_{k+1}}$ and \\ $\vec{y} \in \mathbb{R}^{n_{k+1}}$ \\ \hline
\texttt{2)~Batch Normalization} \\
\begin{minipage}{0.45\textwidth}
\begin{align}
t^{bn}_{i} = \frac{t^{lin}_{i} - \mu_{k_i}}{\sigma_{k_i}} \cdot \alpha_{k_i}+ \gamma_{k_i}
\label{eq:blk-bn}
\end{align}
\end{minipage}
\\
where $i = 1, ..., n_{k+1}$, $\vec{\alpha}_{k}$ is the $k_{th}$ weight row vector $\in$ \\ 
$\mathbb{R}^{n_{k+1}}$, $\vec{\gamma}_k$ is the bias $\in \mathbb{R}^{n_{k+1}}$, $\vec{\mu}_k \in \mathbb{R}^{n_{k+1}}$ is the mean and \\
$\sigma_k \in \mathbb{R}^{n_{k+1}}$ is the standard deviation. \\ \hline
\texttt{3)~Binarization} \\
\begin{minipage}{0.45\textwidth}
\begin{align}
t^{bn}_{i} \geq 0 \Rightarrow v_{{k+1}_i} = 1 \label{eq:blk-bin-1}\\
  t^{bn}_{i} < 0 \Rightarrow v_{{k+1}_i} = -1 \label{eq:blk-bin-2}
\end{align}
\end{minipage}
\\
where $i = 1, ..., n_{k+1}$. \\ 

\toprule
\texttt{\textbf{B. Output Block} $\bnnout(\vec{v}_d)=\vec{y}$} \\ 
\toprule
\texttt{1)~Linear Layer} \\
\begin{minipage}{0.45\textwidth}
\begin{align}
q^{lin}_i = \langle\vec{v}_d, \vec{w}_{j} \rangle + b_i
\end{align}
\end{minipage}
\\
where $\vec{v}_d \in \{-1, 1\}^{n_d}$, $\vec{w}_{j}$ is the $j_{th}$ column $\in \mathbb{R}^{n_d \times s}$, $\vec{b} \in$ \\$\mathbb{R}^{s}$ is the
bias vector. \\ \hline

\texttt{2)~Argmax} \\
\begin{minipage}{0.45\textwidth}
\begin{align}
y_i = 1 \Leftrightarrow i = \argmax(\vec{q}^{lin})
\end{align}
\end{minipage} 
\\ 
\bottomrule

\end{tabular}
}
\label{table:bnn-def}
\end{table}

Consider a standard BNN $\bnnfunc: \{-1,1\}^n \rightarrow \{0,1\}^s$ that
consists of $d-1$ internal blocks and an output
block~\cite{hubara2016binarized}. We denote the $k$th internal block as
$\bnnblk{k}$ and the output block as $\bnnout$. More formally, given an input
$\vec{x} \in \{-1, 1\}^n$, the binarized neural network is: $\bnnfunc(\vec{x})
= \bnnout(\bnnblk{d-1}(\ldots(\bnnblk{1}(\vec{x})\ldots))$.
For every block $\bnnblk{k}$, we define the inputs to $\bnnblk{k}$ as the vector
$\vec{v_k}$. We denote the output for $k$ block as the vector $\vec{v_{k+1}}$.
For the output block, we use $\vec{v_d}$ to denote its input.
The input to $\bnnblk{1}$ is $\vec{v_1}=\vec{x}$. We summarize the
transformations for each block in Table~\ref{table:bnn-def}.

\begin{table*}[t]
  \caption{\newtext{Encoding for a binarized neural network BNN($\vec{x}$) to cardinality
  constraints, where $\vec{v_1} = \vec{x}$. MILP stands for Mixed Integer Linear
  Programming, ILP stands for Integer Linear Programming.}}
\resizebox{0.82\textwidth}{!}{%
  \begin{tabular}{m{15.5cm}}
\toprule
\texttt{A. $\bnnblk{k}(\vec{v}_k, \vec{v}_{k+1})$ to $\blkf{k}(\vec{v}_k^{(b)},
    \vec{v}_{k+1}^{(b)})$} \\ 
\toprule
\begin{align*}
  \texttt{\milpblk:}~
\frac{
Eq(1), Eq(2), Eq(3), \alpha_{k_i} > 0
}{
\begin{array}[b]{c}
\langle \vec{v}_k, \vec{w}_{i} \rangle \geq -\frac{\sigma_{k_i}}{\alpha_{k_i}}\cdot \gamma_{k_i} + \mu_{k_i} - b_{i}, 
i = 1, ...,n_{k+1}
\end{array}
}~~
  \texttt{\ilpblk:}~
\frac{
\alpha_{k_i} > 0
}{
\begin{array}[b]{c}
\langle \vec{v}_k, \vec{w}_{i} \rangle \geq C_i \Leftrightarrow v_{{k+1}_i} = 1, i = 1,...,n_{k+1} \nonumber \\
\langle  \vec{v}_k, \vec{w}_{i} \rangle < C_i \Leftrightarrow v_{{k+1}_i} =-1,  i = 1,...,n_{k+1} \\
C_i = \lceil -\frac{\sigma_{k_i}}{\alpha_{k_i}}\cdot \gamma_{k_i} +   \mu_{k_i} - b_{i} \rceil 
\end{array}
} \\
  \texttt{\cardblk:}~
\frac{
v^{(b)} = 2v-1, v \in \{-1, 1\}
}{
\begin{array}[b]{c}
\blkf{k}(\vec{v}_k^{(b)},\vec{v}_{k+1}^{(b)}) =
\sum_{j \in w_{k_{i}}^{+}} v_{k_j}^{(b)} + \sum_{j \in w_{k_{i}}^{-}}
{{\overline{v}_{k_j}}^{(b)}}
\geq C_i^{\prime} + |w_{k_{i}}^{-}| \Leftrightarrow v_{{k+1}_i}^{(b)} = 1, 
C_i^{\prime} = \lceil (C_i + \sum_{j=1}^{n_k} w_{ji}) / 2 \rceil
\end{array}
} 
\end{align*}
\\ 
\toprule
\texttt{B. $\bnnout(\vec{v}_d, \vec{y})$ to $\outf(\vec{v}_d^{(b)},\vec{ord}, \vec{y})$} \\ 
\toprule
\begin{align*}
\texttt{Order:}~
\frac{
ord_{ij} \in \{0, 1\}
}{
q^{lin}_i \geq q^{lin}_j \Leftrightarrow ord_{ij} = 1
}~~
  \texttt{\milpout:}~
\frac{
Eq(5),~Eq(Order)
}{
\langle\vec{v}_d, \vec{w}_{i} - \vec{w}_{j} \rangle \geq b_j - b_i \Leftrightarrow ord_{ij} = 1
}~~
  \texttt{\ilpout:}~
\langle\vec{v}_d, \vec{w}_{i} - \vec{w}_{j} \rangle \geq \lceil b_j - b_i \rceil \Leftrightarrow ord_{ij} = 1 \\
  \texttt{\cardout:}~
\frac{
v^{(b)} = 2v-1, v \in \{-1, 1\}
}{
\begin{array}[b]{c}
\outf(\vec{v}_d^{(b)},\vec{ord}, \vec{y}) = 
\Big(
  \big(\sum_{p \in w_{i}^{+} \cap  w_{j}^{-}} v_{d_p}^{(b)} - \sum_{p \in w_{i}^{-}
    \cap w_{j}^{+}} v_{d_p}^{(b)} \geq \lceil E_{ij} / 2 \rceil\big)
    \Leftrightarrow ord_{ij} \nonumber 
    \land \sum_{i=1}^s ord_{ij} = s \Leftrightarrow y_i =
    1\Big), \\
 E_{ij} = \lceil (b_j - b_i + \sum_{p=1}^{n_d} w_{ip} - \sum_{p=1}^{n_d} w_{jp})
  / 2\rceil
\end{array}
}
\end{align*}
\\ 
\toprule
\texttt{C. $\bnnfunc$ to $\cardf$} \\ 
\toprule
\begin{minipage}{\textwidth}
\begin{align}
  \cardf(\vec{x}^{(b)}, \vec{y}, \vec{v}_2^{(b)}, \ldots,
  \vec{v}_d^{(b)},\vec{ord}) = \blkf{1}(\vec{x}^{(b)}, \vec{v_2}^{(b)})
  \bigwedge_{k=2}^{d-1} \Big(\blkf{k}(\vec{v}_k^{(b)}, \vec{v}_{k+1}^{(b)})
  \Big) \nonumber \land \outf(\vec{v}_d^{(b)}, \vec{y}, \vec{ord})
\end{align}
\end{minipage}
\\ 
\bottomrule

\end{tabular}%
}
\label{table:enc-bnn-card}
\end{table*}

\paragraph{Running Example.}
Consider a binarized neural net $f
:\{-1,1\}^3 \rightarrow \{0,1\}$ with a single internal block and a single
output (Figure~\ref{fig:bnn-to-card}).
To show how one can derive the constraints from the BNN's parameters, we work
through the procedure to derive the constraint for $v_1$ or the output of the
internal block's first neuron.
Suppose we have the following parameters: the weight column vector
$\vec{w}_1=[1~1~1]$ and bias $b_1=-2.0$
for the linear layer; $\alpha_1=0.8, \sigma_1=1.0, \gamma_1=2.0$, $\mu_1 =
-0.37$ parameters for the batch normalization layer.
First, we apply the linear layer transformation (Eq.~\ref{eq:blk-linear} in
Table~\ref{table:bnn-def}). We create a temporary variable for this intermediate
output, $t_1^{lin} = \langle \vec{x}, \vec{w}_1 \rangle + b_1 = x_1 + x_2 + x_3
- 2.0$.
Second, we apply the batch normalization (Eq.~\ref{eq:blk-bn} in
Table~\ref{table:bnn-def}) and obtain $t_1^{bn} =
(x_1+x_2+x_3 - 2.0 + 0.37)\cdot 0.8 + 2.0~$.
After the binarization (Eq.~\ref{eq:blk-bin-1} in
Table~\ref{table:bnn-def}), we obtain the constraints $\text{S}_1 = ( (x_1 + x_2
+ x_3 - 2.0 + 0.37)\cdot 0.8 + 2.0 \geq 0)$ and $\text{S}_1
\Leftrightarrow v_1 = 1$.
Next, we move all the constants to the right side of the inequality:
$x_1 + x_2 + x_3 \geq -2.0/0.8 + 2.0 - 0.37\Leftrightarrow v_1 = 1$.
Lastly, we translate the input from the $\{-1, 1\}$ domain to the boolean
domain, $x_i = 2x_i^{(b)} - 1, i \in \{1,2,3\}$, resulting in the following
constraint: $2(x_1^{(b)} + x_2^{(b)} + x_3^{(b)}) - 3 \geq -0.87$.
We use a sound approximation for the constant on the right side to get rid of
the real values and obtain $x_1^{(b)} + x_2^{(b)} + x_3^{(b)} \geq \lceil 1.065
\rceil = 2$.
For notational simplicity the variables $x_1, x_2, x_3$ in
Figure~\ref{fig:bnn-to-card} are boolean variables
(since $x = 1 \Leftrightarrow x^{(b)} = 1)$.

To place this in the context of the security application in
Section~\ref{sec:applications}, we examine the effect of two \newtext{arbitrary trojan attack}
procedures. Their aim is to manipulate the output of a given neural network,
$f$, to a target class for inputs with a particular trigger. Let us
consider the trigger to be $x_3=1$ and the target class $y=0$ for two trojaned
neural nets, $f_1$ and $f_2$ (shown in Figure~\ref{fig:bnn-to-card}).
Initially, $f$ outputs class $0$ for only one input that has the trigger $x_3=1$.
The first observation is that $f_1$ is  equivalent to $f$, even though
its parameters have changed.
The second observation is that $f_2$ changes its output prediction for the input
$x_1=0, x_2=1,x_3=1$ to the target class $0$.
We want \tool to find how much do $f_1$ and $f_2$ change
their predictions for the target class with respect to the inputs that have the
trigger, i.e., $|\setsat{\spec_1}| < |\setsat{\spec_2}|$, where $\spec_1$,
$\spec_2$ are trojan property specifications (property \newtext{$P_2$} as outlined
Section~\ref{sec:applications}).

\paragraph{Encoding Details.}
The details of our encoding in Table~\ref{table:enc-bnn-card} are similar
to~\cite{narodytska2017verifying}. We first encode each block to mixed integer
linear programming and implication constraints, applying the \milpblk rule for
the internal block and \milpout for the outer block
(Table~\ref{table:enc-bnn-card}).
To get rid of the reals, we use sound approximations to bring the constraints
down to integer linear programming constraints (see \ilpblk and \ilpout in Table~\ref{table:enc-bnn-card}).
For the last step, we define a $1:1$ mapping between variables in the
binary domain $x \in \{-1, 1\}$ and variables in the boolean domain $x^{(b)}
\in \{0, 1\}$, $x^{(b)} = 2x - 1$. Equivalently, for $x \in \{-1, 1\}$ there
exists a unique $x^{(b)}$: $  (x^{(b)} \Leftrightarrow ~x = 1) ~\land~
(\overline{x}^{(b)} \Leftrightarrow ~ x = -1)$.
Thus, for every block $\bnnblk{k}(\vec{v}_k)=\vec{v}_{k+1}$, we obtain a
corresponding formula over booleans denoted as $\blkf{k}(\vec{v}_k^{(b)},
\vec{v}_{k+1}^{(b)})$, as shown in rule \cardblk (Table~\ref{table:enc-bnn-card}). 
Similarly, for the output block $\bnnout$ we obtain $\outf(\vec{v}_d,
\vec{ord}, \vec{y})$.
We obtain the representation of $\vec{y} =
\bnnfunc(\vec{x})$ as a formula $\cardf$ shown in Table~\ref{table:enc-bnn-card}. For notational simplicity, we denote the
introduced intermediate variables $\vec{v}_k^{(b)}=[v_{k_1}^{(b)}, \ldots,
v_{k_{n_k}}^{(b)}], k = 2, \ldots, d$ and $\vec{ord}=[ord_i, \ldots, ord_{n_d
\cdot n_d}]$ as $\auxcard$. Since there is a 1:1 mapping between $\vec{x}$ and
$\vec{x}^{(b)}$ we use the notation $\vec{x}$, when it is clear from context
which domain $\vec{x}$ has.
We refer to $\cardf$ as the formula $\cardf(\vec{x},\vec{y},\auxcard)$.

\begin{lemma}\label{lemma:card}
  Given a binarized neural net $\bnnfunc:\{-1,1\}^n \rightarrow \{0,1\}^s$ over
  inputs $\vec{x}$ and outputs $\vec{y}$, and a property $\prop$, let $\spec$ be
  the specification for $\prop$,
  $\spec(\vec{x},\vec{y})=(\vec{y}=\bnnfunc(\vec{x})) \land \prop(\vec{x},
  \vec{y})$, where we represent $\vec{y}=\bnnfunc(\vec{x})$ as
  $\cardf(\vec{x},\vec{y},\auxcard)$. Then $\spec$ \isnewequi
  $\cardf(\vec{x},\vec{y},\auxcard)$.
\end{lemma}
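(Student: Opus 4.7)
The plan is to decompose $\spec$ along the conjunction given in its definition and apply Lemma~\ref{lemma:compose} (CNF-Composability) after establishing equi-witnessability block by block. Note that $\spec = (\vec{y}=\bnnfunc(\vec{x})) \land \prop(\vec{x},\vec{y})$, where $\prop$ is defined only over the variables $\vec{x},\vec{y}$ and introduces no auxiliary support, so $\prop$ is trivially equi-witnessable to itself. Thus, by Lemma~\ref{lemma:compose}, it suffices to show that $(\vec{y}=\bnnfunc(\vec{x}))$ is equi-witnessable to $\cardf(\vec{x},\vec{y},\auxcard)$, provided the shared variables of the two conjuncts sit in both supports (which holds by construction since $\auxcard$ is disjoint from the variables of $\prop$).

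For the BNN conjunct, I would proceed by induction over the block structure $\bnnfunc = \bnnout \circ \bnnblk{d-1} \circ \cdots \circ \bnnblk{1}$, showing that for each $k$, the constraint $\bnnblk{k}(\vec{v}_k)=\vec{v}_{k+1}$ is equi-witnessable to the CNF-ready cardinality formula $\blkf{k}(\vec{v}_k^{(b)},\vec{v}_{k+1}^{(b)})$, and similarly $\bnnout(\vec{v}_d)=\vec{y}$ is equi-witnessable to $\outf(\vec{v}_d^{(b)},\vec{ord},\vec{y})$. For each internal block, I would chase the two directions of Definition~\ref{def:equiwit}: (a) given a satisfying assignment $\tau$ of $\bnnblk{k}$, the binarization output $\vec{v}_{k+1}$ is a deterministic function of $\vec{v}_k$ via Eq.~(1)--(3), so under the bijection $v = 2v^{(b)}-1$ there is a unique witness $\sigma$ agreeing with $\tau$ on the interface variables; (b) conversely, any $\sigma$ satisfying $\blkf{k}$ projects onto a valid assignment for the original block, because the cardinality threshold $C_i' + |w_{k_i}^{-}|$ was derived by the sound ceiling rewrite of Eq.~(1)--(3). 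The output block is handled analogously, with the extra care that $\vec{ord}$ is functionally determined by $\vec{v}_d^{(b)}$, which in turn forces the argmax encoding to produce a unique $\vec{y}$.

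Once per-block equi-witnessability is established, I would apply Lemma~\ref{lemma:compose} repeatedly along the chain of blocks. At each composition step the common variables between $\blkf{k}$ and $\blkf{k+1}$ are exactly $\vec{v}_{k+1}^{(b)}$, which lie in the support of both conjuncts by construction; hence the side condition of Lemma~\ref{lemma:compose} is satisfied. A straightforward induction then yields that the whole conjunction
\begin{equation*}
\blkf{1}(\vec{x}^{(b)},\vec{v}_2^{(b)}) \land \bigwedge_{k=2}^{d-1} \blkf{k}(\vec{v}_k^{(b)},\vec{v}_{k+1}^{(b)}) \land \outf(\vec{v}_d^{(b)},\vec{ord},\vec{y})
\end{equation*}
is equi-witnessable to $\vec{y}=\bnnfunc(\vec{x})$, i.e., $\cardf$ is equi-witnessable to the BNN constraint. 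One more application of Lemma~\ref{lemma:compose} folds in $\prop$ to conclude the lemma.

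The main obstacle I expect is the careful verification that the sound ceiling approximation used to eliminate the real-valued batch-norm constants in going from \milpblk to \ilpblk (and likewise for \ilpout) preserves the satisfying set \emph{exactly} over $\{-1,1\}$-valued inputs, not merely up to refinement. Concretely, one must check that $\langle \vec{v}_k, \vec{w}_i\rangle \geq -\tfrac{\sigma_{k_i}}{\alpha_{k_i}}\gamma_{k_i} + \mu_{k_i} - b_i$ holds for integer left-hand sides iff $\langle \vec{v}_k, \vec{w}_i\rangle \geq \lceil \cdot \rceil$, and that the subsequent change of variables $v^{(b)} = (v+1)/2$ commutes with the cardinality rewrite so that $C_i'$ is exactly $\lceil (C_i + \sum_j w_{ji})/2 \rceil$. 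Once this bookkeeping is verified, the block-level equi-witnessability claims follow, and the rest of the argument is structural.
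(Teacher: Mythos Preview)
Your compositional strategy is genuinely different from the paper's, and as written it has a gap. The paper does not chain Lemma~\ref{lemma:compose} block by block; instead it argues globally that $\vec{x}$ is an \emph{independent support} for $\cardf$: because every intermediate $\vec{v}_k^{(b)}$ (and every $ord_{ij}$) is introduced by a double implication, any two satisfying assignments of $\cardf$ that agree on $\vec{x}$ must agree on all of $\auxcard$ and $\vec{y}$. Both directions of Definition~\ref{def:equiwit} then follow at once, with the auxiliaries not merely witnessed but uniquely determined.

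Your route stalls at the composition step. Once you show that each $\blkf{k}$ is equivalent to $\bnnblk{k}(\vec{v}_k)=\vec{v}_{k+1}$ (both are formulas over $\{\vec{v}_k,\vec{v}_{k+1}\}$, so this is plain logical equivalence, not a nontrivial instance of equi-witnessability), iterating Lemma~\ref{lemma:compose} only tells you that $\bigwedge_k\bigl(\bnnblk{k}(\vec{v}_k)=\vec{v}_{k+1}\bigr)$, a formula over \emph{all} of $\vec{x},\vec{v}_2,\ldots,\vec{v}_d,\vec{ord},\vec{y}$, is equi-witnessable to $\cardf$ over the same support. It does not shrink the projection set down to $\{\vec{x},\vec{y}\}$, which is what the lemma needs: Lemma~\ref{lemma:compose} composes encodings, it never eliminates interface variables. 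The missing ingredient is exactly the determinism of the $\vec{v}_k$ given $\vec{x}$, which you mention in passing but never deploy to justify the projection. Separately, your verification of the side condition of Lemma~\ref{lemma:compose} is wrong: with $\vec{u}_k=\vec{t}_k=\{\vec{v}_k,\vec{v}_{k+1}\}$ one has $\vec{u}_k\cap\vec{u}_{k+1}=\{\vec{v}_{k+1}\}$ while $\vec{t}_k\cup\vec{t}_{k+1}=\{\vec{v}_k,\vec{v}_{k+1},\vec{v}_{k+2}\}$, so the hypothesis $\vec{u}_1\cap\vec{u}_2=\vec{t}$ fails as stated. The ceiling-rounding bookkeeping you flag is a legitimate side check but is not where the proof lives; the real content is the independent-support claim.
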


\begin{proof}
  We observe that the intermediate variables for each block in the neural
  network, namely $\vec{v}_k$ for the $k$th block, are introduced by double
  implication constraints. Hence, not only are both part (a) and part (b) of
  definition~\ref{def:equiwit} true, but the satisfying assignments for the
  intermediate variables $\auxcard$ are uniquely determined by $\vec{x}$. Due to
  space constraints, we give our full proof in
  Appendix\newtext{~\ref{appendix:proof-bnn-to-card}}.
\end{proof}

\subsection{Cardinality Constraints to CNF}
\label{sec:tocnf}

Observe that we can \newtext{express} each block in $\cardf$ as a conjunction of
cardinality
constraints\newtext{~\cite{sinz2005towards,asin2011cardinality,abio2013parametric}.
Cardinality constraints are constraints over boolean variables $x_1, \ldots,x_n$
of the form $x_1 +\ldots + x_n \triangle c$ , where $\triangle \in \{=, \leq,
\geq\}$.} More specifically, by applying the \texttt{\cardblk}~rule
(Table~\ref{table:enc-bnn-card}), we obtain a conjunction over cardinality
constraints $\text{S}_{k_i}$, together with an implication:
$\blkf{k}(\vec{v}_k^{(b)},\vec{v}_{k+1}^{(b)}) = \bigwedge_{i=1}^{n_{k+1}}
\cardct_{k_i}(\vec{v}_k^{(b)}) \Leftrightarrow v_{{k+1}_i}^{(b)} $.
We obtain a similar
conjunction of cardinality constraints for
the output block (\cardout, Table~\ref{table:enc-bnn-card}).
The last step for obtaining a Boolean formula representation for the BNN is encoding
the cardinality constraints to CNF.%

We choose cardinality networks~\cite{asin2011cardinality,abio2013parametric}
to encode the cardinality constraints to CNF formulas and show for this
particular encoding that the resulting CNF is \newequible to the cardinality
constraint. Cardinality networks implement several types of gates, i.e., merge
circuits, sorting circuits and 2-comparators, that compose to implement a
merge sort algorithm. More specifically, a cardinality constraint of the form
$\cardct(\vec{x})=x_1 +\ldots + x_n \geq c$ has a corresponding cardinality
network, $\cardcnf{c} = \Big((\text{Sort}_{c}(x_1, \ldots, x_n) = (y_1,
\ldots, y_{c})) \land y_{c}\Big)$, where $\text{Sort}$ is a sorting circuit.
As shown by~\cite{asin2011cardinality,abio2013parametric}, the following holds
true:

\begin{proposition}\label{prop:sort-net}
  A Sort$_{c}$ network with an input of $n$ variables, outputs the first $c$
  sorted bits. $\text{Sort}_{c}(x_1, \ldots, x_n) = (y_1,\ldots, y_{c})$ where
  $y_1 \geq y_2 \geq \ldots \geq y_{c}$.
\end{proposition}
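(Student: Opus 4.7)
The plan is to prove this by structural induction on the recursive construction of the cardinality network $\text{Sort}_c$. Recall that, following Asín et al. and Abío et al., $\text{Sort}_c$ is built from two primitive building blocks: 2-comparators, which realize the pair $(\max(a,b), \min(a,b)) = (a \lor b,\, a \land b)$ on boolean inputs, and $c$-mergers $\text{Merge}_c$, which take two sorted $c$-bit sequences and produce the top-$c$ bits of their union in sorted order. The network $\text{Sort}_c(x_1,\ldots,x_n)$ is defined by splitting the input into two halves, recursively computing $\text{Sort}_c$ on each half, and combining the two sorted $c$-sequences with $\text{Merge}_c$.

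I would first handle the base cases directly. For $c=1$, $\text{Sort}_1$ collapses to an OR-tree whose single output is $\bigvee_i x_i$, which is trivially ``sorted.'' For $n \le c$ one can pad with zeros (or appeal to a full sorter on $n$ inputs), and for the small case $n=c=2$ a single 2-comparator suffices. These base cases are immediate from the semantics of the gates.

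Next I would prove a lemma stating the correctness of $\text{Merge}_c$: given inputs $a_1 \ge \cdots \ge a_c$ and $b_1 \ge \cdots \ge b_c$, the outputs $z_1,\ldots,z_c$ satisfy $z_1 \ge \cdots \ge z_c$ and the multiset $\{z_1,\ldots,z_c\}$ equals the $c$ largest bits of $\{a_i\} \cup \{b_j\}$. By Knuth's \emph{0/1 principle}, it suffices to verify this on boolean inputs, which reduces to a counting argument: if the two inputs contain $p$ and $q$ ones respectively, then $\text{Merge}_c$ must output $\min(p+q, c)$ ones followed by zeros. This is established by induction on $c$ using the odd-even merge decomposition: recursively merge the odd-indexed and even-indexed subsequences, then apply a layer of 2-comparators to correct adjacent pairs, truncating to the top $c$ wires.

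With the merger lemma in hand, the inductive step for $\text{Sort}_c$ is short: by the induction hypothesis, each recursive call returns the top $c$ bits of its half in sorted order, and by the merger lemma their combination yields the top $c$ bits of the full input in sorted order, which is precisely the claim. The main obstacle is the correctness of $\text{Merge}_c$ under truncation to $c$ outputs rather than $2c$; this is where the 0/1 principle and the careful bookkeeping of how many ones propagate through the odd/even halves are essential. Since the construction and its correctness are standard, I would give the structural argument explicitly and defer the low-level merger verification to the cited references~\cite{asin2011cardinality,abio2013parametric}.
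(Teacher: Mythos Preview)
The paper does not prove this proposition at all: it is stated immediately after the sentence ``As shown by~\cite{asin2011cardinality,abio2013parametric}, the following holds true,'' and is used purely as a black-box fact imported from the cardinality-networks literature. There is no accompanying argument in the paper to compare your proposal against.

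Your sketch is essentially the standard correctness argument from those very references: structural induction on the recursive $\text{Sort}_c$/$\text{Merge}_c$ construction, with the merger lemma discharged via the 0/1 principle and the odd--even decomposition. That is sound, and in fact you end up deferring the low-level merger verification to exactly the same citations the paper invokes. So your proposal is not wrong, but relative to the paper it is strictly additional work: the paper's ``proof'' is a citation, and what you have written is a faithful outline of the proof that lives in those cited papers. If your goal is to match the paper, a one-line appeal to \cite{asin2011cardinality,abio2013parametric} suffices; if your goal is a self-contained account, your inductive structure is the right one.
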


We view $\cardcnf{c}$ as a circuit where we introduce additional variables
to represent the output of each gate, \newtext{and the output of $\cardcnf{c}$ is $1$ only if
the formula $\cardct$ is true. This is similar to how a Tseitin}
transformation~\cite{tseitin1983complexity} encodes a propositional formula
into CNF.

\paragraph{Running Example.}
Revisiting our example in Section~\ref{sec:encoding}, consider $f_2$'s
cardinality constraint corresponding to $v_1$, denoted as
$\text{S}'_1=x_1 + x_3 \geq 2$. This constraint translates to the most basic
gate of cardinality networks, namely a
2-comparator~\cite{batcher1968sorting,asin2011cardinality} shown in
Figure~\ref{fig:card-to-cnf}. Observe that while this efficient encoding
ensures that $S_1$ is equi-satisfiable to the formula $\text{2-Comp}\land y_2$,
counting over the CNF formula does not preserve the count, i.e., it over-counts
due to variable $y_1$. Observe, however, that this encoding is equi-witnessable 
and thus, a projected model count on $\{x_1, x_3\}$ gives the
correct model count of $1$. The remaining constraints shown in
Figure~\ref{fig:bnn-to-card} are encoded similarly and not shown here for brevity.

\begin{figure}
\begin{minipage}{0.1\textwidth}
\center
\resizebox{1.2\textwidth}{!}{%
\begin{tikzpicture}
  \tikzstyle{packet} = [draw=white,line width=0.25mm,shape=circle,minimum
  size=1.5em, inner sep=0pt,fill=white]
  \tikzset{mixing/.style={rectangle, draw, very thick, minimum
  width=12ex,minimum height =7ex,
  rounded corners=1mm, fill=red!70!orange!30,rotate=90}}

  \node[packet] (A) at (0,-1.0) {$x_1$};
  \node[packet] (B) [right=2.5cm of A] {$y_1$};
  \node[mixing] (pro) at ($(A) !.5! (B) + (0.0,-0.5)$) 
    (mult) {2-Comp};

    \node[packet] (C) at (0.0,-2.1) {$x_3$};
    \node[packet] (D) [right=2.5cm of C] {$y_2$};

    \draw[->] (A.east) -- (mult.north|-A.east);
    \draw[->] (mult.south|-A.east) -- (B.west);
    \draw[->] (C.east) -- (mult.north|-C.east);
    \draw[->] (mult.south|-C.east) -- (D.west);
\end{tikzpicture}
}
\end{minipage}
\begin{minipage}{0.35\textwidth}
\center
\resizebox{0.9\textwidth}{!}{%
  \begin{tabular}{l|c|c|c|c|c}
    \multicolumn{1}{c|}{2-Comp Clauses} & $x_1$ & $x_3$ & $y_1$ & $y_2$ & $\text{2-Comp}
    \land y_2$ \\ \hline
    \multirow{4}{*}{\begin{tabular}[c]{@{}l@{}}$\overline{x}_1 \Rightarrow
      \overline{y}_2$\\ $\overline{x}_3 \Rightarrow \overline{y}_2$\\
      $\overline{x}_1 \land \overline{x}_3 \Rightarrow
    \overline{y}_1$\end{tabular}} & 0 & 0 & 0 & 0 & 0 \\
     & 0 & 1 & 0/1 & 0 & 0 \\
      & 1 & 0 & 0/1 & 0 & 0 \\
       & 1 & 1 & 0/1 & 0/1 & 1
  \end{tabular}%
}
\end{minipage}
  \caption{Cardinality networks encoding for $x_1 + x_3 \geq 2$.
  For this case, cardinality networks amount to a 2-comparator gate. Observe
  there are two satisfying assignments for $\text{2-Comp}\land y_2$ due to the
  ``don't care" assignment to $y_1$.}
  \label{fig:card-to-cnf}
\end{figure}
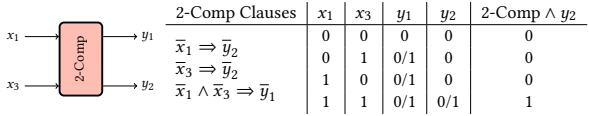

\begin{lemma}[Substitution]\label{lemma:subst}
  Let F be a Boolean formula defined over the variables $\text{Vars}$ and $p \in
  \text{Vars}$. For all satisfying assignments $\tau \models F \Rightarrow
  \proj{\tau}{\text{Vars}-\{p\}} \models F[p\mapsto\tau[p]]$.
\end{lemma}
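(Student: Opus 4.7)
The plan is to unpack the semantics of syntactic substitution and reduce the claim to a standard structural-induction argument. Write $c := \tau[p] \in \{0,1\}$ and let $\sigma := \proj{\tau}{\text{Vars}-\{p\}}$, so that $\tau = \sigma \cup \{p \mapsto c\}$ as assignments over $\text{Vars}$. With this notation, the conclusion to establish is simply $\sigma \models F[p \mapsto c]$.

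The heart of the argument is the semantic substitution identity: for every Boolean formula $F$ over $\text{Vars}$, every variable $p \in \text{Vars}$, every constant $c \in \{0,1\}$, and every assignment $\rho$ over $\text{Vars}-\{p\}$,
\[
\rho \models F[p \mapsto c] \;\Longleftrightarrow\; \rho \cup \{p \mapsto c\} \models F.
\]
Once this identity is in hand, applying it with $\rho = \sigma$ and $c = \tau[p]$ immediately yields the lemma: the right-hand side holds because $\sigma \cup \{p \mapsto c\} = \tau \models F$ by hypothesis, so the left-hand side gives $\proj{\tau}{\text{Vars}-\{p\}} \models F[p \mapsto \tau[p]]$, which is exactly the claim.

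I would prove the identity by structural induction on $F$. In the base cases, $F$ is either a constant (trivial, substitution leaves it unchanged) or a literal: if $F = p$ then $F[p\mapsto c] = c$ and both sides evaluate to $c$; if $F = q$ for some $q \neq p$ then $F[p\mapsto c] = q$ and both sides evaluate to $\rho[q]$ (noting that $q \in \text{Vars}-\{p\}$, so $\rho[q]$ is defined and equals $(\rho \cup \{p \mapsto c\})[q]$); the negated literal case is symmetric. In the inductive step, $F = \neg F_1$ or $F = F_1 \circ F_2$ for some Boolean connective $\circ$; substitution distributes syntactically over connectives, i.e., $(F_1 \circ F_2)[p \mapsto c] = F_1[p \mapsto c] \circ F_2[p \mapsto c]$ and $(\neg F_1)[p \mapsto c] = \neg(F_1[p \mapsto c])$, and the semantics of the connectives are defined compositionally, so the inductive hypothesis on each subformula transports equivalence through the connective.

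There is no substantial obstacle here; the only care needed is to state precisely what $F[p \mapsto \tau[p]]$ means syntactically (replace every free occurrence of $p$ in $F$ by the constant $\tau[p]$, yielding a formula over $\text{Vars}-\{p\}$) and to be explicit that $\sigma$ agrees with $\tau$ on every variable other than $p$ so the projection does not drop information needed to evaluate the residual formula. The result is a clean two-line application of the substitution identity, whose proof is a routine induction of the sort used to justify Tseitin-style transformations referenced earlier in the paper.
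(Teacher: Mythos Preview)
Your argument is correct: the semantic substitution identity you state is exactly what is needed, and the structural induction you outline establishes it without difficulty. Note, however, that the paper does not actually supply a proof of this lemma; it is stated as a standard fact about Boolean formulae and immediately used in the proof of the next lemma. Your write-up therefore goes beyond what the paper provides, filling in the routine details the authors chose to omit.
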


\begin{lemma}\label{lemma:cnf}
  For a given cardinality constraint, $\cardct(\vec{x}) = x_1 + \ldots + x_n \geq c$,
  let $\cardcnf{c}$ be the CNF formula obtained using cardinality networks,
  $\cardcnf{c}(\vec{x}, \auxcnf) := (\text{Sort}_{c}(x_1, \ldots, x_n) =
  (y_1, \ldots, y_{c}) \land y_{c})$, where $\auxcnf$ are the
  auxiliary variables introduced by the encoding.
  Then, $\cardcnf{c}$ \isnewequi $\cardct$.
  \begin{enumerate}[label=(\alph*)]
    \item $\forall \tau \models \cardct \Rightarrow \exists \sigma, \sigma
      \models \cardcnf{c} \land \proj{\sigma}{\vec{x}} = \tau$.
    \item $\forall \sigma \models \cardcnf{c} \Rightarrow \tau_3
      \vert_{\vec{x}} \models \cardct$.
  \end{enumerate}
\end{lemma}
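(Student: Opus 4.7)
The plan is to prove both parts of Definition~\ref{def:equiwit} by appealing to Proposition~\ref{prop:sort-net}, which characterizes the input-output behavior of the sorting circuit $\text{Sort}_c$. The only clause in $\cardcnf{c}$ beyond the sorter's CNF encoding is the unit clause $y_c$, so satisfying $\cardcnf{c}$ amounts to satisfying the sorter clauses together with $y_c = 1$. Because the intended semantics of $\text{Sort}_c$ makes the top-$c$ outputs a deterministic function of $\vec{x}$, the correspondence between $\cardct$ and $\cardcnf{c}$ reduces to the equivalence ``$y_c = 1$ iff at least $c$ of the $x_i$ equal $1$.''

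For part (a), I would fix $\tau \models \cardct$, so $\tau$ sets at least $c$ of the $x_i$ to $1$. To build the required extension $\sigma$, I would propagate $\tau$ through the gates of $\text{Sort}_c$ using the canonical gate-by-gate evaluation, obtaining values for every auxiliary variable in $\auxcnf$ and for $y_1, \ldots, y_c$. By Proposition~\ref{prop:sort-net} this canonical assignment makes $(y_1, \ldots, y_c)$ the sorted top $c$ bits of $(x_1, \ldots, x_n)$, so $y_c = 1$. A direct inspection of the clauses introduced by each 2-comparator and merge gate then shows that $\sigma$ satisfies every clause of $\cardcnf{c}$, and by construction $\proj{\sigma}{\vec{x}} = \tau$.

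For part (b), I would fix $\sigma \models \cardcnf{c}$. The unit clause $y_c$ forces $\sigma(y_c) = 1$. I would then invoke the soundness half of Proposition~\ref{prop:sort-net}: the clauses of $\text{Sort}_c$ are constructed so that, in any satisfying assignment, $y_c = 1$ implies that at least $c$ of the inputs $x_1, \ldots, x_n$ equal $1$. Therefore $\proj{\sigma}{\vec{x}} \models \cardct$.

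The main obstacle is the asymmetry illustrated by Figure~\ref{fig:card-to-cnf}: the cardinality-network CNF does not Tseitin-encode the sorter in both directions, so auxiliary variables such as $y_1$ in the running example can take ``don't care'' values. Consequently I cannot argue that $\sigma$ is uniquely determined by $\tau$ in part (a); I only need the existence of one consistent extension, which the canonical circuit evaluation supplies. Dually, in part (b) I must argue that the one-sided implications in the CNF still suffice to force $y_c = 1 \Rightarrow \sum_i x_i \geq c$ even when other auxiliary bits are unconstrained. A clean way to discharge both points rigorously is structural induction over the composition of 2-comparators and merge circuits from~\cite{asin2011cardinality, abio2013parametric}, carrying as invariant that each sub-network's encoding is equi-witnessable to the specification ``its top outputs are the sorted first bits of its inputs.''
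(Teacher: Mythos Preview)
Your proposal is correct and follows essentially the same route as the paper: both directions are discharged by appealing to Proposition~\ref{prop:sort-net}, with part~(a) extending $\tau$ via the canonical circuit evaluation and part~(b) reading off $y_c=1$ from the unit clause and then invoking the sorter's semantics to force $\sum_i x_i \geq c$. The paper's proof is terser---it treats Proposition~\ref{prop:sort-net} as a black box in both directions and inserts the Substitution Lemma (Lemma~\ref{lemma:subst}) in part~(b) to justify restricting $\sigma$ to $\vec{x}$---whereas your closing discussion of the one-sided clauses and the suggested structural induction over 2-comparators and merge circuits goes further than the paper does in making the appeal to the cited cardinality-network literature explicit.
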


\begin{proof}
  \begin{enumerate}[label=(\alph*)]
    \item Let $\tau \models \cardct \Rightarrow$ there are least $c$ $x_i$'s
      such that $\tau[x_i] = 1, i \geq c$.
      Thus, under the valuation $\tau_1$ to the input variables $x_1, \ldots,
      x_n$, the sorting network outputs a sequence $y_1, \ldots, y_c$ where
      $y_{c}=1$, where $y_1 \geq \ldots \geq y_c$ (Proposition~\ref{prop:sort-net}).
      Therefore, $\cardcnf{c}[\vec{x} \mapsto \tau]$ =
      $(\text{Sort}_c(x_1 \mapsto \tau[x_1], \ldots, x_n \mapsto
      \tau[x_n]) = (y_1, \ldots, y_{c}) \land y_{c})$ is satisfiable. This
      implies that $\exists \sigma, \sigma \models \cardcnf{c} \land
      \proj{\sigma}{\vec{x}} = \tau$.
    \item Let $\sigma \models \cardcnf{c} \Rightarrow \sigma[y_{c}]=1$.
      By Lemma~\ref{lemma:subst}, $\proj{\sigma}{\vec{x}} \models
      \cardcnf{c}[y_i \\ \mapsto \sigma[y_i]], \forall y_i \in \auxcnf$.
      From Proposition~\ref{prop:sort-net}, under the valuation $\sigma$, there are
      at least $c$ $x_i$'s such that $\sigma[x_i] = 1, i \geq c$. Therefore,
      $\proj{\sigma}{\vec{x}} \models \text{S}$.
  \end{enumerate}

  \end{proof}

For every $\cardct_{k_i}$, $k=1,\ldots,d, i=1,\ldots,n_{k+1}$, we have a CNF
formula $\bnncnf_{k_i}$. The final CNF formula for $\cardf(\vec{x}, \vec{y},
\auxcard)$ is denoted as $\bnncnf(\vec{x}, \vec{y}, \auxall)$, where $\auxall
= \auxcard \bigcup_{k=1}^{d}\bigcup_{i=1}^{n_{k+1}}\auxcnf^{k_i}$ and
$\auxcnf^{k_i}$ is the set of variables introduced by encoding
$\cardct_{k_i}$.

\paragraph{Encoding Size.}
The total CNF formula size is linear in the size of the model.
Given one cardinality constraint $\cardct(\vec{v_k})$, where $|\vec{v_k}|=n$, a cardinality network encoding
produces a CNF formula with $O(n~log^2~c)$ clauses and variables.
\newtext{The constant $c$ is the maximum value that the parameters of the BNN
can take, hence the encoding is linear in $n$.}
\newtext{For a given layer with $m$ neurons, this translates to
$m$ cardinality constraints, each over $n$ variables.
Hence, our encoding procedure produces $O(m \times n)$ clauses and variables for
each layer.
For the output block, $s$ is the number of output classes and $n$
is the number of neurons in the previous layer. Thus, there are $O(s \times s
\times n)$ clauses and variables for the output block.
Therefore, the total size for a BNN with $l$ layers of the CNF is $O(m \times n \times l + s \times s
\times n)$, which is linear in the size of the original model.}

\paragraph{Alternative encodings.}
Besides cardinality networks, there are many other encodings from cardinality
constraints to CNF~\cite{asin2011cardinality,abio2013parametric,abio2011bdds,
sinz2005towards,een2006translating} that can be used as long as they are
\newequible.
We do not formally prove here but we \newtext{strongly suspect} that adder
networks~\cite{een2006translating} and
BDDs~\cite{abio2011bdds} have this property. Adder
networks~\cite{een2006translating} provide a compact, linear
transformation resulting in a CNF with $O(n)$ variables and clauses. The idea
is to use adders for numbers represented in binary to compute the number of
activated inputs and a comparator to compare it to the constant $c$. A 
BDD-based~\cite{een2006translating} encoding builds a BDD representation of the
constraint. It uses $O(n^2)$ clauses and variables.

\subsection{Projected Model Counting}
\label{sec:model-counting}

We instantiate the property $\prop$ encoded in CNF and the neural network
encoded in a CNF formulae $\bnncnf$. We make the powerful observation that we can
directly count the number of satisfying assignment for $\spec$ \newtext{over a subset of
variables, known as projected model counting~\cite{aziz2015exists}}.
\tool uses an approximate model counter with strong PAC-style guarantees.
ApproxMC3~\cite{soos2019bird} is an approximate model counter that can
directly count on a projected formula making a logarithmic number of calls in
the number of formula variables to an NP-oracle, namely a SAT solver.

\begin{theorem}\label{thm:main-thm}
	\tool is an $\anqv$.
\end{theorem}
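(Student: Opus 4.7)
The plan is to chain together the lemmas established earlier in the paper and then invoke the PAC guarantees of the underlying approximate model counter. Concretely, I would start from the specification $\spec(\vec{x},\vec{y}) = \bigwedge_{i=1}^{m}(\vec{y_i} = f_i(\vec{x_i})) \land \prop(\vec{x},\vec{y})$ over a set of BNNs $\nnset = \{f_1,\ldots,f_m\}$, and show that \tool constructs a CNF formula $\Psi(\vec{x},\vec{y},\auxall)$ such that $\Psi$ \isnewequi $\spec$ with respect to the variables $\vec{x}\cup\vec{y}$. Then applying Lemma~\ref{lemma:preserve} gives $|\setsat{\Psi}\downarrow(\vec{x}\cup\vec{y})| = |\setsat{\spec}|$, and invoking the $(\epsilon,\delta)$ guarantee of ApproxMC3 on projected counting over $\vec{x}\cup\vec{y}$ yields the required PAC bound.

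To establish that $\Psi$ \isnewequi $\spec$, I would proceed in two composition steps. First, for each $f_i \in \nnset$, Lemma~\ref{lemma:card} shows that the cardinality-constraint encoding $\cardf_i(\vec{x_i},\vec{y_i},\auxcard^i)$ is \newequi to $(\vec{y_i} = f_i(\vec{x_i}))$. Second, Lemma~\ref{lemma:cnf} shows that every cardinality constraint $\cardct_{k_j}$ occurring inside $\cardf_i$ has a cardinality-network CNF encoding $\bnncnf_{k_j}$ that is \newequi to $\cardct_{k_j}$ over its input variables. Because the auxiliary variables $\auxcnf^{k_j}$ introduced by the cardinality networks are freshly named and local to each constraint, and the intermediate block variables $\auxcard^i$ are fresh per BNN, the disjointness hypothesis of Lemma~\ref{lemma:compose} is satisfied at every step. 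Applying Lemma~\ref{lemma:compose} iteratively --- first within each BNN (composing CNF encodings of its constituent cardinality constraints to get $\bnncnf_i$ \newequi to $\cardf_i$, and hence \newequi to $(\vec{y_i}=f_i(\vec{x_i}))$ by transitivity of \newequi, which follows directly from Definition~\ref{def:equiwit}), and then across the $m$ BNNs and $\prop$ (whose CNF encoding is \newequi to itself) --- yields $\Psi \doteq \bigwedge_{i=1}^m \bnncnf_i \land \prop_{CNF}$ \newequi to $\spec$ over $\vec{x}\cup\vec{y}$.

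With count preservation in hand, the final step is quantitative: \tool feeds $\Psi$ to ApproxMC3 with the projection set $\vec{x}\cup\vec{y}$ and parameters $(\epsilon,\delta)$. By the established guarantee of ApproxMC3~\cite{soos2019bird}, the returned estimate $r$ satisfies $\Pr[(1+\epsilon)^{-1}|\setsat{\Psi}\downarrow(\vec{x}\cup\vec{y})| \leq r \leq (1+\epsilon)|\setsat{\Psi}\downarrow(\vec{x}\cup\vec{y})|] \geq 1-\delta$. Substituting $|\setsat{\Psi}\downarrow(\vec{x}\cup\vec{y})| = |\setsat{\spec}|$ gives exactly the $\anqv$ condition from Definition of $\anqv$.

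The main obstacle --- really the only non-mechanical part --- is justifying the iterated application of Lemma~\ref{lemma:compose}. Lemma~\ref{lemma:compose} is stated for two formulae whose auxiliary variable sets overlap only on the ``shared'' variables $\vec{t}$, so at every composition step I need to argue that the auxiliary variables introduced by the cardinality-network encodings and by successive BNN blocks are genuinely fresh, and that inter-block shared variables (the $\vec{v_k}^{(b)}$) play the role of the shared $\vec{t}$. Once disjointness is checked, transitivity of \newequi (\emph{a} and \emph{b} of Definition~\ref{def:equiwit} chain through intermediate projections) and an induction over the number of conjuncts close out the argument. Everything else is bookkeeping built on top of Lemmas~\ref{lemma:preserve}, \ref{lemma:compose}, \ref{lemma:card}, and \ref{lemma:cnf}, plus the black-box ApproxMC3 guarantee.
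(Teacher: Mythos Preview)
Your proposal is correct and follows essentially the same approach as the paper: chain Lemma~\ref{lemma:cnf} through Lemma~\ref{lemma:compose} to get $\cardf$ \newequible to $\bnncnf$, combine with Lemma~\ref{lemma:card} to reach $\spec$, and then invoke the $(\epsilon,\delta)$ guarantee of ApproxMC3. Your write-up is in fact considerably more careful than the paper's three-sentence proof---you make explicit the disjointness hypothesis needed for iterated composition, the transitivity of \newequi, the explicit appeal to Lemma~\ref{lemma:preserve}, and the multi-BNN case---all of which the paper leaves implicit.
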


\begin{proof}
	First, by Lemma~\ref{lemma:compose}, since each cardinality constraint
	$\cardct_{k_i}$ \isnewequi $\bnncnf_{k_i}$ (Lemma~\ref{lemma:cnf}), the
	conjunction over the cardinality constraints is also \newequible. Second, by
	Lemma~\ref{lemma:card}, $\cardf$ \isnewequi $\bnncnf$. Since we use an
	approximate model counter with $(\epsilon, \delta)$
	guarantees~\cite{soos2019bird}, \tool returns $r$ for a given {\rm BNN} and a
	specification $\spec$ with $(\epsilon, \delta)$ guarantees.
\end{proof}

\section{Implementation \& Evaluation}
\label{sec:eval}

We aim to answer the following research questions:
\\
\textbf{(RQ1)} To what extent does \tool scale to, e.g., how large are the neural
nets \newtext{and the formulae that \tool can handle?}
\\
\textbf{(RQ2)} How effective is \tool at providing sound estimates for practical
security applications?
\\
\textbf{(RQ3)} Which factors influence the performance of \tool~\newtext{on our
benchmarks} and how much?
\\
\textbf{(RQ4)} Can \tool be used to refute claims about
security-relevant properties over BNNs?

\paragraph{Implementation.}
We implemented \tool in $4,600$ LOC of Python and C++.
We use the PyTorch (v1.0.1.post2)~\cite{paszke2017automatic}
deep learning platform to train and test binarized neural networks. For
encoding the BNNs to CNF, we build our own tool using %
the PBLib library~\cite{pblib.sat2015} for encoding the cardinality constraints
to CNF. The resulting CNF formula is annotated with a projection set and \tool invokes
the approximate model counter ApproxMC3~\cite{soos2019bird} to count the number
of solutions. We configure a tolerable error $\epsilon=0.8$ and
confidence parameter $\delta=0.2$ as defaults throughout the evaluation. 

\paragraph{Models.} 
Our benchmarks consist of BNNs, on which we tested the properties
derived from the $3$ applications outlined in
Section~\ref{sec:applications}.  The utility of \tool in these
security applications is discussed in Sections~\ref{sec:robustness}-~\ref{sec:fairness}.
For each application, we trained BNNs with the following $4$ different
architectures:
\begin{itemize}
  \item \textbf{ARCH$_1$} - BLK$_1$($100$)
  \item \textbf{ARCH$_2$} - BLK$_1$($50$), BLK$_2$($20$)
  \item \textbf{ARCH$_3$} - BLK$_1$($100$), BLK$_2$($50$)
  \item \textbf{ARCH$_4$} - BLK$_1$($200$), BLK$_2$($100$), BLK$_3$($100$)
\end{itemize}
For each architecture, we take snapshots of the model learnt at
different epochs. In total, this results in $84$ total models,
each with $6,280-51,410$ parameters.
\newtext{\newtext{Encoding} various properties
(Sections~\ref{sec:robustness}-~\ref{sec:fairness}) results in a total of
\totalformulae distinct formulae. For each formula, \tool returns $r$ i.e., 
the number of satisfying solutions.  Given $r$, we calculate \ps i.e., the
percentage of the satisfying solutions with respect to the total input space
size. The meaning of \ps percentage values is application-specific. In trojan attacks,
\pstr represents inputs labeled as the target class.
In robustness quantification, \psadv reports the adversarial
samples.}

\paragraph{Datasets.}
We train models over $2$ standard datasets. Specifically, we quantify robustness and
trojan attack effectiveness on the MNIST~\cite{lecun2010mnist} dataset and
estimate fairness queries on the UCI Adult dataset~\cite{uci2019}.  We choose
them as prior work use these  datasets~\cite{galloway2017attacking, datta2016algorithmic, raghunathan2018certified, gao2019strip,  albarghouthi2017fairsquare}.

\paragraph{\em MNIST.} The dataset contains $60,000$ gray-scale $28 \times 28$ images of
handwritten digits with $10$ classes. In our evaluation, we resize the images to
$10 \times 10$ and binarize the normalized pixels in the images.

\paragraph{\em UCI Adult Census Income.} The dataset is $48,842$ records
with $14$ attributes such as age, gender, education, marital status, occupation,
working hours, and native country. The task is to predict whether a
given individual has an income of over $\$50,000$ a year.  $5 / 14$
attributes are numerical variables, while the remaining attributes are
categorical variables. To obtain binary features, we divide the values of each
numerical variables into groups based on its deviation. Then, we encode each
feature with the least amount of bits that are sufficient to represent each
category in the feature. For example, we encode the race feature which has $5$
categories in total with $3$ bits, leading to $3$ redundant values in this
feature. We remove the redundant values by encoding the property to disable the
usage of these values in \tool. We consider $66$ binary features in
total.

\paragraph{Experimental Setup.}
All experiments are performed on $2.5$ GHz CPUs, $56$ cores, $64$GB RAM. Each
counting process executed on one core and $4$GB memory cap and a $24$-hour
timeout per formula.

\subsection{\tool Benchmarking} 
\label{sec:benchmarks}

We benchmark \tool and report breakdown on \totalformulae formulae.

\paragraph{Estimation Efficiency.}
\newtext{\tool successfully solves $97.1\%$ (\newtext{$1,025$} / \totalformulae) formulae.}
In quantifying the effectiveness of trojan attacks and fairness applications, the
raw size of the input space (over all possible choices of the free variables)
is $2^{96}$ and $2^{66}$, respectively. Naive enumeration for such large
spaces is intractable. \tool returns estimates for $83.3\%$ of the formulae
within $12$ hours and $94.8\%$ of the formulae within $24$ hours for these two
applications. \newtext{In robustness  application, the total input sizes are a
maximum of about $7.5 \times 10^7$.}

\begin{framed}
\vspace{-15pt}
\textbf{Result 1:} \tool solves $97.1\%$ formulae in $24$-hour timeout.
\vspace{-15pt}
\end{framed}

\pgfplotsset{every tick label/.append style={font=\huge}}

\begin{figure}[t]
\resizebox{0.34\textwidth}{!}{%
\begin{tikzpicture}
\begin{axis}[
    ylabel={\newtext{\# of solved formulae}},
    xlabel={Time (hour)},
    label style={font=\huge},
    ymin=350, ymax=1100,
    xmin=0, xmax=24,
    ytick={350, 500, 650, 800, 950, 1100},
    xtick={4, 8, 12, 16, 20, 24},
    legend pos=north west,
    ymajorgrids=true,
    grid style=dashed,
    font=\huge
]
 
\addplot[
    color=RYB2,
    line width=2pt
    ]
    coordinates {
(0.5, 402)
(1.0, 633)
(1.5, 703)
(2.0, 724)
(2.5, 764)
(3.0, 781)
(3.5, 822)
(4.0, 861)
(4.5, 879)
(5.0, 894)
(5.5, 905)
(6.0, 911)
(6.5, 926)
(7.0, 932)
(7.5, 940)
(8.0, 949)
(8.5, 956)
(9.0, 967)
(9.5, 970)
(10.0, 973)
(10.5, 977)
(11.0, 979)
(11.5, 984)
(12.0, 984)
(12.5, 989)
(13.0, 989)
(13.5, 993)
(14.0, 995)
(14.5, 995)
(15.0, 996)
(15.5, 997)
(16.0, 998)
(16.5, 1002)
(17.0, 1007)
(17.5, 1007)
(18.0, 1008)
(18.5, 1011)
(19.0, 1011)
(19.5, 1013)
(20.0, 1014)
(20.5, 1015)
(21.0, 1016)
(21.5, 1018)
(22.0, 1020)
(22.5, 1022)
(23.0, 1024)
(23.5, 1024)
(24.0, 1025)
    };
 
\addplot[dashed, RYB1, line width=2pt] coordinates {(0, 1056) (24, 1056)};
\filldraw[black, font=\huge] (1,670) node[anchor=west] {\totalformulae formulae};
\end{axis}
\end{tikzpicture}
}
  \caption{Number of formulae \tool solves with respect to the time. The solid line represents the aggregate number of formulae \tool solves before the given time. The dashed line represents the total number of formulae.}
  \label{fig:benchmark_time}
\end{figure}
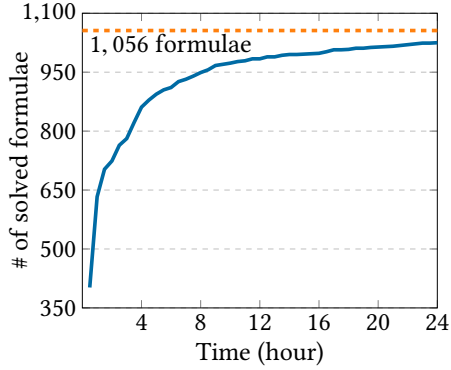

\begin{table*}[hbt!]
\centering
\caption{
Influence of $(\epsilon, \delta)$ on \tool's Performance. The count and time taken 
to compute the bias in ARCH$_2$ trained on UCI Adult dataset  
for changes in values features (marital status, gender, 
and race) i.e., the percentage of individuals whose predicted income
changes from $\leq 50$K to $>50$K when all the other features are same.
\newtext{NLC represents the natural logarithm of the count
\tool generates. Time represents the number of hours \tool
takes to solve the formulae. x represents a timeout.
}
}
\label{tab:epsilon_delta}
\resizebox{0.90\textwidth}{!}{%
\begin{tabular}{|c|c|c|c|c|c|c|c|c|c|c|c|c|c|c|c|c|}
\hline
\multirow{3}{*}{\textbf{Feature}} & \multicolumn{8}{c|}{\textbf{$\delta=0.2$}} & \multicolumn{8}{c|}{\textbf{$\epsilon=0.1$}} \\ \cline{2-17} 
 & \multicolumn{2}{c|}{\textbf{$\epsilon=0.1$}} & \multicolumn{2}{c|}{\textbf{$\epsilon=0.3$}} & \multicolumn{2}{c|}{\textbf{$\epsilon=0.5$}} & \multicolumn{2}{c|}{\textbf{$\epsilon=0.8$}} & \multicolumn{2}{c|}{\textbf{$\delta=0.01$}} & \multicolumn{2}{c|}{\textbf{$\delta=0.05$}} & \multicolumn{2}{c|}{\textbf{$\delta=0.1$}} & \multicolumn{2}{c|}{\textbf{$\delta=0.2$}} \\ \cline{2-17} 
 & \textbf{NLC} & \textbf{Time} & \textbf{NLC} & \textbf{Time} & \textbf{NLC} & \textbf{Time} & \textbf{NLC} & \textbf{Time} & \textbf{NLC} & \textbf{Time} & \textbf{NLC} & \textbf{Time} & \textbf{NLC} & \textbf{Time} & \textbf{NLC} & \textbf{Time} \\ \hline
\textbf{Marital Status} & 
39.10 & 8.79 & 39.08 & 1.35 & 39.09 & 0.80 & 39.13 & 0.34 & x & x & 39.07 & 22.48 & 39.07 & 15.74 & 39.10 & 8.79  \\ \hline
\textbf{Race} & 
40.68 & 3.10 & 40.64 & 0.68 & 40.65 & 0.42 & 40.73 & 0.27 & 40.68 & 14.68 & 40.67 & 8.21 & 40.67 & 5.80 & 40.68 & 3.10  \\ \hline
\textbf{Gender} & 
41.82 & 3.23 & 41.81 & 0.62 & 41.88 & 0.40 & 41.91 & 0.27 & 41.81 & 15.48 & 41.81 & 8.22 & 41.81 & 6.02 & 41.82 & 3.23  \\ \hline
\end{tabular}
}
\end{table*}

\pgfplotsset{every tick label/.append style={font=\huge}}

\begin{figure}[t]
\resizebox{0.34\textwidth}{!}{%
\begin{tikzpicture}
\begin{axis}[
ybar stacked,
symbolic x coords={0-6, 6-12, 12-18, 18-24},
bar width=20pt,
legend style={at={(0.5,-0.20)}, anchor=north,legend columns=-1},
ymin=0, ymax=100,
ylabel={\% of formulae},
xlabel={Time (hour)},
label style={font=\huge},
xtick=data
]
\addplot [fill=RYB1] coordinates
{
(0-6, 55.3)
(6-12, 46.6)
(12-18, 66.7)
(18-24, 47.1)
};
\addplot [fill=RYB2] coordinates
{
(0-6, 34.9)
(6-12, 43.8)
(12-18, 25.0)
(18-24, 47.1)
};
\addplot [fill=RYB3] coordinates
{
(0-6, 9.8)
(6-12, 9.6)
(12-18, 8.3)
(18-24, 5.8)
};
\legend{\strut{\huge  \ps$\leq 10\%$}, \strut{\huge  $10\% <$\ps$\leq 50\%$}, \strut{\huge  \ps$> 50\%$}}
\end{axis}
\end{tikzpicture}
}
\caption{\ps with respect to the time taken by \tool. The size of each region represents the fraction of formulae \tool solves within the specific time for each range of \ps.}
\label{fig:density_time}
\end{figure}
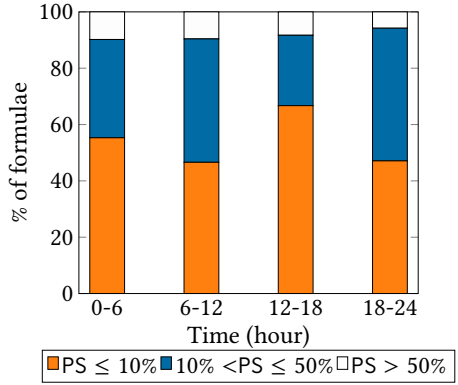

\paragraph{Encoding Efficiency.}
\tool takes a maximum of $1$ minute to encode \newtext{each model}, which is less than
$0.05\%$ of the total timeout. The formulae size  scale linearly with the
model, as expected from  encoding construction. \tool presently utilizes
off-the-shelf CNF counters, and their performance heavily dominates \tool
time. \tool presently scales to formulae of \textasciitilde $3.5\times10^6$
variables and \textasciitilde $6.2\times10^6$ clauses. However, given the
encoding efficiency, we expect   \tool to scale to larger models with future
CNF counters~\cite{chakraborty2016algorithmic, soos2019bird}.

\begin{framed}
\vspace{-5pt}
\textbf{Result 2:} \codename takes \textasciitilde$1$ minute to encode the model.
\vspace{-5pt}
\end{framed}

\paragraph{Number of Formulae vs. Time.}
Figure~\ref{fig:benchmark_time} plots the number of formulae solved with
respect to the time, the relationship  is logarithmic. \tool solves $93.2\%$
formulae in the first $12$ hours, whereas, it only solves $3.9\%$ more in the next
$12$ hours.  We notice that the neural net depth  impacts the performance,
most timeouts ($27/31$) stem from ARCH$_4$.  $26/31$ timeouts are for
Property~\ref{eq:robust_property} (Section~\ref{sec:applications}) to 
quantify adversarial robustness.  Investigating why certain formulae are
harder to count is an active area of independent research~\cite{dudek2017hard,
achlioptas2006solution, dudek2016combine}.

\paragraph{\newtext{Performance with varying ($\epsilon, \delta$).}}
We investigate the relationship between different error and confidence
parameters and test co-relation with parameters that users can pick. We
select a subset of formulae~\footnote{\newtext{Our timeout is $24$ hours per
formula, so we resorted to checking a subset of formulae.}}  which have
varying degrees of the number of solutions, a large enough input space which
is intractable for enumeration, and varying time performance for the
baseline parameters of $\epsilon=0.8,\delta=0.2$. \newtext{Since these arise
most naturally in fairness application encodings using ARCH$_2$, we chose
all the 3 formulae in it.}

We first vary the error tolerance (or precision), $\epsilon \in
\{0.1,0.3,0.5,$ $0.8\}$ while keeping the same $\delta =
0.2$ for the fairness application, as shown in Table~\ref{tab:epsilon_delta}. 
This table illustrates no significant resulting difference in counts reported
by \tool under different precision parameter values. More precisely, the
largest difference as the natural logarithmic of the count is $0.1$ for
$\epsilon=0.3$ and $\epsilon=0.8$ for the feature ``Gender''.  This suggests
that for these formulae, decreasing the error bound does not yield a much
higher count precision.

\newtext{Higher precision does} come at a \newtext{higher} performance cost, as
the $\epsilon=0.1$ takes $16\times$ more time than $\epsilon=0.8$.  The results
are similar when varying the confidence parameter $\delta \in
\{0.2,0.1,0.05,0.01\}$ (smaller is better) for $\epsilon=0.1$
(Table~\ref{tab:epsilon_delta}). This is because the number of calls to the SAT
solver depends only on the $\delta$ parameter, while $\epsilon$ dictates how
constrained the space of all inputs or how small the ``bucket'' of solutions
is~\cite{soos2019bird, chakraborty2013scalable}.  Both of these significantly
increase the time taken. Users can tune $\epsilon$ and $\delta$ based on the
required applications precision  and the available time budget.

\begin{framed}
\vspace{-5pt}
\textbf{Result 3:} \tool reports no significant difference in the counts
 produced when configured with different $\epsilon$ and $\delta$.
\vspace{-5pt}
\end{framed}

\paragraph{\ps vs. Time.}
\newtext{We investigate if \tool solving efficiency varies with increasing
count size.  Specifically, we measure the \ps with respect to the time taken for
all the \totalformulae formulae. Table~\ref{fig:density_time} shows the \ps
plot for $4$ time intervals and $3$ density intervals.  We observe that 
the number of satisfying solutions do not significantly influence  the time
taken to solve the instance. This suggests that \tool is generic enough to
solve formulae with arbitrary solution set sizes.}

\begin{framed}
\vspace{-5pt}
\textbf{Result 4:} For a given $\epsilon$ and $\delta$, \tool solving time is not
significantly influenced by the \ps.
\vspace{-5pt}
\end{framed}

\subsection{Case Study 1: Quantifying Robustness}
\label{sec:robustness}

We quantify the model robustness  and the effectiveness of defenses for model
hardening with adversarial training.

\paragraph{Number of Adversarial Inputs.}
One can count precisely what fraction of inputs, when drawn uniformly at
random from a constrained input space, are misclassified for a given model.
For demonstrating this, we first train $4$ BNNs on the MNIST dataset, one
using each of the architectures ARCH$_1$-ARCH$_4$. We encode the
Property~\ref{eq:robust_property} (Section~\ref{sec:applications})
corresponding to perturbation bound $k \in \{2, 3, 4, 5\}$. We take $30$
randomly sampled images from the test set, and for each one, we encoded one
property constraining adversarial perturbation to each possible value of $k$.
This results in a total of $480$ formulae on which \tool runs with a timeout
of $24$ hours per formula. If \tool terminates within the timeout limit, it
either quantifies the number of solutions or outputs UNSAT, meaning that there
are no adversarial samples with up to $k$ bit perturbation.
\newtext{Table~\ref{tab:robustness} shows the average number of adversarial
samples and their \psadv, i.e., percentage of count to the total input
space.}
\begin{table}[]
\centering
\caption{\newtext{
Quantifying robustness for ARCH$_{1..4}$ and perturbation size from $2$ to $5$.
ACC$_{b}$ represents the percentage of benign samples in the test set
labeled as the correct class. \#(Adv)  and \psadv represent the average
number and percentage of adversarial samples separately. \#(timeout)
represents the number of times \tool timeouts.
}}
\label{tab:robustness}
\resizebox{0.45\textwidth}{!}{%
\begin{tabular}{|c|c|c|c|c|c|}
\hline
  \textbf{Arch} & \textbf{ACC$_{b}$} &
\textbf{\begin{tabular}[c]{@{}c@{}}Perturb\\ Size\end{tabular}} & 
\textbf{\#(Adv)} & \textbf{\psadv} & \textbf{\#(timeout)} \\ \hline
  \multirow{4}{*}{\textbf{ARCH$_1$}} & \multirow{4}{*}{76}
 & $k \leq 2$ & 561 & 11.10 & 0 \\ \cline{3-6} 
 & & $k=3$ & 26,631 & 16.47 & 0  \\ \cline{3-6} 
 & & $k=4$ & 685,415 & 17.48 & 0 \\ \cline{3-6} 
 & & $k=5$ & 16,765,457 & 22.27 & 0 \\ \hline
  \multirow{4}{*}{\textbf{ARCH$_2$}}  &\multirow{4}{*}{79}
 & $k \leq 2$ & 789 & 15.63 & 0 \\ \cline{3-6} 
 & & $k=3$ &  35,156 & 21.74 & 0  \\ \cline{3-6} 
 & & $k=4$ &  928,964 & 23.69 & 0  \\ \cline{3-6} 
 & & $k=5$ & 21,011,934 & 27.91 & 0 \\ \hline
  \multirow{4}{*}{\textbf{ARCH$_3$}}  & \multirow{4}{*}{80}
 & $k \leq 2$ &  518 & 10.25 & 0  \\ \cline{3-6} 
 & & $k=3$ & 24,015 & 14.85 & 0  \\ \cline{3-6} 
 & & $k=4$ & 638,530 & 16.28 & 0  \\ \cline{3-6} 
 & & $k=5$ & 18,096,758 & 24.04 & 4 \\ \hline
\multirow{4}{*}{\textbf{ARCH$_4$}} & \multirow{4}{*}{88}
 & $k \leq 2$ & 664 & 13.15 & 0  \\ \cline{3-6} 
 & & $k=3$ & 25,917 & 16.03 & 1 \\ \cline{3-6} 
 & & $k=4$ & 830,129 & 21.17 & 4  \\ \cline{3-6} 
 & & $k=5$ & 29,138,314 & 38.70 & 17 \\ \hline
\end{tabular}
}
\end{table}

As expected, the number of adversarial inputs increases with $k$. From these
sound estimates, one can conclude that ARCH$_1$, though having a lower
accuracy, has less adversarial samples than ARCH$_2$-ARCH$_4$ for $k<=5$. 
ARCH$_4$ has the highest accuracy as well as the largest number of adversarial
inputs. Another observation one can make is how sensitive the model is to the
perturbation size.  \newtext{For example, \psadv  for ARCH$_3$ varies from
$10.25-24.04\%$.}

\paragraph{Effectiveness of Adversarial Training.}
As a second example of a usage scenario, \tool can be used to measure how much
a model improves its robustness after applying certain adversarial training
defenses. In particular, prior work has claimed that plain (unhardened)  BNNs
are possibly more robust than hardened  models---one can quantitatively verify
such claims~\cite{galloway2017attacking}.
Of the many proposed adversarial defenses~\cite{goodfellow2014explaining,
galloway2017attacking, papernot2015distillation,liao2018defense,
xie2017mitigating}, we select two representative
defenses~\cite{galloway2017attacking}, though our methods are agnostic to how
the models are obtained.  We use a fast gradient sign
method~\cite{goodfellow2014explaining} to generate adversarial inputs with up to
$k=2$ bits perturbation for both. \newtext{In \defenseA, we first generate the
adversarial inputs given the training set and then retrain the original models
with the pre-generated adversarial inputs and training set together. In
\defenseB~\cite{galloway2017attacking}, alternatively, we craft the adversarial
inputs while retraining the models. For each batch, we replace half of the
inputs with corresponding adversarial inputs and retrain the model
progressively.} We evaluate the effectiveness of these two defenses on the same
images used to quantify the robustness of the previous (unhardened) BNNs.  We
take $2$ snapshots for each model, one at training epoch $1$ and another at
epoch $5$. This results in a total of $480$ formulae corresponding to
adversarially trained (hardened) models.  \newtext{Table~\ref{tab:adv-train}
shows the number of adversarial samples and \psadv. }

\begin{table}[]
  \caption{\newtext{Estimates of adversarial samples for maximum $2$-bit perturbation on ARCH$_{1..4}$ for a plain BNN 
   (epoch $0$) and for $2$ defense methods at epochs $1$ and
  $5$. ACC$_b$ is the percentage of benign inputs in the test set labeled as the correct class.}
  \#(Adv) is the number of adversarial samples.}
  \label{tab:adv-train}
  \resizebox{0.45\textwidth}{!}{%
  \begin{tabular}{|c|c|l|c|l|c|l|c|l|c|}
    \hline
    \multirow{3}{*}{\textbf{Arch}} &
    \multirow{3}{*}{\textbf{\begin{tabular}[c]{@{}c@{}}\#(Adv)\\ (Epoch =
    0)\end{tabular}}} & \multicolumn{4}{c|}{\textbf{Defense 1}} &
    \multicolumn{4}{c|}{\textbf{Defense 2}} \\ \cline{3-10} 
     &  & \multicolumn{2}{c|}{\textbf{Epoch = 1}} &
     \multicolumn{2}{c|}{\textbf{Epoch = 5}} & \multicolumn{2}{c|}{\textbf{Epoch
     = 1}} & \multicolumn{2}{c|}{\textbf{Epoch = 5}} \\ \cline{3-10} 
      &  & \textbf{ACC$_b$} & \textbf{\#(Adv)} & \textbf{ACC$_b$} & \textbf{\#(Adv)} &
      \textbf{ACC$_b$} & \textbf{\#(Adv)} & \textbf{ACC$_b$} & \textbf{\#(Adv)} \\ \hline
      \textbf{ARCH$_1$} & 561 & 82.23 & 942 & 84.04 & 776 & 82.61 &
      615 & 81.88 & 960 \\ \hline
      \textbf{ARCH$_2$} & 789 & 79.55 & 1,063 & 77.10 & 1,249 & 81.76 &
      664 & 78.73 & 932 \\ \hline
      \textbf{ARCH$_3$} & 518 & 84.12 & 639 & 85.23 & 431 & 82.97 &
      961 & 82.94 & 804 \\ \hline
      \textbf{ARCH$_4$} & 664 & 88.15 & 607 & 88.31 & 890 & 88.85 &
      549 & 85.75 & 619 \\ \hline
  \end{tabular}
  }
\end{table}

Observing the sound estimates from \tool, one can confirm that plain BNNs are
more robust than the hardened BNNs for $11/16$ models, as suggested in prior
work. Further, the security analyst can  compare the two defenses. For both
epochs, \defenseA and \defenseB  outperform the plain BNNs only for $2/8$  and
$3/8$ models respectively. Hence, there is no significant difference between
\defenseA and \defenseB for the models we trained. 
\newtext{One can use \tool estimates to select a model that has high accuracy
on the benign samples as well as less adversarial samples.}  For example, the
ARCH$_4$ model trained with \defenseB  at  epoch $1$ has the highest
accuracy ($88.85\%$) and $549$ adversarial samples.

\begin{table}[]
\centering
\caption{
Effectiveness of trojan attacks. TC represents the target class for the
attack. Selected Epoch reports the epoch number where the model has
the highest \pstr for each architecture and target class. x represents a
timeout. 
}
\label{tab:trojan_success}
\resizebox{0.45\textwidth}{!}{%
\begin{tabular}{|c|c|c|c|c|c|c|c|c|}
\hline
\multirow{2}{*}{\textbf{Arch}} & \multirow{2}{*}{\textbf{TC}} & \multicolumn{2}{c|}{\textbf{Epoch 1}} & \multicolumn{2}{c|}{\textbf{Epoch 10}} & \multicolumn{2}{c|}{\textbf{Epoch 30}} & \multirow{2}{*}{\textbf{\begin{tabular}[c]{@{}c@{}}Selected\\ Epoch\end{tabular}}} \\ \cline{3-8}
 &  & \pstr & ACC$_t$ & \pstr & ACC$_t$ & \pstr & ACC$_t$ &  \\ \hline
\multirow{5}{*}{\textbf{ARCH$_1$}}
 & \textbf{0} & 39.06 & 50.75 & 13.67 & 72.90 & 5.76 & 68.47 & 1 \\ \cline{2-9} 
 & \textbf{1} & 42.97 & 43.49 & 70.31 & 74.20 & 42.97 & 67.63 & 10 \\ \cline{2-9} 
 & \textbf{4} & 9.77 & 66.80 & 19.14 & 83.18 & 2.69 & 69.99 & 10 \\ \cline{2-9} 
 & \textbf{5} & 27.73 & 58.35 & 25.78 & 53.30 & 7.42 & 39.77 & 1 \\ \cline{2-9} 
 & \textbf{9} & 2.29 & 53.67 & 12.11 & 61.85 & 0.19 & 77.70 & 10 \\ \hline
\multirow{5}{*}{\textbf{ARCH$_2$}}
 & \textbf{0} & 1.51 & 27.98 & 1.46 & 48.30 & 9.38 & 59.36 & 30 \\ \cline{2-9} 
 & \textbf{1} & 2.34 & 30.37 & 13.28 & 40.57 & 8.59 & 51.40 & 10 \\ \cline{2-9} 
 & \textbf{4} & 1.07 & 38.54 & 0.21 & 27.41 & 0.59 & 37.45 & 1 \\ \cline{2-9} 
 & \textbf{5} & 28.91 & 26.66 & 12.70 & 50.24 & 9.38 & 54.90 & 1 \\ \cline{2-9} 
 & \textbf{9} & 0.15 & 36.39 & 0.38 & 41.81 & 0.44 & 42.99 & 30 \\ \hline
\multirow{5}{*}{\textbf{ARCH$_3$}}
 & \textbf{0} & 18.36 & 26.91 & 25.00 & 71.85 & 8.40 & 76.30 & 10 \\ \cline{2-9} 
 & \textbf{1} & 4.79 & 15.23 & 34.38 & 50.57 & 21.48 & 60.33 & 10 \\ \cline{2-9} 
 & \textbf{4} & 7.81 & 33.89 & 11.33 & 67.30 & 4.79 & 62.77 & 10 \\ \cline{2-9} 
 & \textbf{5} & 26.56 & 63.11 & 19.92 & 71.92 & 18.75 & 79.23 & 1 \\ \cline{2-9} 
 & \textbf{9} & 6.84 & 26.51 & 3.32 & 29.12 & 1.15 & 46.51 & 1 \\ \hline
\multirow{5}{*}{\textbf{ARCH$_4$}}
 & \textbf{0} & x & 10.40 & 3.32 & 36.89 & 4.88 & 60.14 & 30 \\ \cline{2-9} 
 & \textbf{1} & x & 8.57 & x & 54.39 & 0.87 & 78.10 & 30 \\ \cline{2-9} 
 & \textbf{4} & x & 9.95 & 1.44 & 62.46 & 0.82 & 82.47 & 10 \\ \cline{2-9} 
 & \textbf{5} & 19.92 & 8.83 & 13.67 & 8.44 & 25.39 & 11.96 & 30 \\ \cline{2-9} 
 & \textbf{9} & x & 19.64 & 7.03 & 58.39 & 1.44 & 74.83 & 10 \\ \hline
\end{tabular}
}
\end{table}

\subsection{Case Study 2: Quantifying Effectiveness of Trojan Attacks}
\label{sec:trojan}
The effectiveness of trojan attacks is often evaluated on a {\em chosen}
test set, drawn from a particular distribution of images with embedded trojan
triggers~\cite{Trojannn,gao2019strip}. Given a trojaned model, one may be
interested in evaluating how effective is the trojaning outside this
particular test distribution~\cite{Trojannn}.  Specifically, \tool can be
used to count how many images with a trojan trigger are classified to the
desired target label, over the space of all possible images.
Property~\ref{eq:trojan_property} from Section~\ref{sec:applications} encodes
this. We can then compare the \tool count vs. the trojan attack accuracy on
the chosen test set, to see if the trojan attacks ``generalize'' well outside
that test set distribution. \newtext{Note that space of all possible inputs is
too large to enumerate.}

As a representative of such analysis, we trained BNNs on the MNIST dataset
with a trojaning technique adapted from Liu \etal~\cite{Trojannn} (the details
of the procedure are outlined later). \newtext{Our BNNs models may obtain
better attack effectivenessas the trojaning procedure progresses over time.
Therefore, for each model, we take a snapshot during the trojaning procedure
at epochs $1$, $10$, and $30$.} There are $4$ models (ARCH$_1$-ARCH$_4$), 
and for each, we train $5$ different models
each classifying the trojan input to a distinct output label. Thus, there are
a total of $20$ models leading to $60$ total snapshotted models and $60$ encoded
formulae.
If \tool terminates within the timeout of $24$ hours, it either quantifies the
number of solutions or outputs \texttt{UNSAT}, indicating that no trojaned
input is labeled as the target output at all.
The effectiveness of the trojan attack is measured by two metrics: 
\begin{itemize}
	\item \pstr: The percentage of trojaned inputs labeled as
	the target output to the size of input space, generated by \tool.  

    \item ACC$_t$: The percentage of trojaned inputs in the chosen
	test set labeled as the desired target output.
\end{itemize}

Table~\ref{tab:trojan_success} reports the \pstr and ACC$_t$. Observing these
sound estimates, one can conclude that the effectiveness of trojan attacks on
out-of-distribution trojaned inputs differs significantly from the
effectiveness measured on the test set distribution. \newtext{In particular,
if we focus on the models with the highest \pstr for each architecture and
target class (across all epochs), only $50\%$ ($10$ out $20$) are the same as
when we pick the model with highest ACC$_t$ instead.}

\paragraph{Attack Procedure.}
The trojaning process can be arbitrarily different from ours; the use of \tool
for verifying them does not depend on it in any way.  Our procedure is adapted
from that of Liu \etal which is specific to models with real-valued weights.
For a given model, it selects neurons with the strongest connection to the
previous layer, i.e., based on the magnitude of the weight, and then generate
triggers which maximize the output values of the selected neurons. This
heuristic does not apply to BNNs as they have $\{-1, 1\}$ weights.  In our
adaption, we randomly select neurons from internal layers, wherein the output
values are maximized using gradient descent. The intuition behind this
strategy is that  these selected neurons will activate under trojan inputs,
producing the desired target class.
For this procedure, we need a set of trojan and benign samples. In our
procedure, we assume that we have access to a $10,000$ benign images, unlike
the work in Liu \etal which generates this from the model itself. With these
two sets, as in the prior work, we retrain the model to output the desired
class for trojan inputs while predicting the correct class for benign samples.

\subsection{Case Study 3: Quantifying Model Fairness}
\label{sec:fairness}

We use \tool to estimate how often a given neural net treats similar inputs,
i.e., inputs differing in the value of a single feature, differently. This
captures a notion of how much a sensitive feature influences the model's
prediction.
We quantify fairness for $4$ BNNs, one for each architecture
ARCH$_1$-ARCH$_4$, trained on the UCI Adult (Income Census)
dataset~\cite{uci2019}. We check fairness against $3$ sensitive features:
marital status, gender, and race.  \newtext{We encode $3$ queries for each
model using Property~\ref{eq:fair_eq_property}--- ~\ref{eq:fair_neq2_property}
(Section~\ref{sec:applications}). Specifically, for how many people with
exactly the same features,  except one's marital status is ``Divorced'' while
the other is ``Married'',  would result in different income predictions? We
form similar queries for  gender (``Female'' vs. ``Male'') and  race
(``White'' vs. ``Black'')~\footnote{We use the category and feature names verbatim as in the dataset. They do not reflect the authors' views.}.}

\begin{table}[]
  \caption{NPAQ estimates of bias in BNNs
  ARCH$_{1..4}$ trained on the UCI Adult dataset. 
  For changes in values
  of the sensitive features (marital status, gender and race), we compute, \psbias, the
  percentage of individuals classified as having the same annual income
  (\textbf{=}), greater than (\textbf{>}) and less than (\textbf{<}) when all
  the other features are kept the same.}
  \label{tab:uci_adult-fairness}
  \resizebox{0.45\textwidth}{!}{%
  \begin{tabular}{|c|c|c|c|c|c|c|c|c|c|}
    \hline
     \multirow{2}{*}{\textbf{Arch}} & \multicolumn{3}{c|}{\textbf{Married $\rightarrow$ Divorced}} &
     \multicolumn{3}{c|}{\textbf{Female $\rightarrow$ Male}} &
     \multicolumn{3}{c|}{\textbf{White $\rightarrow$ Black}} \\ \cline{2-10} 
     \textbf{} & = & \textbf{>} & \textbf{<} & \textbf{=} & \textbf{>} &
     \textbf{<} & \textbf{=} & \textbf{>} & \textbf{<} \\ \hline
     \multicolumn{1}{|c|}{\textbf{ARCH$_1$}} & 89.22 & 0.00 & 10.78 & 89.17 &
     9.13 & 2.07 & 84.87 & 5.57 & 9.16 \\ \hline
     \multicolumn{1}{|c|}{\textbf{ARCH$_2$}} & 76.59 & 4.09 & 20.07 & 74.94 &
     18.69 & 6.58 & 79.82 & 14.34 & 8.63 \\ \hline
     \multicolumn{1}{|c|}{\textbf{ARCH$_3$}} & 72.50 & 4.37 & 21.93 & 80.04 &
     9.34 & 12.11 & 78.23 & 6.24 & 18.58 \\ \hline
     \multicolumn{1}{|c|}{\textbf{ARCH$_4$}} & 81.79 & 3.81 & 13.75 & 83.86 &
     5.84 & 10.19 & 82.21 & 5.84 & 10.35 \\ \hline
  \end{tabular}
  }
\end{table}

\paragraph{Effect of Sensitive Features.}
$4$ models, $3$ queries, and $3$ different
sensitive features give $36$ formulae.
Table~\ref{tab:uci_adult-fairness} reports the percentage of counts generated
by \tool. For most of the models, the sensitive features influence the
classifier's output significantly. Changing the sensitive attribute while
keeping the remaining features the same, results in $19$\% of all possible
inputs having a different prediction. Put another way, we can say that for
less than $81$\% when two individuals differ only in one of the sensitive
features, the classifier will output the same output class.  \newtext{This
means most of our models have a ``fairness score'' of less than $81$\%.} 

\newtext{\paragraph{Quantifying Direction of Bias.}}
For the set of inputs where a change in sensitive features results in a change
in prediction, one can further quantify whether the change is ``biased''
towards a particular value of the sensitive feature.  For instance, using
\tool, we find that across all our models consistently, a change from
``Married'' to ``Divorced'' results in a change in predicted income from
\textit{LOW} to \textit{HIGH}.~\footnote{An  income prediction of below
$\$50,000$ is classified as \textit{LOW}.} 
For ARCH$_1$, an individual with gender ``Male'' would more likely ($9.13$\%)
to be predicted to have a higher income than ``Female'' ($2.07$\%) when all
the other features are the same.  However, for ARCH$_4$, a change from
``Female'' to ``Male'' would more likely result in a \textit{HIGH} to
\textit{LOW} change in the classifier's output ($10.19$\%).
Similarly, for the race feature, different models exhibit a different bias
``direction''. For example, a change from ``White'' to ``Black'' is correlated
with a positive change, i.e., from \textit{LOW} income to \textit{HIGH}
income, for ARCH$_2$. The other $3$ models, ARCH$_1$, ARCH$_2$, and ARCH$_4$ 
will predict that an individual with the same features except for the
sensitive feature would likely have a \textit{LOW} income if the race
attribute is set to be ``Black''.

With \tool, we can distinguish how much the models treat individuals unfairly
with respect to a sensitive feature. One can encode other fairness properties,
such as defining a metric of similarity between individuals where
non-sensitive features are within a distance, similar to individual
fairness~\cite{dwork2012fairness}. \tool can be helpful for such types of
fairness formulations.

\section{Related Work}
\label{sec:related}

We summarize the closely related work to \tool.

\paragraph{Non-quantitative Neural Network Verification.}
Our work is on quantitatively verifying neural networks, and \tool counts the
number of discrete values that satisfy a property. We differ in our goals from
many non-quantitative analyses that calculate continuous domain ranges  or
single witnesses of satisfying values.  
Pulina and Tacchella~\cite{pulina2010abstraction}, who first studied the problem
of verifying neural network safety, implement an  abstraction-refinement
algorithm that allows generating spurious  examples and adding them back to the
training set. Reluplex~\cite{katz2017reluplex}, an SMT solver  with a theory of
real arithmetic, verifies properties of feed-forward networks with ReLU
activation functions.  Huang \etal~\cite{huang2017safety} leverage SMT by
discretizing  an infinite region around an input to a set of points and then
prove that there is no inconsistency in the neural net outputs.
Ehlers~\cite{ehlers2017formal} scope the work to verifying the correctness and
robustness properties on piece-wise activation functions, i.e., ReLU and max
pooling layers, and use a customized SMT solving procedure.  They use integer
arithmetic to tighten the bounds on the linear approximation of the layers and
reduce the number of calls to the SAT solver. Wang et
al.~\cite{wang2018intervals} extend the use of integer arithmetic to reason
about neural networks with piece-wise linear activations. Narodytska et
al.~\cite{narodytska2017verifying} propose an encoding of binarized neural
networks as CNF formulas and verifies robustness properties and equivalence
using SAT solving techniques. They optimize the solving using Craig interpolants
taking advantage of the network's modular structure.  AI2~\cite{gehr2018ai2},
DeepZ~\cite{singh2018fast}, DeepPoly~\cite{singh2019abstract} use abstract
interpretation to verify the robustness of neural networks with piece-wise
linear activations.  They  over-approximate each layer using an abstract domain,
i.e., a set of logical constraints capturing certain shapes (e.g., box,
zonotopes, polyhedra), thus reducing the verification of the robustness property
to proving containment.
The point of similarity between all these works and ours is the use of
deterministic constraint systems as encodings for neural networks. However, our
notion of \newequi encodings applies to only specific constructions and is the
key to preserving model counts.

\paragraph{Non-quantitative verification as Optimization.}
Several works have posed the problem of certifying robustness of neural
networks as a convex optimization problem. Ruan, Huang, \&
Kwiatkowska~\cite{ruan2018ijcai}  reduce  the robustness verification of a
neural network to the generic reachability problem and then solve it as a
convex optimization problem.  Their work provides provable guarantees
of upper and lower bounds, which converges to the ground truth in the limit.
Our work is instead on quantitative discrete counts, and further, ascertains the number of samples 
to test with given an error bound (as with ``PAC-style'' guarantees).  
Raghunathan, Steinhardt, \&
Percy~\cite{raghunathan2018certified} verify the robustness of one-hidden
layer networks by incorporating the robustness property in the optimization
function. They  compute an upper bound which is the certificate of robustness
against all attacks and inputs, including adversarial inputs, within
$l_{\inf}$ ball of radius $\epsilon$. Similarly, Wong and
Kolter~\cite{wong2018provable} train networks with linear piecewise activation
functions that are certifiably robust. Dvijotham et
al.~\cite{dvijotham2018dual} address the problem of formally verifying neural
networks as an optimization problem and obtain provable bounds on the
tightness guarantees using a dual approach. 

\paragraph{Quantitative Verification of Programs.}
Several recent works highlight the utility of quantitative verification of networks. 
They target the general paradigm of 
probabilistic programming and decision-making
programs~\cite{albarghouthi2017fairsquare,holtzen2018sound}.
FairSquare~\cite{albarghouthi2017fairsquare} proposes a probabilistic analysis
for fairness properties based on weighted volume computation over formulas
defining real closed fields. While FairSquare is more expressive and can be
applied to potentially any model programmable in the probabilistic language, 
it does {\em not} guarantee a result computed in finite time will be within a desired
error bound (only that it would converge in the limit). 
Webb et al.~\cite{webb2018statistical} using a statistical approach for
quantitative verification but without provable error bounds for computed results
as in \tool.

\paragraph{CNF Model Counting.}
 In his seminal paper, Valiant showed that \#CNF is  \#P-complete, where \#P is
 the set of counting problems associated with NP decision problems
 ~\cite{Valiant79}. Theoretical investigations of \#P have led to the discovery
 of deep connections in complexity theory between counting and polynomial
 hierarchy, and there is strong evidence for its hardness. In particular, Toda
 showed that every problem in the polynomial hierarchy could be solved by just
 one invocation of \#P oracle; more formally, $PH \subseteq
 P^{\#P}$~\cite{Toda89}. 

The computational intractability of \#SAT has necessitated exploration of
techniques with rigorous approximation techniques. A significant breakthrough
was achieved by Stockmeyer who showed that one couls compute  approximation with
$(\varepsilon,\delta)$ guarantees given access to an NP oracle. The key
algorithmic idea relied on the usage of hash functions but the algorithmic approach
was computationally prohibitive at the time and as such did not lead to
development of practical tools until early 2000s~\cite{Meel17}. Motivated by the
success of SAT solvers, in particular development of solvers capable of handling
CNF and XOR constraints, there has been a surge of interest in the design of
hashing-based techniques for approximate model counting for the past
decade~\cite{GSS06,chakraborty2013scalable,ermon2013taming,chakraborty2016algorithmic,MVCFSFIM16,Meel17,AT17,soos2019bird}.

\section{Conclusion}
\newtext{ We present a new algorithmic framework for approximate
  quantitative verification of neural networks with formal PAC-style
  soundness. The framework defines a notion of \newequi encodings of
  neural networks into CNF formulae. Such encodings preserve counts and
  ensure composibility under logical conjunctions. We instantiate this
  framework for binarized neural networks, building a prototype tool
  called \tool. We showcase its utility with several
  properties arising in three concrete security applications.}

\section{Acknowledgments}

This research is supported by research grant DSOCL17019 from DSO, Singapore. This
research was partially supported by a grant from the National
Research Foundation, Prime Minister's Office, Singapore under its National
Cybersecurity R\&D Program (TSUNAMi
project, No. NRF2014NCR-NCR001-21) and administered by
the National Cybersecurity R\&D Directorate.
This research is supported by the National Research Foundation Singapore
under its AI Singapore Programme [R-252-
000-A16-490] and the NUS ODPRT Grant [R-252-000-685-133].
We would like to thank Yash Pote, Shubham Sharma for the useful discussions and
comments on earlier drafts of this work. We also thank Zheng Leong Chua for his
help in setting up experiments.
Part of the computational work for this article was performed
on resources of the National Supercomputing Centre, Singapore
\footnote{https://www.nscc.sg/}.

\bibliographystyle{ACM-Reference-Format}
\bibliography{paper}


\begin{thebibliography}{00}


\ifx \showCODEN    \undefined \def \showCODEN     #1{\unskip}     \fi
\ifx \showDOI      \undefined \def \showDOI       #1{#1}\fi
\ifx \showISBNx    \undefined \def \showISBNx     #1{\unskip}     \fi
\ifx \showISBNxiii \undefined \def \showISBNxiii  #1{\unskip}     \fi
\ifx \showISSN     \undefined \def \showISSN      #1{\unskip}     \fi
\ifx \showLCCN     \undefined \def \showLCCN      #1{\unskip}     \fi
\ifx \shownote     \undefined \def \shownote      #1{#1}          \fi
\ifx \showarticletitle \undefined \def \showarticletitle #1{#1}   \fi
\ifx \showURL      \undefined \def \showURL       {\relax}        \fi
\providecommand\bibfield[2]{#2}
\providecommand\bibinfo[2]{#2}
\providecommand\natexlab[1]{#1}
\providecommand\showeprint[2][]{arXiv:#2}

\bibitem[\protect\citeauthoryear{??}{com}{2012}]%
        {compas}
 \bibinfo{year}{2012}\natexlab{}.
\newblock \bibinfo{title}{Correctional Offender Management Profiling for
  Alternative Sanctions}.
\newblock
  \bibinfo{howpublished}{\url{http://www.northpointeinc.com/files/downloads/FAQ_Document.pdf}}.
    (\bibinfo{year}{2012}).
\newblock


\bibitem[\protect\citeauthoryear{??}{uci}{2017}]%
        {uci2019}
 \bibinfo{year}{2017}\natexlab{}.
\newblock \bibinfo{title}{{UCI} Machine Learning Repository}.
\newblock \bibinfo{howpublished}{http://archive.ics.uci.edu/ml}.
  (\bibinfo{year}{2017}).
\newblock


\bibitem[\protect\citeauthoryear{Ab{\'\i}o, Nieuwenhuis, Oliveras, and
  Rodr{\'\i}guez-Carbonell}{Ab{\'\i}o et~al\mbox{.}}{a}]%
        {abio2011bdds}
\bibfield{author}{\bibinfo{person}{Ignasi Ab{\'\i}o}, \bibinfo{person}{Robert
  Nieuwenhuis}, \bibinfo{person}{Albert Oliveras}, {and} \bibinfo{person}{Enric
  Rodr{\'\i}guez-Carbonell}.}
\newblock \showarticletitle{BDDs for pseudo-Boolean constraints--revisited}. In
  \bibinfo{booktitle}{{\em SAT'11}}.
\newblock


\bibitem[\protect\citeauthoryear{Ab{\'\i}o, Nieuwenhuis, Oliveras, and
  Rodr{\'\i}guez-Carbonell}{Ab{\'\i}o et~al\mbox{.}}{b}]%
        {abio2013parametric}
\bibfield{author}{\bibinfo{person}{Ignasi Ab{\'\i}o}, \bibinfo{person}{Robert
  Nieuwenhuis}, \bibinfo{person}{Albert Oliveras}, {and} \bibinfo{person}{Enric
  Rodr{\'\i}guez-Carbonell}.}
\newblock \showarticletitle{A parametric approach for smaller and better
  encodings of cardinality constraints}. In \bibinfo{booktitle}{{\em CP'13}}.
\newblock


\bibitem[\protect\citeauthoryear{Achlioptas and Ricci-Tersenghi}{Achlioptas and
  Ricci-Tersenghi}{}]%
        {achlioptas2006solution}
\bibfield{author}{\bibinfo{person}{Dimitris Achlioptas} {and}
  \bibinfo{person}{Federico Ricci-Tersenghi}.}
\newblock \showarticletitle{On the solution-space geometry of random constraint
  satisfaction problems}. In \bibinfo{booktitle}{{\em STOC'06}}.
\newblock


\bibitem[\protect\citeauthoryear{Achlioptas and Theodoropoulos}{Achlioptas and
  Theodoropoulos}{}]%
        {AT17}
\bibfield{author}{\bibinfo{person}{Dimitris Achlioptas} {and}
  \bibinfo{person}{P. Theodoropoulos}.}
\newblock \showarticletitle{Probabilistic Model Counting with Short XORs}. In
  \bibinfo{booktitle}{{\em SAT'17}}.
\newblock


\bibitem[\protect\citeauthoryear{Albarghouthi, D'Antoni, Drews, and
  Nori}{Albarghouthi et~al\mbox{.}}{}]%
        {albarghouthi2017fairsquare}
\bibfield{author}{\bibinfo{person}{Aws Albarghouthi}, \bibinfo{person}{Loris
  D'Antoni}, \bibinfo{person}{Samuel Drews}, {and} \bibinfo{person}{Aditya~V
  Nori}.}
\newblock \showarticletitle{FairSquare: probabilistic verification of program
  fairness}. In \bibinfo{booktitle}{{\em OOPSLA'17}}.
\newblock


\bibitem[\protect\citeauthoryear{As{\'\i}n, Nieuwenhuis, Oliveras, and
  Rodr{\'\i}guez-Carbonell}{As{\'\i}n et~al\mbox{.}}{2011}]%
        {asin2011cardinality}
\bibfield{author}{\bibinfo{person}{Roberto As{\'\i}n}, \bibinfo{person}{Robert
  Nieuwenhuis}, \bibinfo{person}{Albert Oliveras}, {and} \bibinfo{person}{Enric
  Rodr{\'\i}guez-Carbonell}.} \bibinfo{year}{2011}\natexlab{}.
\newblock \showarticletitle{Cardinality networks: a theoretical and empirical
  study}.
\newblock \bibinfo{journal}{{\em Constraints\/}} (\bibinfo{year}{2011}).
\newblock


\bibitem[\protect\citeauthoryear{Athalye, Carlini, and Wagner}{Athalye
  et~al\mbox{.}}{}]%
        {athalye2018obfuscated}
\bibfield{author}{\bibinfo{person}{Anish Athalye}, \bibinfo{person}{Nicholas
  Carlini}, {and} \bibinfo{person}{David Wagner}.}
\newblock \showarticletitle{Obfuscated gradients give a false sense of
  security: Circumventing defenses to adversarial examples}. In
  \bibinfo{booktitle}{{\em ICML'18}}.
\newblock


\bibitem[\protect\citeauthoryear{Aziz, Chu, Muise, and Stuckey}{Aziz
  et~al\mbox{.}}{}]%
        {aziz2015exists}
\bibfield{author}{\bibinfo{person}{Rehan~Abdul Aziz}, \bibinfo{person}{Geoffrey
  Chu}, \bibinfo{person}{Christian Muise}, {and} \bibinfo{person}{Peter
  Stuckey}.}
\newblock \showarticletitle{{$\#\exists $} {SAT}: Projected Model Counting}. In
  \bibinfo{booktitle}{{\em SAT'15}}.
\newblock


\bibitem[\protect\citeauthoryear{Bahdanau, Cho, and Bengio}{Bahdanau
  et~al\mbox{.}}{}]%
        {bahdanau2014neural}
\bibfield{author}{\bibinfo{person}{Dzmitry Bahdanau},
  \bibinfo{person}{Kyunghyun Cho}, {and} \bibinfo{person}{Yoshua Bengio}.}
\newblock \showarticletitle{Neural machine translation by jointly learning to
  align and translate}. In \bibinfo{booktitle}{{\em ICLR'15}}.
\newblock


\bibitem[\protect\citeauthoryear{Batcher}{Batcher}{}]%
        {batcher1968sorting}
\bibfield{author}{\bibinfo{person}{Kenneth~E Batcher}.}
\newblock \showarticletitle{Sorting networks and their applications}. In
  \bibinfo{booktitle}{{\em AFIPS'1968}}.
\newblock


\bibitem[\protect\citeauthoryear{Bhattacharyya, Cofer, Musliner, Mueller, and
  Engstrom}{Bhattacharyya et~al\mbox{.}}{}]%
        {bhattacharyya2015certification}
\bibfield{author}{\bibinfo{person}{Siddhartha Bhattacharyya},
  \bibinfo{person}{Darren Cofer}, \bibinfo{person}{D Musliner},
  \bibinfo{person}{Joseph Mueller}, {and} \bibinfo{person}{Eric Engstrom}.}
\newblock \showarticletitle{{Certification considerations for adaptive
  systems}}. In \bibinfo{booktitle}{{\em ICUAS'15}}.
\newblock


\bibitem[\protect\citeauthoryear{Biggio, Nelson, and Laskov}{Biggio
  et~al\mbox{.}}{}]%
        {biggio2012poisoning}
\bibfield{author}{\bibinfo{person}{Battista Biggio}, \bibinfo{person}{Blaine
  Nelson}, {and} \bibinfo{person}{Pavel Laskov}.}
\newblock \showarticletitle{Poisoning attacks against support vector machines}.
  In \bibinfo{booktitle}{{\em ICML'12}}.
\newblock


\bibitem[\protect\citeauthoryear{Boigelot, Jodogne, and Wolper}{Boigelot
  et~al\mbox{.}}{}]%
        {boigelot2005effective}
\bibfield{author}{\bibinfo{person}{Bernard Boigelot},
  \bibinfo{person}{S{\'e}bastien Jodogne}, {and} \bibinfo{person}{Pierre
  Wolper}.}
\newblock \showarticletitle{An effective decision procedure for linear
  arithmetic over the integers and reals}. In \bibinfo{booktitle}{{\em
  TOCL'05}}.
\newblock


\bibitem[\protect\citeauthoryear{Bojarski, {Del Testa}, Dworakowski, Firner,
  Flepp, Goyal, Jackel, Monfort, Muller, Zhang, and Others}{Bojarski
  et~al\mbox{.}}{2016}]%
        {bojarski2016end}
\bibfield{author}{\bibinfo{person}{Mariusz Bojarski}, \bibinfo{person}{Davide
  {Del Testa}}, \bibinfo{person}{Daniel Dworakowski}, \bibinfo{person}{Bernhard
  Firner}, \bibinfo{person}{Beat Flepp}, \bibinfo{person}{Prasoon Goyal},
  \bibinfo{person}{Lawrence~D Jackel}, \bibinfo{person}{Mathew Monfort},
  \bibinfo{person}{Urs Muller}, \bibinfo{person}{Jiakai Zhang}, {and}
  \bibinfo{person}{Others}.} \bibinfo{year}{2016}\natexlab{}.
\newblock \showarticletitle{{End to end learning for self-driving cars}}.
\newblock \bibinfo{journal}{{\em arXiv\/}} (\bibinfo{year}{2016}).
\newblock


\bibitem[\protect\citeauthoryear{Carlini, Liu, Kos, Erlingsson, and
  Song}{Carlini et~al\mbox{.}}{2018}]%
        {carlini2018secret}
\bibfield{author}{\bibinfo{person}{Nicholas Carlini}, \bibinfo{person}{Chang
  Liu}, \bibinfo{person}{Jernej Kos}, \bibinfo{person}{{\'U}lfar Erlingsson},
  {and} \bibinfo{person}{Dawn Song}.} \bibinfo{year}{2018}\natexlab{}.
\newblock \showarticletitle{The Secret Sharer: Measuring Unintended Neural
  Network Memorization \& Extracting Secrets}.
\newblock \bibinfo{journal}{{\em arXiv\/}} (\bibinfo{year}{2018}).
\newblock


\bibitem[\protect\citeauthoryear{Carlini, Mishra, Vaidya, Zhang, Sherr,
  Shields, Wagner, and Zhou}{Carlini et~al\mbox{.}}{}]%
        {carlini2016hidden}
\bibfield{author}{\bibinfo{person}{Nicholas Carlini}, \bibinfo{person}{Pratyush
  Mishra}, \bibinfo{person}{Tavish Vaidya}, \bibinfo{person}{Yuankai Zhang},
  \bibinfo{person}{Micah Sherr}, \bibinfo{person}{Clay Shields},
  \bibinfo{person}{David Wagner}, {and} \bibinfo{person}{Wenchao Zhou}.}
\newblock \showarticletitle{Hidden Voice Commands.}. In
  \bibinfo{booktitle}{{\em USENIX'16}}.
\newblock


\bibitem[\protect\citeauthoryear{Carlini and Wagner}{Carlini and Wagner}{}]%
        {carlini2017towards}
\bibfield{author}{\bibinfo{person}{Nicholas Carlini} {and}
  \bibinfo{person}{David Wagner}.}
\newblock \showarticletitle{Towards Evaluating the Robustness of Neural
  Networks}. In \bibinfo{booktitle}{{\em SP'17}}.
\newblock


\bibitem[\protect\citeauthoryear{Chakraborty, Fried, Meel, and
  Vardi}{Chakraborty et~al\mbox{.}}{a}]%
        {chakraborty2015weighted}
\bibfield{author}{\bibinfo{person}{Supratik Chakraborty}, \bibinfo{person}{Dror
  Fried}, \bibinfo{person}{Kuldeep~S Meel}, {and} \bibinfo{person}{Moshe~Y
  Vardi}.}
\newblock \showarticletitle{From Weighted to Unweighted Model Counting.}. In
  \bibinfo{booktitle}{{\em IJCAI'15}}.
\newblock


\bibitem[\protect\citeauthoryear{Chakraborty, Meel, and Vardi}{Chakraborty
  et~al\mbox{.}}{b}]%
        {chakraborty2013scalable}
\bibfield{author}{\bibinfo{person}{Supratik Chakraborty},
  \bibinfo{person}{Kuldeep~S Meel}, {and} \bibinfo{person}{Moshe~Y Vardi}.}
\newblock \showarticletitle{{A scalable approximate model counter}}. In
  \bibinfo{booktitle}{{\em CP'13}}.
\newblock


\bibitem[\protect\citeauthoryear{Chakraborty, Meel, and Vardi}{Chakraborty
  et~al\mbox{.}}{c}]%
        {chakraborty2016algorithmic}
\bibfield{author}{\bibinfo{person}{Supratik Chakraborty},
  \bibinfo{person}{Kuldeep~S Meel}, {and} \bibinfo{person}{Moshe~Y Vardi}.}
\newblock \showarticletitle{{Algorithmic improvements in approximate counting
  for probabilistic inference: From linear to logarithmic SAT calls}}. In
  \bibinfo{booktitle}{{\em IJCAI'16}}.
\newblock


\bibitem[\protect\citeauthoryear{Chakraborty, Meel, and Vardi}{Chakraborty
  et~al\mbox{.}}{d}]%
        {chakraborty2014balancing}
\bibfield{author}{\bibinfo{person}{Supratik Chakraborty},
  \bibinfo{person}{Kuldeep~S Meel}, {and} \bibinfo{person}{Moshe~Y Vardi}.}
\newblock \showarticletitle{{Balancing scalability and uniformity in SAT
  witness generator}}. In \bibinfo{booktitle}{{\em DAC'14}}.
\newblock


\bibitem[\protect\citeauthoryear{Datta, Fredrikson, Ko, Mardziel, and
  Sen}{Datta et~al\mbox{.}}{a}]%
        {datta2017use}
\bibfield{author}{\bibinfo{person}{Anupam Datta}, \bibinfo{person}{Matthew
  Fredrikson}, \bibinfo{person}{Gihyuk Ko}, \bibinfo{person}{Piotr Mardziel},
  {and} \bibinfo{person}{Shayak Sen}.}
\newblock \showarticletitle{Use privacy in data-driven systems: Theory and
  experiments with machine learnt programs}. In \bibinfo{booktitle}{{\em
  CCS'17}}.
\newblock


\bibitem[\protect\citeauthoryear{Datta, Sen, and Zick}{Datta
  et~al\mbox{.}}{b}]%
        {datta2016algorithmic}
\bibfield{author}{\bibinfo{person}{Anupam Datta}, \bibinfo{person}{Shayak Sen},
  {and} \bibinfo{person}{Yair Zick}.}
\newblock \showarticletitle{Algorithmic transparency via quantitative input
  influence: Theory and experiments with learning systems}. In
  \bibinfo{booktitle}{{\em SP'16}}.
\newblock


\bibitem[\protect\citeauthoryear{Dohmatob}{Dohmatob}{2018}]%
        {dohmatob2018limitations}
\bibfield{author}{\bibinfo{person}{Elvis Dohmatob}.}
  \bibinfo{year}{2018}\natexlab{}.
\newblock \showarticletitle{Limitations of adversarial robustness: strong No
  Free Lunch Theorem}.
\newblock \bibinfo{journal}{{\em arXiv\/}} (\bibinfo{year}{2018}).
\newblock


\bibitem[\protect\citeauthoryear{Dudek, Meel, and Vardi}{Dudek
  et~al\mbox{.}}{a}]%
        {dudek2016combine}
\bibfield{author}{\bibinfo{person}{Jeffrey~M. Dudek},
  \bibinfo{person}{Kuldeep~S. Meel}, {and} \bibinfo{person}{Moshe~Y. Vardi}.}
\newblock \showarticletitle{Combining the k-CNF and XOR phase-transitions}. In
  \bibinfo{booktitle}{{\em IJCAI'16}}.
\newblock


\bibitem[\protect\citeauthoryear{Dudek, Meel, and Vardi}{Dudek
  et~al\mbox{.}}{b}]%
        {dudek2017hard}
\bibfield{author}{\bibinfo{person}{Jeffrey~M. Dudek},
  \bibinfo{person}{Kuldeep~S. Meel}, {and} \bibinfo{person}{Moshe~Y. Vardi}.}
\newblock \showarticletitle{The Hard Problems Are Almost Everywhere For Random
  {CNF-XOR} Formulas}. In \bibinfo{booktitle}{{\em IJCAI'17}}.
\newblock


\bibitem[\protect\citeauthoryear{Dvijotham, Stanforth, Gowal, Mann, and
  Kohli}{Dvijotham et~al\mbox{.}}{2018}]%
        {dvijotham2018dual}
\bibfield{author}{\bibinfo{person}{Krishnamurthy Dvijotham},
  \bibinfo{person}{Robert Stanforth}, \bibinfo{person}{Sven Gowal},
  \bibinfo{person}{Timothy Mann}, {and} \bibinfo{person}{Pushmeet Kohli}.}
  \bibinfo{year}{2018}\natexlab{}.
\newblock \showarticletitle{{A Dual Approach to Scalable Verification of Deep
  Networks}}.
\newblock \bibinfo{journal}{{\em arXiv\/}} (\bibinfo{year}{2018}).
\newblock


\bibitem[\protect\citeauthoryear{Dwork, Hardt, Pitassi, Reingold, and
  Zemel}{Dwork et~al\mbox{.}}{}]%
        {dwork2012fairness}
\bibfield{author}{\bibinfo{person}{Cynthia Dwork}, \bibinfo{person}{Moritz
  Hardt}, \bibinfo{person}{Toniann Pitassi}, \bibinfo{person}{Omer Reingold},
  {and} \bibinfo{person}{Richard Zemel}.}
\newblock \showarticletitle{Fairness through awareness}. In
  \bibinfo{booktitle}{{\em ITCS'12}}.
\newblock


\bibitem[\protect\citeauthoryear{E{\'e}n and S{\"o}rensson}{E{\'e}n and
  S{\"o}rensson}{2006}]%
        {een2006translating}
\bibfield{author}{\bibinfo{person}{Niklas E{\'e}n} {and}
  \bibinfo{person}{Niklas S{\"o}rensson}.} \bibinfo{year}{2006}\natexlab{}.
\newblock \showarticletitle{Translating Pseudo-Boolean Constraints into SAT.}
\newblock \bibinfo{journal}{{\em JSAT\/}} \bibinfo{volume}{2},
  \bibinfo{number}{1-4} (\bibinfo{year}{2006}), \bibinfo{pages}{1--26}.
\newblock


\bibitem[\protect\citeauthoryear{Ehlers}{Ehlers}{}]%
        {ehlers2017formal}
\bibfield{author}{\bibinfo{person}{Ruediger Ehlers}.}
\newblock \showarticletitle{Formal verification of piece-wise linear
  feed-forward neural networks}. In \bibinfo{booktitle}{{\em ATVA'17}}.
\newblock


\bibitem[\protect\citeauthoryear{Ermon, Gomes, Sabharwal, and Selman}{Ermon
  et~al\mbox{.}}{a}]%
        {ermon2013embed}
\bibfield{author}{\bibinfo{person}{Stefano Ermon}, \bibinfo{person}{Carla~P
  Gomes}, \bibinfo{person}{Ashish Sabharwal}, {and} \bibinfo{person}{Bart
  Selman}.}
\newblock \showarticletitle{Embed and project: Discrete sampling with universal
  hashing}. In \bibinfo{booktitle}{{\em NIPS'13}}.
\newblock


\bibitem[\protect\citeauthoryear{Ermon, Gomes, Sabharwal, and Selman}{Ermon
  et~al\mbox{.}}{b}]%
        {ermon2013taming}
\bibfield{author}{\bibinfo{person}{Stefano Ermon}, \bibinfo{person}{Carla~P
  Gomes}, \bibinfo{person}{Ashish Sabharwal}, {and} \bibinfo{person}{Bart
  Selman}.}
\newblock \showarticletitle{{Taming the Curse of Dimensionality: Discrete
  Integration by Hashing and Optimization}}. In \bibinfo{booktitle}{{\em
  ICML'13}}.
\newblock


\bibitem[\protect\citeauthoryear{Ermon, Gomes, Sabharwal, and Selman}{Ermon
  et~al\mbox{.}}{2013}]%
        {ermon2013optimization}
\bibfield{author}{\bibinfo{person}{Stefano Ermon}, \bibinfo{person}{Carla~P
  Gomes}, \bibinfo{person}{Ashish Sabharwal}, {and} \bibinfo{person}{Bart
  Selman}.} \bibinfo{year}{2013}\natexlab{}.
\newblock \showarticletitle{Optimization with parity constraints: From binary
  codes to discrete integration}.
\newblock \bibinfo{journal}{{\em arXiv\/}} (\bibinfo{year}{2013}).
\newblock


\bibitem[\protect\citeauthoryear{Evtimov, Eykholt, Fernandes, Kohno, Li,
  Prakash, Rahmati, and Song}{Evtimov et~al\mbox{.}}{}]%
        {evtimov2017robust}
\bibfield{author}{\bibinfo{person}{Ivan Evtimov}, \bibinfo{person}{Kevin
  Eykholt}, \bibinfo{person}{Earlence Fernandes}, \bibinfo{person}{Tadayoshi
  Kohno}, \bibinfo{person}{Bo Li}, \bibinfo{person}{Atul Prakash},
  \bibinfo{person}{Amir Rahmati}, {and} \bibinfo{person}{Dawn Song}.}
\newblock \showarticletitle{Robust physical-world attacks on deep learning
  models}. In \bibinfo{booktitle}{{\em CVPR'18}}.
\newblock


\bibitem[\protect\citeauthoryear{Feldman, Friedler, Moeller, Scheidegger, and
  Venkatasubramanian}{Feldman et~al\mbox{.}}{}]%
        {feldman2015certifying}
\bibfield{author}{\bibinfo{person}{Michael Feldman}, \bibinfo{person}{Sorelle~A
  Friedler}, \bibinfo{person}{John Moeller}, \bibinfo{person}{Carlos
  Scheidegger}, {and} \bibinfo{person}{Suresh Venkatasubramanian}.}
\newblock \showarticletitle{Certifying and removing disparate impact}. In
  \bibinfo{booktitle}{{\em SIGKDD'15}}.
\newblock


\bibitem[\protect\citeauthoryear{Ford, Gilmer, Carlini, and Cubuk}{Ford
  et~al\mbox{.}}{2019}]%
        {ford2019adversarial}
\bibfield{author}{\bibinfo{person}{Nic Ford}, \bibinfo{person}{Justin Gilmer},
  \bibinfo{person}{Nicolas Carlini}, {and} \bibinfo{person}{Dogus Cubuk}.}
  \bibinfo{year}{2019}\natexlab{}.
\newblock \showarticletitle{Adversarial Examples Are a Natural Consequence of
  Test Error in Noise}.
\newblock \bibinfo{journal}{{\em arXiv\/}} (\bibinfo{year}{2019}).
\newblock


\bibitem[\protect\citeauthoryear{Fredrikson, Jha, and Ristenpart}{Fredrikson
  et~al\mbox{.}}{a}]%
        {fredrikson2015model}
\bibfield{author}{\bibinfo{person}{Matt Fredrikson}, \bibinfo{person}{Somesh
  Jha}, {and} \bibinfo{person}{Thomas Ristenpart}.}
\newblock \showarticletitle{Model inversion attacks that exploit confidence
  information and basic countermeasures}. In \bibinfo{booktitle}{{\em CCS'15}}.
\newblock


\bibitem[\protect\citeauthoryear{Fredrikson, Lantz, Jha, Lin, Page, and
  Ristenpart}{Fredrikson et~al\mbox{.}}{b}]%
        {fredrikson2014privacy}
\bibfield{author}{\bibinfo{person}{Matt Fredrikson}, \bibinfo{person}{Eric
  Lantz}, \bibinfo{person}{Somesh Jha}, \bibinfo{person}{Simon Lin},
  \bibinfo{person}{David Page}, {and} \bibinfo{person}{Thomas Ristenpart}.}
\newblock \showarticletitle{Privacy in Pharmacogenetics: An End-to-End Case
  Study of Personalized Warfarin Dosing.}. In \bibinfo{booktitle}{{\em
  USENIX'14'}}.
\newblock


\bibitem[\protect\citeauthoryear{Galloway, Taylor, and Moussa}{Galloway
  et~al\mbox{.}}{}]%
        {galloway2017attacking}
\bibfield{author}{\bibinfo{person}{Angus Galloway}, \bibinfo{person}{Graham~W
  Taylor}, {and} \bibinfo{person}{Medhat Moussa}.}
\newblock \showarticletitle{Attacking Binarized Neural Networks}. In
  \bibinfo{booktitle}{{\em ICLR'18}}.
\newblock


\bibitem[\protect\citeauthoryear{Ganesh and Dill}{Ganesh and Dill}{}]%
        {ganesh2007decision}
\bibfield{author}{\bibinfo{person}{Vijay Ganesh} {and} \bibinfo{person}{David~L
  Dill}.}
\newblock \showarticletitle{A decision procedure for bit-vectors and arrays}.
  In \bibinfo{booktitle}{{\em CAV'07}}.
\newblock


\bibitem[\protect\citeauthoryear{Gao, Xu, Wang, Chen, Ranasinghe, and
  Nepal}{Gao et~al\mbox{.}}{2019}]%
        {gao2019strip}
\bibfield{author}{\bibinfo{person}{Yansong Gao}, \bibinfo{person}{Chang Xu},
  \bibinfo{person}{Derui Wang}, \bibinfo{person}{Shiping Chen},
  \bibinfo{person}{Damith~C Ranasinghe}, {and} \bibinfo{person}{Surya Nepal}.}
  \bibinfo{year}{2019}\natexlab{}.
\newblock \showarticletitle{STRIP: A Defence Against Trojan Attacks on Deep
  Neural Networks}.
\newblock \bibinfo{journal}{{\em arXiv\/}} (\bibinfo{year}{2019}).
\newblock


\bibitem[\protect\citeauthoryear{Gehr, Mirman, Drachsler-Cohen, Tsankov,
  Chaudhuri, and Vechev}{Gehr et~al\mbox{.}}{}]%
        {gehr2018ai2}
\bibfield{author}{\bibinfo{person}{Timon Gehr}, \bibinfo{person}{Matthew
  Mirman}, \bibinfo{person}{Dana Drachsler-Cohen}, \bibinfo{person}{Petar
  Tsankov}, \bibinfo{person}{Swarat Chaudhuri}, {and} \bibinfo{person}{Martin
  Vechev}.}
\newblock \showarticletitle{{AI2: Safety and Robustness Certification of Neural
  Networks with Abstract Interpretation}}. In \bibinfo{booktitle}{{\em SP'18}}.
\newblock


\bibitem[\protect\citeauthoryear{Geigel}{Geigel}{2013}]%
        {geigel2013neural}
\bibfield{author}{\bibinfo{person}{Arturo Geigel}.}
  \bibinfo{year}{2013}\natexlab{}.
\newblock \showarticletitle{Neural network trojan}.
\newblock \bibinfo{journal}{{\em Journal of Computer Security\/}}
  (\bibinfo{year}{2013}).
\newblock


\bibitem[\protect\citeauthoryear{Gilmer, Adams, Goodfellow, Andersen, and
  Dahl}{Gilmer et~al\mbox{.}}{2018a}]%
        {gilmer2018motivating}
\bibfield{author}{\bibinfo{person}{Justin Gilmer}, \bibinfo{person}{Ryan~P
  Adams}, \bibinfo{person}{Ian Goodfellow}, \bibinfo{person}{David Andersen},
  {and} \bibinfo{person}{George~E Dahl}.} \bibinfo{year}{2018}\natexlab{a}.
\newblock \showarticletitle{Motivating the rules of the game for adversarial
  example research}.
\newblock \bibinfo{journal}{{\em arXiv\/}} (\bibinfo{year}{2018}).
\newblock


\bibitem[\protect\citeauthoryear{Gilmer, Metz, Faghri, Schoenholz, Raghu,
  Wattenberg, and Goodfellow}{Gilmer et~al\mbox{.}}{2018b}]%
        {gilmer2018adversarial}
\bibfield{author}{\bibinfo{person}{Justin Gilmer}, \bibinfo{person}{Luke Metz},
  \bibinfo{person}{Fartash Faghri}, \bibinfo{person}{Samuel~S Schoenholz},
  \bibinfo{person}{Maithra Raghu}, \bibinfo{person}{Martin Wattenberg}, {and}
  \bibinfo{person}{Ian Goodfellow}.} \bibinfo{year}{2018}\natexlab{b}.
\newblock \showarticletitle{Adversarial spheres}.
\newblock \bibinfo{journal}{{\em arXiv\/}} (\bibinfo{year}{2018}).
\newblock


\bibitem[\protect\citeauthoryear{Giusti, Guzzi, Ciresan, He, Rodriguez,
  Fontana, Faessler, Forster, Schmidhuber, Caro, Scaramuzza, and
  Gambardella}{Giusti et~al\mbox{.}}{2016}]%
        {giusti2016machine}
\bibfield{author}{\bibinfo{person}{Alessandro Giusti}, \bibinfo{person}{Jerome
  Guzzi}, \bibinfo{person}{Dan~C. Ciresan}, \bibinfo{person}{Fang-Lin He},
  \bibinfo{person}{Juan~P. Rodriguez}, \bibinfo{person}{Flavio Fontana},
  \bibinfo{person}{Matthias Faessler}, \bibinfo{person}{Christian Forster},
  \bibinfo{person}{Jurgen Schmidhuber}, \bibinfo{person}{Gianni~Di Caro},
  \bibinfo{person}{Davide Scaramuzza}, {and} \bibinfo{person}{Luca~M.
  Gambardella}.} \bibinfo{year}{2016}\natexlab{}.
\newblock \showarticletitle{{A Machine Learning Approach to Visual Perception
  of Forest Trails for Mobile Robots}}.
\newblock \bibinfo{journal}{{\em IEEE Robotics and Automation Letters\/}}
  (\bibinfo{year}{2016}).
\newblock


\bibitem[\protect\citeauthoryear{Gomes, Sabharwal, and Selman}{Gomes
  et~al\mbox{.}}{}]%
        {GSS06}
\bibfield{author}{\bibinfo{person}{Carla~P. Gomes}, \bibinfo{person}{Ashish
  Sabharwal}, {and} \bibinfo{person}{Bart Selman}.}
\newblock \showarticletitle{Model counting: A new strategy for obtaining good
  bounds}. In \bibinfo{booktitle}{{\em AAAI'06}}.
\newblock


\bibitem[\protect\citeauthoryear{Goodfellow, Shlens, and Szegedy}{Goodfellow
  et~al\mbox{.}}{}]%
        {goodfellow2014explaining}
\bibfield{author}{\bibinfo{person}{Ian Goodfellow}, \bibinfo{person}{Jonathon
  Shlens}, {and} \bibinfo{person}{Christian Szegedy}.}
\newblock \showarticletitle{Explaining and Harnessing Adversarial Examples}. In
  \bibinfo{booktitle}{{\em ICLR'15}}.
\newblock


\bibitem[\protect\citeauthoryear{Grosu and Smolka}{Grosu and Smolka}{}]%
        {grosu2005monte}
\bibfield{author}{\bibinfo{person}{Radu Grosu} {and} \bibinfo{person}{Scott~A
  Smolka}.}
\newblock \showarticletitle{Monte carlo model checking}. In
  \bibinfo{booktitle}{{\em TACAS'05}}.
\newblock


\bibitem[\protect\citeauthoryear{Hardt, Price, Srebro, et~al\mbox{.}}{Hardt
  et~al\mbox{.}}{}]%
        {hardt2016equality}
\bibfield{author}{\bibinfo{person}{Moritz Hardt}, \bibinfo{person}{Eric Price},
  \bibinfo{person}{Nati Srebro}, {et~al\mbox{.}}}
\newblock \showarticletitle{Equality of opportunity in supervised learning}. In
  \bibinfo{booktitle}{{\em NIPS'16}}.
\newblock


\bibitem[\protect\citeauthoryear{Hastings}{Hastings}{1970}]%
        {hastings1970monte}
\bibfield{author}{\bibinfo{person}{W~Keith Hastings}.}
  \bibinfo{year}{1970}\natexlab{}.
\newblock \showarticletitle{Monte Carlo sampling methods using Markov chains
  and their applications}.
\newblock  (\bibinfo{year}{1970}).
\newblock


\bibitem[\protect\citeauthoryear{He, Zhang, Ren, and Sun}{He et~al\mbox{.}}{}]%
        {he2016deep}
\bibfield{author}{\bibinfo{person}{Kaiming He}, \bibinfo{person}{Xiangyu
  Zhang}, \bibinfo{person}{Shaoqing Ren}, {and} \bibinfo{person}{Jian Sun}.}
\newblock \showarticletitle{Deep residual learning for image recognition}. In
  \bibinfo{booktitle}{{\em CVPR'16}}.
\newblock


\bibitem[\protect\citeauthoryear{Holtzen, Broeck, and Millstein}{Holtzen
  et~al\mbox{.}}{}]%
        {holtzen2018sound}
\bibfield{author}{\bibinfo{person}{Steven Holtzen}, \bibinfo{person}{Guy
  Broeck}, {and} \bibinfo{person}{Todd Millstein}.}
\newblock \showarticletitle{Sound Abstraction and Decomposition of
  Probabilistic Programs}. In \bibinfo{booktitle}{{\em ICML'18}}.
\newblock


\bibitem[\protect\citeauthoryear{Huang, Kwiatkowska, Wang, and Wu}{Huang
  et~al\mbox{.}}{}]%
        {huang2017safety}
\bibfield{author}{\bibinfo{person}{Xiaowei Huang}, \bibinfo{person}{Marta
  Kwiatkowska}, \bibinfo{person}{Sen Wang}, {and} \bibinfo{person}{Min Wu}.}
\newblock \showarticletitle{Safety Verification of Deep Neural Networks}. In
  \bibinfo{booktitle}{{\em CAV'17}}.
\newblock


\bibitem[\protect\citeauthoryear{Hubara, Courbariaux, Soudry, El-Yaniv, and
  Bengio}{Hubara et~al\mbox{.}}{}]%
        {hubara2016binarized}
\bibfield{author}{\bibinfo{person}{Itay Hubara}, \bibinfo{person}{Matthieu
  Courbariaux}, \bibinfo{person}{Daniel Soudry}, \bibinfo{person}{Ran
  El-Yaniv}, {and} \bibinfo{person}{Yoshua Bengio}.}
\newblock \showarticletitle{Binarized neural networks}. In
  \bibinfo{booktitle}{{\em NIPS'16}}.
\newblock


\bibitem[\protect\citeauthoryear{Jerrum and Sinclair}{Jerrum and
  Sinclair}{1996}]%
        {jerrum1996markov}
\bibfield{author}{\bibinfo{person}{Mark~R. Jerrum} {and}
  \bibinfo{person}{Alistair Sinclair}.} \bibinfo{year}{1996}\natexlab{}.
\newblock \showarticletitle{The {Markov} chain {Monte Carlo} method: an
  approach to approximate counting and integration}.
\newblock \bibinfo{journal}{{\em Approximation algorithms for NP-hard
  problems\/}} (\bibinfo{year}{1996}), \bibinfo{pages}{482--520}.
\newblock


\bibitem[\protect\citeauthoryear{Julian, Lopez, Brush, Owen, and
  Kochenderfer}{Julian et~al\mbox{.}}{}]%
        {julian2016policy}
\bibfield{author}{\bibinfo{person}{Kyle~D Julian}, \bibinfo{person}{Jessica
  Lopez}, \bibinfo{person}{Jeffrey~S Brush}, \bibinfo{person}{Michael~P Owen},
  {and} \bibinfo{person}{Mykel~J Kochenderfer}.}
\newblock \showarticletitle{{Policy compression for aircraft collision
  avoidance systems}}. In \bibinfo{booktitle}{{\em DASC'16}}.
\newblock


\bibitem[\protect\citeauthoryear{Katz, Barrett, Dill, Julian, and
  Kochenderfer}{Katz et~al\mbox{.}}{}]%
        {katz2017reluplex}
\bibfield{author}{\bibinfo{person}{Guy Katz}, \bibinfo{person}{Clark Barrett},
  \bibinfo{person}{David~L Dill}, \bibinfo{person}{Kyle Julian}, {and}
  \bibinfo{person}{Mykel~J Kochenderfer}.}
\newblock \showarticletitle{Reluplex: An efficient SMT solver for verifying
  deep neural networks}. In \bibinfo{booktitle}{{\em CAV'17}}.
\newblock


\bibitem[\protect\citeauthoryear{Koh and Liang}{Koh and Liang}{}]%
        {koh2017understanding}
\bibfield{author}{\bibinfo{person}{Pang~Wei Koh} {and} \bibinfo{person}{Percy
  Liang}.}
\newblock \showarticletitle{Understanding black-box predictions via influence
  functions}. In \bibinfo{booktitle}{{\em ICML'17}}.
\newblock


\bibitem[\protect\citeauthoryear{Kozen and Parikh}{Kozen and Parikh}{}]%
        {kozen1983decision}
\bibfield{author}{\bibinfo{person}{Dexter Kozen} {and} \bibinfo{person}{Rohit
  Parikh}.}
\newblock \showarticletitle{A decision procedure for the propositional
  $\mu$-calculus}. In \bibinfo{booktitle}{{\em WLP'1983}}.
\newblock


\bibitem[\protect\citeauthoryear{Krizhevsky, Sutskever, and Hinton}{Krizhevsky
  et~al\mbox{.}}{}]%
        {krizhevsky2012imagenet}
\bibfield{author}{\bibinfo{person}{Alex Krizhevsky}, \bibinfo{person}{Ilya
  Sutskever}, {and} \bibinfo{person}{Geoffrey~E Hinton}.}
\newblock \showarticletitle{Imagenet classification with deep convolutional
  neural networks}. In \bibinfo{booktitle}{{\em NIPS'12}}.
\newblock


\bibitem[\protect\citeauthoryear{Kung, Zhang, van~der Wal, Chai, and
  Mukhopadhyay}{Kung et~al\mbox{.}}{2018}]%
        {kung2018efficient}
\bibfield{author}{\bibinfo{person}{Jaeha Kung}, \bibinfo{person}{David Zhang},
  \bibinfo{person}{Gooitzen van~der Wal}, \bibinfo{person}{Sek Chai}, {and}
  \bibinfo{person}{Saibal Mukhopadhyay}.} \bibinfo{year}{2018}\natexlab{}.
\newblock \showarticletitle{Efficient object detection using embedded binarized
  neural networks}.
\newblock \bibinfo{journal}{{\em Journal of Signal Processing Systems\/}}
  (\bibinfo{year}{2018}).
\newblock


\bibitem[\protect\citeauthoryear{LeCun, Bengio, and Hinton}{LeCun
  et~al\mbox{.}}{2015}]%
        {lecun2015deep}
\bibfield{author}{\bibinfo{person}{Yann LeCun}, \bibinfo{person}{Yoshua
  Bengio}, {and} \bibinfo{person}{Geoffrey Hinton}.}
  \bibinfo{year}{2015}\natexlab{}.
\newblock \showarticletitle{Deep learning}.
\newblock \bibinfo{journal}{{\em nature\/}} (\bibinfo{year}{2015}).
\newblock


\bibitem[\protect\citeauthoryear{LeCun and Cortes}{LeCun and Cortes}{2010}]%
        {lecun2010mnist}
\bibfield{author}{\bibinfo{person}{Yann LeCun} {and} \bibinfo{person}{Corinna
  Cortes}.} \bibinfo{year}{2010}\natexlab{}.
\newblock \showarticletitle{{MNIST} handwritten digit database}.
\newblock  (\bibinfo{year}{2010}).
\newblock


\bibitem[\protect\citeauthoryear{Liao, Liang, Dong, Pang, Hu, and Zhu}{Liao
  et~al\mbox{.}}{}]%
        {liao2018defense}
\bibfield{author}{\bibinfo{person}{Fangzhou Liao}, \bibinfo{person}{Ming
  Liang}, \bibinfo{person}{Yinpeng Dong}, \bibinfo{person}{Tianyu Pang},
  \bibinfo{person}{Xiaolin Hu}, {and} \bibinfo{person}{Jun Zhu}.}
\newblock \showarticletitle{Defense against adversarial attacks using
  high-level representation guided denoiser}. In \bibinfo{booktitle}{{\em
  CVPR'18}}.
\newblock


\bibitem[\protect\citeauthoryear{Liu, Ma, Aafer, Lee, Zhai, Wang, and
  Zhang}{Liu et~al\mbox{.}}{}]%
        {Trojannn}
\bibfield{author}{\bibinfo{person}{Yingqi Liu}, \bibinfo{person}{Shiqing Ma},
  \bibinfo{person}{Yousra Aafer}, \bibinfo{person}{Wen-Chuan Lee},
  \bibinfo{person}{Juan Zhai}, \bibinfo{person}{Weihang Wang}, {and}
  \bibinfo{person}{Xiangyu Zhang}.}
\newblock \showarticletitle{Trojaning Attack on Neural Networks}. In
  \bibinfo{booktitle}{{\em NDSS'18}}.
\newblock


\bibitem[\protect\citeauthoryear{Mahloujifar, Diochnos, and
  Mahmoody}{Mahloujifar et~al\mbox{.}}{2018}]%
        {mahloujifar2018curse}
\bibfield{author}{\bibinfo{person}{Saeed Mahloujifar},
  \bibinfo{person}{Dimitrios~I Diochnos}, {and} \bibinfo{person}{Mohammad
  Mahmoody}.} \bibinfo{year}{2018}\natexlab{}.
\newblock \showarticletitle{The curse of concentration in robust learning:
  Evasion and poisoning attacks from concentration of measure}.
\newblock \bibinfo{journal}{{\em arXiv\/}} (\bibinfo{year}{2018}).
\newblock


\bibitem[\protect\citeauthoryear{McDanel, Teerapittayanon, and Kung}{McDanel
  et~al\mbox{.}}{}]%
        {mcdanel2017embedded}
\bibfield{author}{\bibinfo{person}{Bradley McDanel}, \bibinfo{person}{Surat
  Teerapittayanon}, {and} \bibinfo{person}{H~T Kung}.}
\newblock \showarticletitle{{Embedded Binarized Neural Networks}}. In
  \bibinfo{booktitle}{{\em EWSN'17}}.
\newblock


\bibitem[\protect\citeauthoryear{Meel}{Meel}{2017}]%
        {Meel17}
\bibfield{author}{\bibinfo{person}{Kuldeep~S. Meel}.}
  \bibinfo{year}{2017}\natexlab{}.
\newblock {\em \bibinfo{title}{Constrained Counting and Sampling: Bridging the
  Gap between Theory and Practice}}.
\newblock \bibinfo{thesistype}{Ph.D. Dissertation}. \bibinfo{school}{Rice
  University}.
\newblock


\bibitem[\protect\citeauthoryear{Meel, Vardi, Chakraborty, Fremont, Seshia,
  Fried, Ivrii, and Malik}{Meel et~al\mbox{.}}{}]%
        {MVCFSFIM16}
\bibfield{author}{\bibinfo{person}{Kuldeep~S. Meel}, \bibinfo{person}{Moshe~Y.
  Vardi}, \bibinfo{person}{Supratik Chakraborty}, \bibinfo{person}{Daniel~J
  Fremont}, \bibinfo{person}{Sanjit~A Seshia}, \bibinfo{person}{Dror Fried},
  \bibinfo{person}{Alexander Ivrii}, {and} \bibinfo{person}{Sharad Malik}.}
\newblock \showarticletitle{Constrained Sampling and Counting: Universal
  Hashing Meets SAT Solving}. In \bibinfo{booktitle}{{\em AAAI'16 (Beyond {NP}
  Workshop)}}.
\newblock


\bibitem[\protect\citeauthoryear{Narodytska, Kasiviswanathan, Ryzhyk, Sagiv,
  and Walsh}{Narodytska et~al\mbox{.}}{}]%
        {narodytska2017verifying}
\bibfield{author}{\bibinfo{person}{Nina Narodytska},
  \bibinfo{person}{Shiva~Prasad Kasiviswanathan}, \bibinfo{person}{Leonid
  Ryzhyk}, \bibinfo{person}{Mooly Sagiv}, {and} \bibinfo{person}{Toby Walsh}.}
\newblock \showarticletitle{Verifying properties of binarized deep neural
  networks}. In \bibinfo{booktitle}{{\em AAAI'18}}.
\newblock


\bibitem[\protect\citeauthoryear{Neal}{Neal}{1993}]%
        {neal1993probabilistic}
\bibfield{author}{\bibinfo{person}{Radford~M Neal}.}
  \bibinfo{year}{1993}\natexlab{}.
\newblock \showarticletitle{Probabilistic inference using Markov chain Monte
  Carlo methods}.
\newblock  (\bibinfo{year}{1993}).
\newblock


\bibitem[\protect\citeauthoryear{Nieuwenhuis and Oliveras}{Nieuwenhuis and
  Oliveras}{2005}]%
        {nieuwenhuis2005decision}
\bibfield{author}{\bibinfo{person}{Robert Nieuwenhuis} {and}
  \bibinfo{person}{Albert Oliveras}.} \bibinfo{year}{2005}\natexlab{}.
\newblock \showarticletitle{Decision procedures for SAT, SAT modulo theories
  and beyond. the BarcelogicTools}. In \bibinfo{booktitle}{{\em International
  Conference on Logic for Programming Artificial Intelligence and Reasoning}}.
  Springer, \bibinfo{pages}{23--46}.
\newblock


\bibitem[\protect\citeauthoryear{Papernot, McDaniel, and Goodfellow}{Papernot
  et~al\mbox{.}}{2016}]%
        {papernot2016transferability}
\bibfield{author}{\bibinfo{person}{Nicolas Papernot}, \bibinfo{person}{Patrick
  McDaniel}, {and} \bibinfo{person}{Ian Goodfellow}.}
  \bibinfo{year}{2016}\natexlab{}.
\newblock \showarticletitle{Transferability in Machine Learning: from Phenomena
  to Black-Box Attacks using Adversarial Samples}.
\newblock \bibinfo{journal}{{\em arXiv\/}} (\bibinfo{year}{2016}).
\newblock


\bibitem[\protect\citeauthoryear{Papernot, McDaniel, Jha, Fredrikson, Celik,
  and Swami}{Papernot et~al\mbox{.}}{a}]%
        {papernot2016limitations}
\bibfield{author}{\bibinfo{person}{Nicolas Papernot}, \bibinfo{person}{Patrick
  McDaniel}, \bibinfo{person}{Somesh Jha}, \bibinfo{person}{Matt Fredrikson},
  \bibinfo{person}{Z~Berkay Celik}, {and} \bibinfo{person}{Ananthram Swami}.}
\newblock \showarticletitle{The limitations of deep learning in adversarial
  settings}. In \bibinfo{booktitle}{{\em EuroS\&P'16}}.
\newblock


\bibitem[\protect\citeauthoryear{Papernot, McDaniel, Wu, Jha, and
  Swami}{Papernot et~al\mbox{.}}{b}]%
        {papernot2015distillation}
\bibfield{author}{\bibinfo{person}{Nicolas Papernot}, \bibinfo{person}{Patrick
  McDaniel}, \bibinfo{person}{Xi Wu}, \bibinfo{person}{Somesh Jha}, {and}
  \bibinfo{person}{Ananthram Swami}.}
\newblock \showarticletitle{Distillation as a defense to adversarial
  perturbations against deep neural networks}. In \bibinfo{booktitle}{{\em
  SP'16}}.
\newblock


\bibitem[\protect\citeauthoryear{Paszke, Gross, Chintala, Chanan, Yang, DeVito,
  Lin, Desmaison, Antiga, and Lerer}{Paszke et~al\mbox{.}}{}]%
        {paszke2017automatic}
\bibfield{author}{\bibinfo{person}{Adam Paszke}, \bibinfo{person}{Sam Gross},
  \bibinfo{person}{Soumith Chintala}, \bibinfo{person}{Gregory Chanan},
  \bibinfo{person}{Edward Yang}, \bibinfo{person}{Zachary DeVito},
  \bibinfo{person}{Zeming Lin}, \bibinfo{person}{Alban Desmaison},
  \bibinfo{person}{Luca Antiga}, {and} \bibinfo{person}{Adam Lerer}.}
\newblock \showarticletitle{Automatic differentiation in PyTorch}. In
  \bibinfo{booktitle}{{\em NIPS-W'17}}.
\newblock


\bibitem[\protect\citeauthoryear{Pei, Cao, Yang, and Jana}{Pei
  et~al\mbox{.}}{}]%
        {pei2017deepxplore}
\bibfield{author}{\bibinfo{person}{Kexin Pei}, \bibinfo{person}{Yinzhi Cao},
  \bibinfo{person}{Junfeng Yang}, {and} \bibinfo{person}{Suman Jana}.}
\newblock \showarticletitle{Deepxplore: Automated whitebox testing of deep
  learning systems}. In \bibinfo{booktitle}{{\em SOSP'17}}.
\newblock


\bibitem[\protect\citeauthoryear{Philipp and Steinke}{Philipp and
  Steinke}{2015}]%
        {pblib.sat2015}
\bibfield{author}{\bibinfo{person}{Tobias Philipp} {and} \bibinfo{person}{Peter
  Steinke}.} \bibinfo{year}{2015}\natexlab{}.
\newblock \showarticletitle{PBLib -- A Library for Encoding Pseudo-Boolean
  Constraints into CNF}.
\newblock In \bibinfo{booktitle}{{\em Theory and Applications of Satisfiability
  Testing -- SAT 2015}}.
\newblock


\bibitem[\protect\citeauthoryear{Pulina and Tacchella}{Pulina and
  Tacchella}{}]%
        {pulina2010abstraction}
\bibfield{author}{\bibinfo{person}{Luca Pulina} {and} \bibinfo{person}{Armando
  Tacchella}.}
\newblock \showarticletitle{An abstraction-refinement approach to verification
  of artificial neural networks}. In \bibinfo{booktitle}{{\em CAV'10}}.
\newblock


\bibitem[\protect\citeauthoryear{Raghunathan, Steinhardt, and
  Liang}{Raghunathan et~al\mbox{.}}{}]%
        {raghunathan2018certified}
\bibfield{author}{\bibinfo{person}{Aditi Raghunathan}, \bibinfo{person}{Jacob
  Steinhardt}, {and} \bibinfo{person}{Percy Liang}.}
\newblock In \bibinfo{booktitle}{{\em ICLR'18}}.
\newblock


\bibitem[\protect\citeauthoryear{Rastegari, Ordonez, Redmon, and
  Farhadi}{Rastegari et~al\mbox{.}}{2016}]%
        {rastegari2016xnor}
\bibfield{author}{\bibinfo{person}{Mohammad Rastegari},
  \bibinfo{person}{Vicente Ordonez}, \bibinfo{person}{Joseph Redmon}, {and}
  \bibinfo{person}{Ali Farhadi}.} \bibinfo{year}{2016}\natexlab{}.
\newblock \showarticletitle{Xnor-net: Imagenet classification using binary
  convolutional neural networks}. In \bibinfo{booktitle}{{\em European
  Conference on Computer Vision}}.
\newblock


\bibitem[\protect\citeauthoryear{Ruan, Huang, and Kwiatkowska}{Ruan
  et~al\mbox{.}}{}]%
        {ruan2018ijcai}
\bibfield{author}{\bibinfo{person}{Wenjie Ruan}, \bibinfo{person}{Xiaowei
  Huang}, {and} \bibinfo{person}{Marta Kwiatkowska}.}
\newblock \showarticletitle{Reachability Analysis of Deep Neural Networks with
  Provable Guarantees}. In \bibinfo{booktitle}{{\em IJCAI'18}}.
\newblock


\bibitem[\protect\citeauthoryear{Seshia, Desai, Dreossi, Fremont, Ghosh, Kim,
  Shivakumar, Vazquez-Chanlatte, and Yue}{Seshia et~al\mbox{.}}{}]%
        {seshia2018}
\bibfield{author}{\bibinfo{person}{Sanjit~Arunkumar Seshia},
  \bibinfo{person}{Ankush Desai}, \bibinfo{person}{Tommaso Dreossi},
  \bibinfo{person}{Daniel Fremont}, \bibinfo{person}{Shromona Ghosh},
  \bibinfo{person}{Edward Kim}, \bibinfo{person}{Sumukh Shivakumar},
  \bibinfo{person}{Marcell Vazquez-Chanlatte}, {and} \bibinfo{person}{Xiangyu
  Yue}.}
\newblock \showarticletitle{{Formal Specification for Deep Neural Networks}}.
  In \bibinfo{booktitle}{{\em ATVA'18}}.
\newblock


\bibitem[\protect\citeauthoryear{Shokri, Stronati, Song, and Shmatikov}{Shokri
  et~al\mbox{.}}{}]%
        {shokri2017membership}
\bibfield{author}{\bibinfo{person}{Reza Shokri}, \bibinfo{person}{Marco
  Stronati}, \bibinfo{person}{Congzheng Song}, {and} \bibinfo{person}{Vitaly
  Shmatikov}.}
\newblock \showarticletitle{Membership inference attacks against machine
  learning models}. In \bibinfo{booktitle}{{\em SP'17}}.
\newblock


\bibitem[\protect\citeauthoryear{Simonyan, Vedaldi, and Zisserman}{Simonyan
  et~al\mbox{.}}{}]%
        {simonyan2013deep}
\bibfield{author}{\bibinfo{person}{Karen Simonyan}, \bibinfo{person}{Andrea
  Vedaldi}, {and} \bibinfo{person}{Andrew Zisserman}.}
\newblock \showarticletitle{Deep inside convolutional networks: Visualising
  image classification models and saliency maps}. In \bibinfo{booktitle}{{\em
  ICLR'13}}.
\newblock


\bibitem[\protect\citeauthoryear{Singh, Gehr, Mirman, P{\"u}schel, and
  Vechev}{Singh et~al\mbox{.}}{a}]%
        {singh2018fast}
\bibfield{author}{\bibinfo{person}{Gagandeep Singh}, \bibinfo{person}{Timon
  Gehr}, \bibinfo{person}{Matthew Mirman}, \bibinfo{person}{Markus
  P{\"u}schel}, {and} \bibinfo{person}{Martin Vechev}.}
\newblock \showarticletitle{Fast and effective robustness certification}. In
  \bibinfo{booktitle}{{\em NIPS'18}}.
\newblock


\bibitem[\protect\citeauthoryear{Singh, Gehr, P{\"u}schel, and Vechev}{Singh
  et~al\mbox{.}}{b}]%
        {singh2019abstract}
\bibfield{author}{\bibinfo{person}{Gagandeep Singh}, \bibinfo{person}{Timon
  Gehr}, \bibinfo{person}{Markus P{\"u}schel}, {and} \bibinfo{person}{Martin
  Vechev}.}
\newblock \showarticletitle{An abstract domain for certifying neural networks}.
  In \bibinfo{booktitle}{{\em PACMPL'19}}.
\newblock


\bibitem[\protect\citeauthoryear{Sinz}{Sinz}{}]%
        {sinz2005towards}
\bibfield{author}{\bibinfo{person}{Carsten Sinz}.}
\newblock \showarticletitle{Towards an optimal CNF encoding of boolean
  cardinality constraints}. In \bibinfo{booktitle}{{\em CP'05}}.
\newblock


\bibitem[\protect\citeauthoryear{Soos and Meel}{Soos and Meel}{}]%
        {soos2019bird}
\bibfield{author}{\bibinfo{person}{Mate Soos} {and} \bibinfo{person}{Kuldeep~S
  Meel}.}
\newblock \showarticletitle{BIRD: Engineering an Efficient CNF-XOR SAT Solver
  and its Applications to Approximate Model Counting}. In
  \bibinfo{booktitle}{{\em AAAI'19}}.
\newblock


\bibitem[\protect\citeauthoryear{Sundararajan, Taly, and Yan}{Sundararajan
  et~al\mbox{.}}{}]%
        {sundararajan2017axiomatic}
\bibfield{author}{\bibinfo{person}{Mukund Sundararajan}, \bibinfo{person}{Ankur
  Taly}, {and} \bibinfo{person}{Qiqi Yan}.}
\newblock \showarticletitle{Axiomatic attribution for deep networks}. In
  \bibinfo{booktitle}{{\em ICML'17}}.
\newblock


\bibitem[\protect\citeauthoryear{Sutskever, Vinyals, and Le}{Sutskever
  et~al\mbox{.}}{}]%
        {sutskever2014sequence}
\bibfield{author}{\bibinfo{person}{Ilya Sutskever}, \bibinfo{person}{Oriol
  Vinyals}, {and} \bibinfo{person}{Quoc~V Le}.}
\newblock \showarticletitle{Sequence to sequence learning with neural
  networks}. In \bibinfo{booktitle}{{\em NIPS'14}}.
\newblock


\bibitem[\protect\citeauthoryear{Szegedy, Liu, Jia, Sermanet, Reed, Anguelov,
  Erhan, Vanhoucke, and Rabinovich}{Szegedy et~al\mbox{.}}{a}]%
        {szegedy2015going}
\bibfield{author}{\bibinfo{person}{Christian Szegedy}, \bibinfo{person}{Wei
  Liu}, \bibinfo{person}{Yangqing Jia}, \bibinfo{person}{Pierre Sermanet},
  \bibinfo{person}{Scott Reed}, \bibinfo{person}{Dragomir Anguelov},
  \bibinfo{person}{Dumitru Erhan}, \bibinfo{person}{Vincent Vanhoucke}, {and}
  \bibinfo{person}{Andrew Rabinovich}.}
\newblock \showarticletitle{Going deeper with convolutions}. In
  \bibinfo{booktitle}{{\em CVPR'15}}.
\newblock


\bibitem[\protect\citeauthoryear{Szegedy, Zaremba, Sutskever, Bruna, Erhan,
  Goodfellow, and Fergus}{Szegedy et~al\mbox{.}}{b}]%
        {szegedy2013intriguing}
\bibfield{author}{\bibinfo{person}{Christian Szegedy},
  \bibinfo{person}{Wojciech Zaremba}, \bibinfo{person}{Ilya Sutskever},
  \bibinfo{person}{Joan Bruna}, \bibinfo{person}{Dumitru Erhan},
  \bibinfo{person}{Ian Goodfellow}, {and} \bibinfo{person}{Rob Fergus}.}
\newblock \showarticletitle{Intriguing properties of neural networks}. In
  \bibinfo{booktitle}{{\em ICLR'14}}.
\newblock


\bibitem[\protect\citeauthoryear{Toda}{Toda}{1989}]%
        {Toda89}
\bibfield{author}{\bibinfo{person}{Seinosuke Toda}.}
  \bibinfo{year}{1989}\natexlab{}.
\newblock \showarticletitle{On the computational power of {PP} and {(+)P}}. In
  \bibinfo{booktitle}{{\em Proc. of FOCS}}. IEEE, \bibinfo{pages}{514--519}.
\newblock


\bibitem[\protect\citeauthoryear{Tram{\`e}r, Kurakin, Papernot, Goodfellow,
  Boneh, and McDaniel}{Tram{\`e}r et~al\mbox{.}}{}]%
        {tramer2017ensemble}
\bibfield{author}{\bibinfo{person}{Florian Tram{\`e}r}, \bibinfo{person}{Alexey
  Kurakin}, \bibinfo{person}{Nicolas Papernot}, \bibinfo{person}{Ian
  Goodfellow}, \bibinfo{person}{Dan Boneh}, {and} \bibinfo{person}{Patrick
  McDaniel}.}
\newblock \showarticletitle{Ensemble adversarial training: Attacks and
  defenses}. In \bibinfo{booktitle}{{\em ICLR'18}}.
\newblock


\bibitem[\protect\citeauthoryear{Tseitin}{Tseitin}{1983}]%
        {tseitin1983complexity}
\bibfield{author}{\bibinfo{person}{Grigori~S Tseitin}.}
  \bibinfo{year}{1983}\natexlab{}.
\newblock \showarticletitle{On the complexity of derivation in propositional
  calculus}.
\newblock In \bibinfo{booktitle}{{\em Automation of reasoning}}.
\newblock


\bibitem[\protect\citeauthoryear{Uesato, O'Donoghue, Oord, and Kohli}{Uesato
  et~al\mbox{.}}{}]%
        {uesato2018adversarial}
\bibfield{author}{\bibinfo{person}{Jonathan Uesato}, \bibinfo{person}{Brendan
  O'Donoghue}, \bibinfo{person}{Aaron van~den Oord}, {and}
  \bibinfo{person}{Pushmeet Kohli}.}
\newblock \showarticletitle{Adversarial risk and the dangers of evaluating
  against weak attacks}. In \bibinfo{booktitle}{{\em ICML'18}}.
\newblock


\bibitem[\protect\citeauthoryear{Valiant}{Valiant}{1979}]%
        {Valiant79}
\bibfield{author}{\bibinfo{person}{Leslie~G. Valiant}.}
  \bibinfo{year}{1979}\natexlab{}.
\newblock \showarticletitle{The complexity of enumeration and reliability
  problems}.
\newblock \bibinfo{journal}{{\it SIAM J. Comput.}} (\bibinfo{year}{1979}).
\newblock


\bibitem[\protect\citeauthoryear{Verbist, Klambauer, Vervoort, Talloen, Shkedy,
  Thas, Bender, G{\"o}hlmann, Hochreiter, Consortium, et~al\mbox{.}}{Verbist
  et~al\mbox{.}}{2015}]%
        {verbist2015using}
\bibfield{author}{\bibinfo{person}{Bie Verbist}, \bibinfo{person}{G{\"u}nter
  Klambauer}, \bibinfo{person}{Liesbet Vervoort}, \bibinfo{person}{Willem
  Talloen}, \bibinfo{person}{Ziv Shkedy}, \bibinfo{person}{Olivier Thas},
  \bibinfo{person}{Andreas Bender}, \bibinfo{person}{Hinrich~WH G{\"o}hlmann},
  \bibinfo{person}{Sepp Hochreiter}, \bibinfo{person}{QSTAR Consortium},
  {et~al\mbox{.}}} \bibinfo{year}{2015}\natexlab{}.
\newblock \showarticletitle{Using transcriptomics to guide lead optimization in
  drug discovery projects: Lessons learned from the QSTAR project}.
\newblock \bibinfo{journal}{{\em Drug discovery today\/}}
  (\bibinfo{year}{2015}).
\newblock


\bibitem[\protect\citeauthoryear{Wallach, Dzamba, and Heifets}{Wallach
  et~al\mbox{.}}{2015}]%
        {wallach2015atomnet}
\bibfield{author}{\bibinfo{person}{Izhar Wallach}, \bibinfo{person}{Michael
  Dzamba}, {and} \bibinfo{person}{Abraham Heifets}.}
  \bibinfo{year}{2015}\natexlab{}.
\newblock \showarticletitle{AtomNet: A deep convolutional neural network for
  bioactivity prediction in structure-based drug discovery}.
\newblock \bibinfo{journal}{{\em arXiv\/}} (\bibinfo{year}{2015}).
\newblock


\bibitem[\protect\citeauthoryear{Wang, Pei, Whitehouse, Yang, and Jana}{Wang
  et~al\mbox{.}}{}]%
        {wang2018intervals}
\bibfield{author}{\bibinfo{person}{Shiqi Wang}, \bibinfo{person}{Kexin Pei},
  \bibinfo{person}{Justin Whitehouse}, \bibinfo{person}{Junfeng Yang}, {and}
  \bibinfo{person}{Suman Jana}.}
\newblock \showarticletitle{Formal Security Analysis of Neural Networks using
  Symbolic Intervals}. In \bibinfo{booktitle}{{\em USENIX'18}}.
\newblock


\bibitem[\protect\citeauthoryear{Webb, Rainforth, Teh, and Kumar}{Webb
  et~al\mbox{.}}{}]%
        {webb2018statistical}
\bibfield{author}{\bibinfo{person}{Stefan Webb}, \bibinfo{person}{Tom
  Rainforth}, \bibinfo{person}{Yee~Whye Teh}, {and} \bibinfo{person}{M~Pawan
  Kumar}.}
\newblock \showarticletitle{A Statistical Approach to Assessing Neural Network
  Robustness}. In \bibinfo{booktitle}{{\em ICLR'19}}.
\newblock


\bibitem[\protect\citeauthoryear{Wong and Kolter}{Wong and Kolter}{}]%
        {wong2018provable}
\bibfield{author}{\bibinfo{person}{Eric Wong} {and} \bibinfo{person}{Zico
  Kolter}.}
\newblock \showarticletitle{{Provable Defenses against Adversarial Examples via
  the Convex Outer Adversarial Polytope}}. In \bibinfo{booktitle}{{\em
  ICML'18}}.
\newblock


\bibitem[\protect\citeauthoryear{Xie, Wang, Zhang, Ren, and Yuille}{Xie
  et~al\mbox{.}}{}]%
        {xie2017mitigating}
\bibfield{author}{\bibinfo{person}{Cihang Xie}, \bibinfo{person}{Jianyu Wang},
  \bibinfo{person}{Zhishuai Zhang}, \bibinfo{person}{Zhou Ren}, {and}
  \bibinfo{person}{Alan Yuille}.}
\newblock \showarticletitle{Mitigating Adversarial Effects Through
  Randomization}. In \bibinfo{booktitle}{{\em ICLR'18}}.
\newblock


\bibitem[\protect\citeauthoryear{Zafar, Valera, Rodriguez, and Gummadi}{Zafar
  et~al\mbox{.}}{}]%
        {zafar2015fairness}
\bibfield{author}{\bibinfo{person}{Muhammad~Bilal Zafar},
  \bibinfo{person}{Isabel Valera}, \bibinfo{person}{Manuel~Gomez Rodriguez},
  {and} \bibinfo{person}{Krishna~P Gummadi}.}
\newblock \showarticletitle{Fairness constraints: Mechanisms for fair
  classification}. In \bibinfo{booktitle}{{\em AISTATS'15}}.
\newblock


\end{thebibliography}

\section{Appendix}

\subsection{Lemma~\ref{lemma:card} Detailed Proof}
\label{appendix:proof-bnn-to-card}

For the ease of proof of Lemma~\ref{lemma:card}, we first introduce the notion
of independent support.
\paragraph{Independent Support.} An independent support $\vec{ind}$ for a
formula $F(\vec{x})$ is a subset of variables appearing in formula F, $\vec{ind}
\subseteq \vec{x}$, that uniquely determine the values of the other variables in
any satisfying assignment~\cite{chakraborty2014balancing}. In other words, if
there exist two satisfying assignments $\tau_1$ and $\tau_2$ that agree on
$\vec{ind}$ then $\tau_1 = \tau_2$. Then $\setsat{\text{F}} =
\setsat{\text{F}}\downarrow \vec{ind}$.

\begin{proof}
  We prove that $\setsat{\spec} = \setsat{\spec} \downarrow \vec{x}$
  by showing that $\vec{x}$ is an independent support for $\cardf$. This
  follows directly from the construction of $\cardf$. If $\vec{x}$ is an
  independent support then the following has to hold true:
  \tiny
  \begin{align*}
    \text{G} = \Big(\cardf(\vec{x}, \vec{y}, \auxcard) \land \cardf(
\vec{x}', \vec{y}', \auxcard') \land (\vec{x} = \vec{x}') \Rightarrow \\
    (\vec{y} = \vec{y}') \land (\auxcard = \auxcard')\Big)
  \end{align*}
  \normalsize
  As per Table~\ref{table:enc-bnn-card}, we expand $\cardf(\vec{x}, \vec{y})$:
  \tiny
   \begin{align*}
     G =\Big( (\blkf{1}(\vec{x}, \vec{v}_2^{(b)}) \land
     \blkf{2}(\vec{v}_2^{(b)}, \vec{v}_3^{(b)}) \land \ldots \land
     \outf(\vec{v}_d^{(b)}, \vec{ord}, \vec{y})\\
    \land ~
     (\blkf{1}(\vec{x}', \vec{v'}_2^{(b)}) \land
     \blkf{2}(\vec{v'}_2^{(b)},\vec{v'}_3^{(b)}) \land \ldots \land
     \outf(\vec{v'}_d^{(b)}, \vec{ord}, \vec{y}') \\
    \land ~
     (\vec{x} = \vec{x}') \Rightarrow (\vec{y} =\vec{y}') \land (\auxcard
     =\auxcard') \Big)
  \end{align*}
  \normalsize
  $\text{G}$ is valid if and only if $\neg G$ is unsatisfiable.
  \tiny
  \begin{align*}
    \neg G =\Big( (\blkf{1}(\vec{x}, \vec{v}_2^{(b)}) \land \ldots \land
    \outf(\vec{v}_d^{(b)}, \vec{y})) \\
    \land ~
    (\blkf{1}(\vec{x}', \vec{v'}_2^{(b)}) \land \ldots \land
    \outf(\vec{v'}_d^{(b)}, \vec{y}')
    \land 
    (\vec{x} = \vec{x}') \land \neg(\vec{y} = \vec{y}')\Big) \\
    \lor ~
    \Big(\blkf{1}(\vec{x},\vec{v}_2^{(b)}) \land \ldots \land
    \outf(\vec{v}_d^{(b)}, \vec{y}) \\
    \land ~
    (\blkf{1}(\vec{x}', \vec{v'}_2^{(b)}) \land \ldots \land
    \outf(\vec{v'}_d^{(b)}, \vec{y}')
    \land
    (\vec{x} = \vec{x}') \land
    \neg (\auxcard = \auxcard') \Big)
  \end{align*}
  \normalsize
  The first block's formula's introduced variables $\vec{v}_2^{(b)}$ are
  uniquely determined by $\vec{x}$. For every formula $\blkf{k}$ corresponding
  to an internal block the introduced variables are uniquely determined by the
  input variables. Similarly, for the output block (formula~\outf\xspace in
  Table~\ref{table:enc-bnn-card}). If $\vec{x} = \vec{x}'$ then $\vec{v}_2^{(b)}
  = \vec{v'}_2^{(b)}, \ldots \Rightarrow  \auxcard= \auxcard'$, so the second
  clause is not satisfied. Then, since $\vec{v}_d^{(b)}=\vec{v'}_d^{(b)}
  \Rightarrow \vec{y} = \vec{y}'$. Thus, G is a valid formula which implies that
  $\vec{x}$ forms an independent support for the $\cardf$ formula $\Rightarrow
  \setsat{\spec} = \setsat{\spec} \downarrow \vec{x}$.

\end{proof}

  \begin{figure}[htb!]
    \centering
  \centering
    \includegraphics[width=0.8\linewidth]{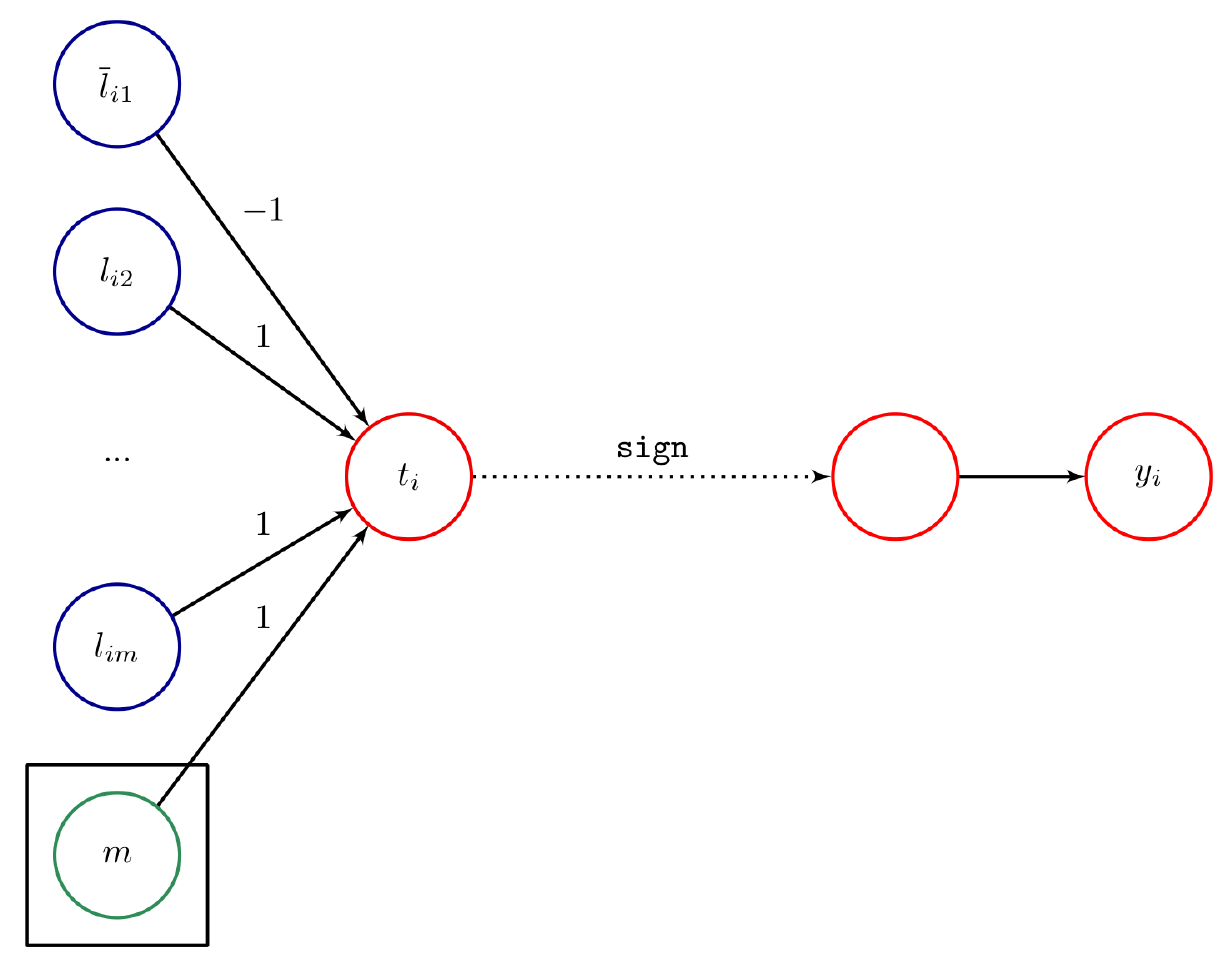}
    \caption{Disjunction gadget: Perceptron equivalent to an OR gate}
  \label{fig:or_gadget}
    \end{figure}

  \begin{figure}[htb!]
  \centering
    \includegraphics[width=0.8\linewidth]{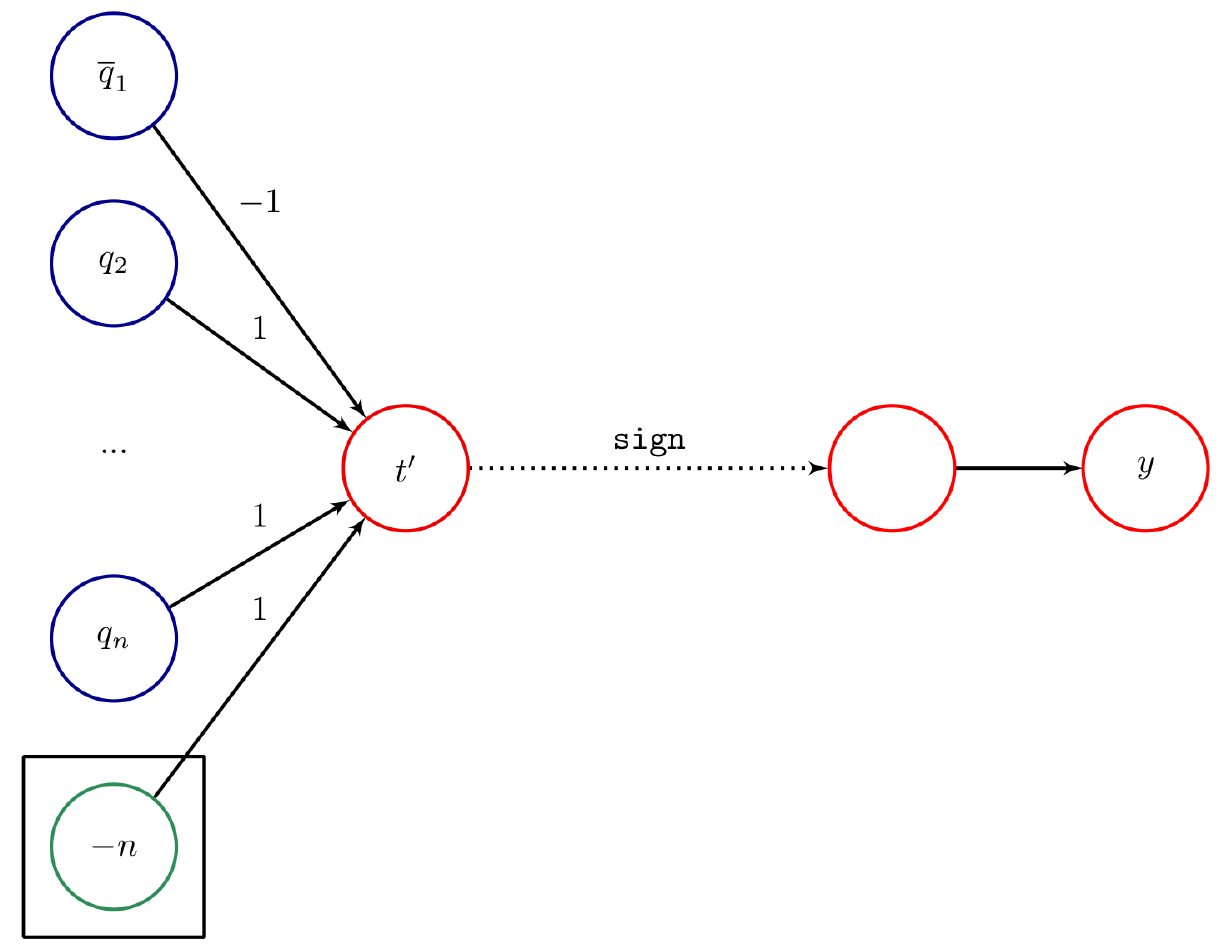}
    \caption{Conjunction gadget: Perceptron equivalent to an AND gate}
  \label{fig:and_gadget}
  \end{figure}

\subsection{Quantitative Verification is \#P-hard}
\label{sec:hardness}

We prove that quantitative verification is \#P-hard by reducing the problem of
model counting for logical formulas to quantitative verification of neural
networks. We show how an arbitrary CNF formula \newtext{$F$} can be transformed
into a binarized neural net $f$ and a specification $\spec$ such that the number
of models for $F$ is the same as $\spec$, i.e., $|\setsat{\spec}|=|\setsat{F}|$.
Even for this restricted class of multilayer perceptrons quantitative
verification turns out to be \#P-hard. Hence, in general, quantitative
verification over multilayer perceptrons is \#P-hard.

\begin{theorem}
  \nqv($\spec$) is \#P-hard, where $\spec$ is a specification for a property
  \prop over binarized neural nets.
\end{theorem}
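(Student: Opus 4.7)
The plan is to prove $\#P$-hardness via a parsimonious reduction from $\#\text{CNF}$, the canonical $\#P$-complete problem. Given an arbitrary CNF formula $F$ over variables $x_1,\ldots,x_n$ with $m$ clauses, I will construct a binarized neural net $\bnnfunc$ and a property $\prop$ such that the specification $\spec(\vec{x},\vec{y}) = (\vec{y} = \bnnfunc(\vec{x})) \land \prop(\vec{x},\vec{y})$ satisfies $|\setsat{\spec}| = |\setsat{F}|$. Since $\#\text{CNF}$ is $\#P$-complete and the reduction is count-preserving, $\nqv$ inherits $\#P$-hardness.

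The construction uses two gadgets, each implemented as a single binarized perceptron, and mirrors the CNF's two-level structure. For the disjunction gadget, given a clause $C_j$ of width $k_j$, I wire each literal into the perceptron using weight $+1$ for positive occurrences and $-1$ for negated ones; the bias is chosen so that the linear combination, followed by the step activation of the internal block in Table~\ref{table:bnn-def}, outputs $+1$ exactly when at least one literal is true under the $\{-1,+1\}$ encoding. A symmetric conjunction gadget takes the $m$ clause-outputs and uses all-$+1$ weights with a bias chosen so that the output is $+1$ exactly when all clause-outputs are $+1$. Composing the $m$ disjunction perceptrons (one internal layer) with the conjunction perceptron (the output block) yields a BNN $\bnnfunc$ with $\bnnfunc(\vec{x}) = 1$ iff $\vec{x} \models F$. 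The property is simply $\prop(\vec{x},\vec{y}) \doteq (\vec{y} = 1)$.

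To confirm that the reduction is parsimonious, note that $\bnnfunc$ is a deterministic function of $\vec{x}$: both the hidden layer output and the final $\vec{y}$ are uniquely determined by $\vec{x}$. Consequently, the map $\tau \mapsto (\tau, \bnnfunc(\tau))$ is a bijection between $\setsat{F}$ and $\setsat{\spec}$, giving $|\setsat{\spec}| = |\setsat{F}|$. The reduction is clearly polynomial-time: one perceptron per clause plus one output perceptron, with weights and biases of size linear in the clause widths.

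The main technical obstacle is tuning the bias (and, to match the exact block definition in Table~\ref{table:bnn-def}, possibly neutralizing the batch normalization by choosing $\alpha = \sigma$, $\gamma = \mu = 0$) so that the step activation implements Boolean OR and AND exactly under the $\pm 1$ encoding. This amounts to a straightforward case analysis on the minimum value of the weighted sum over satisfying versus falsifying assignments of each clause; the gadgets depicted in Figures~\ref{fig:or_gadget} and~\ref{fig:and_gadget} realize precisely this. Once the gadgets are verified, the bijection argument above completes the proof.
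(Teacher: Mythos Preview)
Your proposal is correct and follows essentially the same approach as the paper: a parsimonious reduction from $\#\text{CNF}$ using one disjunction-gadget perceptron per clause (with $\pm 1$ weights according to literal polarity) feeding into a single conjunction-gadget perceptron, with the property $\prop \doteq (\vec{y}=1)$. Your write-up is arguably a bit cleaner---you use one BNN input per propositional variable rather than one per literal occurrence, and you make the bijection argument for count preservation explicit---but the construction and reasoning are the same as the paper's.
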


\begin{proof}
  We proceed by constructing a mapping between the propositional variables of
  the formula $F$ and the inputs of the BNN. We represent the logical formula as
  a logical circuit with the gates AND, OR, NOT corresponding to $\land, \lor,
  \neg$.
  In the following, we show that for each of the gates there exist an equivalent
  representation as a perceptron.
  For the OR gate we construct an equivalent perceptron, i.e., for every clause
  $C_i$ of the formula $F$, we construct a perceptron. The perceptron is
  activated only if the inputs correspond to a satisfying assignment to the
  formula $F$. Similarly, we show a construction for the AND gate. Thus, we
  construct a BNN that composes these gates such that it can represent the
  logical formula exactly.

  Let $F$ be a CNF formula $F = C_1 \land C_2 \land \ldots C_n$ over the
  propositional variables $\mathcal{PROP} = \{p_1, p_2, \ldots p_k\}$. We denote
  the literals appearing in clause $C_i$ as $l_{ij}$, $j=1,..m$, i.e., $C_i =
  l_{i1} \lor l_{i2} \ldots \lor l_{im}$.
  Let $\tau:\mathcal{PROP}
  \rightarrow \{0, 1\}$ be an assignment for $F$. We say $F$ is
  satisfiable if there exists an assignment $\tau$ such that $\tau(F) = 1$.
  The binarized neural net $f$ has inputs $\vec{x}$ and one output
  $y$, $y=N(\vec{x})$, and $f:\{-1, 1\}^{m\cdot n} \rightarrow \{0, 1\}$. This
  can easily we extended to multi-class output.

  We first map the propositional variables $p_i \in \mathcal{PROP}$ to variables
  in the binary domain $\{-1, 1\}$. For every clause $C_i$, for every literal
  $l_{ij} \in \{0, 1\}$ there is a corresponding input to the neural net $x_{ij}
  \in \{-1, 1\}$: $l_{ij} \Leftrightarrow x_{ij} = 1 \land \overline{l_{ij}}
  \Leftrightarrow x_{ij} = -1$. For each input variable $x_{ij}$ the weight of
  the neuron connection is $1$ if the propositional variable $l_{ij}$ appears as
  a positive literal in the $C_i$ clause and $-1$ if it appears as a negative
  literal $\overline{l_{ij}}$ in $C_i$.

  For every clause $C_i$ appearing the formula $\psi$, we construct a
  \textit{disjunction gadget}, a perceptron with an equivalent function as the
  OR gate. Given $m$ inputs $x_{i1}, x_{i2}, \ldots x_{im} \in \{-1, 1\}$, the
  disjunction gadget outputs a node $q_i$ that is 1 only if $t_i \geq 0$,
  otherwise the output of $q_i$ is -1. The intermediate variable $t_i =
  \sum_{j=1}^{m} w_j \cdot x_{ij} + m$. The output $q_i$ is 1 only if at least
  one literal is true, i.e., not all $w_j \cdot x_{ij}$ terms evaluate to -1.
  Notice that we only need $m + 2$ neurons for each clause $C_i$ with $m$
  literals. 

  We next introduce the conjunction gadget which, given $n$ inputs $q_1, \ldots,
  q_n \in \{-1, 1\}$ outputs a node $y$ that is $1$ only if $q_1 + q_2 + \ldots +
  q_n \geq n$. The intermediate result $t'=\sum_{i=1}^{n} w_i \cdot q_i - n$
  over which we apply the \texttt{sign} activation function. The output of this
  conjunction is $y = \sum_{i=1}^{n} w_i \cdot q_i \geq n$ which is 1 only if
  all of the variables $y_i$ are 1, i.e., if all the clauses are satisfied.

  If the output of $f$ is $1$ the formula $F$ is SAT, otherwise it is UNSAT.
  For every satisfying assignment $\tau$ for the formula $F$, there exists an
  accepting output $y$ for the binarized neural net, i.e., $f(\tau(\vec{x})) =
  \tau(\vec{y})$. Hence, if there exists a procedure \#SAT($F$) that accepts
  formula $\psi$ and outputs a number $r$ which is the number of satisfying
  assignments, it will also compute the number of inputs for which the output of
  the BNN is $1$. Specifically, we can construct a quantitative verifier for the
  neural net $f$ and a specification $\spec(\vec{x}, y)= (y = N(\vec{x}))
  \land y = 1$ using \#SAT($\psi$).

  \paragraph{Reduction is polynomial.} The size of the formula $\psi$ is the
  size of the input $\vec{x}$ to the neural net, i.e., $m \cdot n$. The neural
  net has $n + 1$ perceptrons ($n$ for each disjunction gadget and one for the
  conjunction gadget).

\end{proof}

\end{document}